\documentclass[a4paper,UKenglish]{lipics-v2019}
\usepackage{microtype}%

\bibliographystyle{plainurl}%

\usepackage[utf8]{inputenc}

\usepackage{xspace}
\usepackage{color}
\usepackage[usenames,dvipsnames,svgnames,table]{xcolor}
\usepackage{tikz}
\usetikzlibrary{decorations.markings}

\usepackage{mathrsfs}
\usepackage{amsmath, amsthm, amssymb,latexsym}
\usepackage{thm-restate}

\usepackage{url}
\usepackage{multirow}

\usepackage{float}
\newfloat{Algorithm}{t}{lop}

\usepackage[noend]{algpseudocode}   %

\theoremstyle{plain}

\newtheorem{observation}{Observation}

\newcommand\Z{\ensuremath{\mathbb{Z}}} %

\newcommand{\tin}[1]{^{(#1)}}
\newcommand{\sm}{{\mbox{\footnotesize small}}}
\newcommand{\I}{\mathcal{I}}
\newcommand{\A}{\mathcal{A}}

\newcommand{\bF}{\mathbf{f}}
\renewcommand{\S}{\mathcal{S}}

\newcommand{\network}{\mathcal{N}}
\newcommand{\networknt}{\mathcal{N}_{n,t}}
\newcommand{\frag}{fragile complexity\xspace}
\newcommand{\frags}{fragile complexities\xspace}

\newcommand{\p}{\mathbf{Pr}}
\newcommand{\pr}[1]{\ensuremath{\mathbf{Pr}\left[#1\right]}}

\newcommand{\work}{work\xspace}
\newcommand{\sorting}{\ensuremath{(n)}-sorting\xspace}
\newcommand{\select}{{\sc Selection}\xspace}
\newcommand{\selection}[1]{\ensuremath{(n,#1)}-selection\xspace}
\newcommand{\partition}[1]{\ensuremath{(n,#1)}-partition\xspace}
\newcommand{\rank}[1]{\ensuremath{rank(#1)}}
\newcommand{\rankx}[1]{\ensuremath{rank_X(#1)}}
\newcommand{\f}[1]{\ensuremath{f({#1})}}
\newcommand{\w}[1]{\ensuremath{w({#1})}}

\newcommand{\sig}{\ensuremath{\sigma}}
\newcommand{\E}[1]{\ensuremath{\mathbb{E}\left[#1\right]}}
\newcommand{\Oh}{\ensuremath{\mathcal{O}}}

\newcommand{\markR}{\ensuremath{\mathcal{R}}\xspace}
\newcommand{\markL}{\ensuremath{\mathcal{L}}\xspace}

\newcommand{\fmed}{\ensuremath{f_\text{med}}}
\newcommand{\fmin}{\ensuremath{f_\text{min}}}
\newcommand{\frem}{\ensuremath{f_\text{rem}}}

\newcommand{\randmedname}{RMedian}
\newcommand{\randmed}{\textsc{\randmedname}}

\newcommand{\ceil}[1]{{\left\lceil{#1}\right\rceil}}
\newcommand{\floor}[1]{{\left\lfloor{#1}\right\rfloor}}

\let\emptyset\varnothing

\DeclareMathOperator{\polylog}{polylog}

\newif\ifComments
\Commentstrue
\ifComments
\newcommand{\peyman}[1]{{\color{purple}\#\#{(P:)\footnotesize{ #1 }}\#\#}}
\newcommand{\irina}[1]{{\color{purple}\#\#{(I:)\footnotesize{ #1 }}\#\#}}
\newcommand{\manuel}[1]{{\color{purple}\#\#{(M:)\footnotesize{ #1 }}\#\#}}
\newcommand{\david}[1]{{\color{purple}\#\#{(D:)\footnotesize{ #1 }}\#\#}}
\newcommand{\riko}[1]{{\color{purple}\#\#{(R:)\footnotesize{ #1 }}\#\#}}
\newcommand{\nodari}[1]{{\color{purple}\#\#{(N:)\footnotesize{ #1 }}\#\#}}
\newcommand{\uli}[1]{{\color{purple}\#\#{(U:)\footnotesize{ #1 }}\#\#}}
\else
\newcommand{\peyman}[1]{}
\newcommand{\irina}[1]{}
\newcommand{\manuel}[1]{}
\newcommand{\david}[1]{}
\newcommand{\riko}[1]{}
\newcommand{\nodari}[1]{}
\newcommand{\uli}[1]{}
\fi

\newcommand{\ignore}[1]{}

\title{\Large Fragile Complexity of Comparison-Based Algorithms\footnote{%
This material is based upon work performed while attending AlgoPARC Workshop on Parallel Algorithms and Data Structures at the University of Hawaii at Manoa, in part supported by the National Science Foundation under Grant No. 1745331.\\
This work was also partially supported 
by the Deutsche Forschungsgemeinschaft (DFG) under grants ME~2088/3-2 and ME~2088/4-2,
by the Independent Research Fund Denmark, Natural Science, under grant DFF-7014-00041,
and by the National Science Foundation under Grant No. CCF-1533823.
}}
\titlerunning{\normalsize Fragile Complexity of Comparison-Based Algorithms}

\author{Peyman Afshani}{Aarhus University}{peyman@cs.au.dk}{}{}
\author{Rolf Fagerberg}{University of Southern Denmark}{rolf@imada.sdu.dk}{}{}
\author{David Hammer}{Goethe University Frankfurt and University of Southern Denmark}{hammer@imada.sdu.dk}{}{}
\author{Riko Jacob}{IT University of Copenhagen}{rikj@itu.dk}{}{}
\author{Irina Kostitsyna}{TU Eindhoven}{i.kostitsyna@tue.nl}{}{}
\author{Ulrich Meyer}{Goethe University Frankfurt}{umeyer@ae.cs.uni-frankfurt.de}{}{}
\author{Manuel Penschuck}{Goethe University Frankfurt}{mpenschuck@ae.cs.uni-frankfurt.de}{}{}
\author{Nodari Sitchinava}{University of Hawaii at Manoa}{nodari@hawaii.edu}{}{}
\authorrunning{\normalsize P. Afshani, F. Fagerberg, D. Hammer, R. Jacob, U. Meyer, M. Penschuck, N. Sitchinava}

\Copyright{Peyamn Afshani, Rolf Fagerberg, David Hammer, Riko Jacob, Irina Kostitsyna, Ulrich Meyer, Manuel Penschuck, and Nodari Sitchinava}
\ccsdesc[500]{Theory of computation~Design and analysis of algorithms}
\keywords{Algorithms, comparison based algorithms, lower bounds}

\acknowledgements{We thank Steven Skiena for posing the original problem, and we thank Michael Bender and Pat Morin for helpful discussions.}

\nolinenumbers
\hideLIPIcs

\begin{document}

\maketitle

\begin{abstract}
We initiate a study of algorithms with a focus on the computational
complexity of individual elements, and introduce the \emph{\frag} of
comparison-based algorithms as the maximal number of comparisons any
individual element takes part in. We give a number of upper and lower
bounds on the \frag for fundamental problems, including
\textsc{Minimum},  \textsc{Selection}, \textsc{Sorting}
and \textsc{Heap Construction}. The results include both deterministic
and randomized upper and lower bounds, and demonstrate a separation
between the two settings for a number of problems. The depth of a
comparator network is a straight-forward upper bound on the worst case
fragile complexity of the corresponding fragile algorithm. We prove that
\frag is a different and strictly easier property than the depth of
comparator networks, in the sense that for some problems a \frag equal
to the best network depth can be achieved with less total work and that
with randomization, even a lower \frag is possible.
\end{abstract} 
\section{Introduction}
\label{sec:intro}

Comparison-based algorithms is a classic and fundamental research area
in computer science. Problems studied include minimum, median, sorting,
searching, dictionaries, and priority queues, to name a few, and by now a
huge body of work exists. The cost measure analyzed is almost always the
total number of comparisons needed to solve the problem, either in the
worst case or the expected case. Surprisingly, very little work has
taken the viewpoint of the individual elements, asking the question: \emph{how
many comparisons must each element be subjected to?}

This question not only seems natural and theoretically fundamental, but
is also practically well motivated: in many real world situations,
comparisons involve some amount of destructive impact on the elements
being compared, hence, keeping the maximum number of comparisons for each
individual element low can be important. One example of such a situation
is ranking of any type of consumable objects (wine, beer, food,
produce), where each comparison reduces the available amount of the
objects compared. Here, classical algorithms like \textsc{QuickSort}, which
takes a single object and partitions the whole set with it, may use up
this pivot element long before the algorithm completes. Another
example is sports, where each comparison constitutes a match and takes a
physical toll on the athletes involved. If a comparison scheme subjects
one contestant to many more matches than others, both fairness to
contestants and quality of result are impacted. The selection process
could even contradict its own purpose---what is the use of finding a
national boxing champion to represent a country at the Olympics if the
person is injured in the process? Notice that in both examples above,
quality of elements is difficult to measure objectively by a numerical
value, hence one has to resort to relative ranking operations between
opponents, i.e., comparisons.  The detrimental
impact of comparisons may also be of less directly physical nature, for
instance if it involves a privacy risk for the elements compared, or if
bias in the comparison process grows each time the same element is used.

\begin{definition}
  We say that a comparison-based algorithm $\A$ has {\em \frag}
  $\f{n}$ if each individual input element participates in at most
  $\f{n}$ comparisons.  We also say that $\A$ has {\em \work} $\w{n}$
  if it performs at most $\w{n}$ comparisons in total.  We say that a
  particular element~$e$ has \frag $f_e(n)$ in~$A$ if $e$ participates
  in at most $f_e(n)$ comparisons.
\end{definition}

In this paper, we initiate the study of algorithms'
\frag---comparison-based complexity from the viewpoint of the
individual elements---and present a number of upper and lower bounds
on the \frag for fundamental problems.

\subsection{Previous work}

One body of work relevant to  what we study here is the
study of sorting networks, propelled by the 1968 paper of
Batcher~\cite{Batc68}. In sorting networks, and more generally
comparator networks (see Section~\ref{sec:defs} for a definition), 
the notions of depth and size correspond to fragile complexity and standard
worst case complexity,\footnote{For clarity, in the rest of the paper we
call standard worst case complexity \emph{work}.} respectively, since a
network with depth~$f(n)$ and size~$w(n)$ easily can be converted into a
comparison-based algorithm with fragile complexity~$f(n)$ and
work~$w(n)$. 

Batcher gave sorting networks with $\Oh(\log^2 n)$ depth and
$\Oh(n\log^2 n)$ size, based on clever variants of the \textsc{MergeSort}
paradigm. A number of later constructions achieve the same
bounds~\cite{Dowd-Perl-Rudolph-Saks/89, Parberry/92, IPL::ParkerP1989,
books/garland/Pratt72}, and for a long time it was an open question
whether better results were possible. In the seminal result in 1983, Ajtai,
Koml{\'o}s, and Szemer{\'e}di~\cite{aks-halvers,aks} answered this
in the affirmative by constructing a sorting network of $\Oh(\log n)$
depth and $\Oh(n\log n)$ size. This construction is quite complex and
involves expander graphs~\cite{HooLinWig06, journals/fttcs/Vadhan12},
which can be viewed as objects encoding pseudorandomness, and which have
many powerful applications in computer science and mathematics. The size
of the constant factors in the asymptotic complexity of the AKS sorting
network prevents it from being practical in any sense. It was later
modified by
others~\cite{Chvatal/92,conf/stoc/Goodrich14,Paterson/90,journals/algorithmica/Seiferas09},
but finding a simple, optimal sorting network, in particular one not based
on expander graphs, remains an open problem. 
Comparator networks for other
problems, such as selection and heap construction have also been studied
~\cite{alekseev:selection-69,brodal:heap-98,Jimbo,Pippenger91,
Yao.merging}. In all these problems the size of the network is super-linear.

As comparator networks of depth $f(n)$ and size $w(n)$ lead to
comparison-based algorithms with $f(n)$ fragile
complexity and $w(n)$ work, a natural question is, whether the
two models are equivalent, or if there are problems for which
comparison-based algorithms can achieve either asymptotically lower $f(n)$, or asymptotically lower $w(n)$ for the same $f(n)$.

One could also ask about the relationship between
parallelism and \frag. We note that parallel time in standard parallel
models generally does not seem to capture \frag. For example, 
even in the most restrictive 
exclusive read and exclusive write (EREW)
PRAM model it is possible to create $n$ copies of an element $e$
in $\Oh(\log n)$ time and, thus, compare $e$ to all the other input
elements in $\Oh(\log n)$ time, resulting in $\Oh(\log n)$ parallel time
but $\Omega(n)$ fragile complexity.  
Consequently, it is not clear
whether Richard Cole's celebrated parallel merge sort
algorithm~\cite{Cole} yields a comparison-based algorithm with low
fragile complexity as it copies some elements.

\def\thmref#1{\small~[T~\ref{thm:#1}]}
\def\lmref#1{\small~[Lem~\ref{lem:#1}]}
\def\cororef#1{\small~[Cor~\ref{cor:#1}]}
\def\obsref#1{\small~[Obs~\ref{obs:#1}]}
\begin{table}[h]
  \def\Deterministic{Determ.}	
  \def\Randomized{Rand.}
	
  \begin{center}
    \hspace*{-.5cm}
    \scalebox{0.87}{
    \begin{tabular}{|l|c|c|c|c|}\hline
    \multicolumn{2}{|l|}{\multirow{2}{*}{\textsl{Problem} } } &\multicolumn{2}{c|}{\textsl{Upper}} & \textsl{Lower} \\
    \cline{3-5}
    \multicolumn{2}{|l|}{}  & $f(n)$ & $w(n)$ & $f(n)$ \\\hline\hline
     
    \multirow{7}{*}{\textsc{Minimum}} & \Deterministic & \multirow{2}{*}{$\Oh(\log n)$~\thmref{min-det-fragile-theta}}                                     & \multirow{2}{*}{$\Oh(n)$}                  & \multirow{2}{*}{$\fmin=\Omega(\log n)$\thmref{min-det-fragile-theta}} \\ 
                                                        &                                   (Sec.~\ref{sec:main-minimum})  & & & \\
      \cline{2-5}  & & & & \\[-.37cm]
 & \Randomized &  $\left\langle \Oh(\log_\Delta n)^\dagger, \Oh(\Delta+\log_\Delta n)^\dagger \right\rangle$\thmref{bestmin}
                                    & & $\langle \Omega(\log_{\Delta} n)^\dagger, \Delta \rangle$\thmref{exp-min-lower} \\
    & (Sec.~\ref{sec:main-minimum}) &   $\left\langle \Oh(1)^\dagger, \Oh(n^\varepsilon) \right\rangle$ (setting $\Delta=n^\varepsilon$)  & $\Oh(n)$ & \\ 

    & & $O(\frac{\log n}{\log\log n})^\dagger$\cororef{thetaallsublog} & & $\Omega(\frac{\log n}{\log\log n})^\dagger$\cororef{thetaallsublog} \\
      \cline{3-5}
 & &  $\hspace*{-2ex}\left\langle  O(\log_\Delta n\log\log \Delta)^\ddagger ,\right.\qquad\qquad$  & $\Oh(n)$ & $\fmin=$ \\
 & & $\quad\left.O(\Delta +\log_\Delta n\log\log \Delta)^\ddagger  \right\rangle $\thmref{bestmin} & $\Oh(n)$ & $=\Omega(\log\log n) ^\ddagger$\thmref{minWHPlb}   \\
 \hline 
     \multirow{5}{*}{\textsc{Selection}} & \Deterministic& \multirow{2}{*}{$\Oh(\log n)$\thmref{detMed}} & \multirow{2}{*}{$\Oh(n)$\thmref{detMed}} & \multirow{2}{*}{$\Omega(\log n)$\cororef{detMedLB}}\\
 & (Sec.~\ref{sec:main-selection})      & & & \\\cline{2-5}
    &\Randomized      & $\left\langle \Oh(\log\log n)^\dagger, \Oh\left(\sqrt n\right)^\dagger\right\rangle$\thmref{RandSelectionCombined} & \multirow{2}{*}{$\Oh(n)^{\dagger}$} & \multirow{2}{*}{$\left\langle \Omega(\log_{\Delta} n)^\dagger, \Delta \right\rangle$\thmref{exp-min-lower} }\\
    &            (Sec.~\ref{sec:main-selection})                                          & $\left\langle \Oh\left(\frac{\log n}{\log \log n}\right)^\dagger, 
                                                                     \Oh(\log^2 n)^\dagger \right\rangle$\thmref{RandSelectionCombined}            &                           & \\\hline
     \multirow{2}{*}{\textsc{Merge}} &\Deterministic       & \multirow{2}{*}{$\Oh(\log n)$\thmref{exp_merging}} & \multirow{2}{*}{$\Oh(n)$}                  & \multirow{2}{*}{$\Omega(\log n)$\lmref{merge}} \\
		   & (Sec.~\ref{sec:main-sorting}) & & & \\\hline
     \textsc{Heap}                  & \Deterministic       & \multirow{2}{*}{$\Oh(\log n)$\obsref{heapconstruction}}                                     & \multirow{2}{*}{$\Oh(n)$}                  & \multirow{2}{*}{$\Omega(\log n)$\thmref{min-det-fragile-theta}}\\
     \textsc{Constr.} & (Sec.~\ref{sec:heapconstruction})    & & & \\\hline 
    \end{tabular}}
  \end{center}
  \caption{  Summary of presented results. Notation: $f(n)$ means \frag;  $w(n)$ means work; $\langle f_m(n), f_{rem}(n) \rangle$ means fragile complexity for the selected element (minimum/median) and for the remaining elements, respectively -- except for lower bounds, where it means $\langle$expected for the selected, limit for remaining$\rangle$;  $\dagger$ means holds in expectation; $\ddagger$ means  holds with high probability ($1-1/n$). %
    $\varepsilon > 0$ is an arbitrary constant.
  }
  \label{tab:result-summary}
\end{table}

\subsection{Our contribution}
In this paper we present algorithms and lower bounds for a number of classical problems, summarized in Table~\ref{tab:result-summary}.
In particular, we study finding the \textsc{Minimum} (Section~\ref{sec:main-minimum}), the \textsc{Selection} problem (Section~\ref{sec:main-selection}), 
and \textsc{Sorting} (Section~\ref{sec:main-sorting}).
\subparagraph{Minimum.}
The case of the deterministic algorithms is clear: using an adversary lower bound, we show that the minimum element needs to suffer
$\Omega(\log n)$ comparisons and a tournament tree trivially achieves this bound (Subsection~\ref{sec:min-deterministic}).
The randomized case, however, is much more interesting. 
We obtain a simple algorithm where the probability of the minimum element suffering $k$ comparisons is doubly exponentially low in $k$, roughly $1/2^{2^k}$
(see Subsection~\ref{sec:randAlgMin}). 
As a result, the $\Theta(\log n)$ deterministic \frag can be lowered to $O(1)$ expected or even $O(\log\log n)$ with high probability.
We also show this latter high probability case is lower bounded by $\Omega(\log\log n)$ (Subsection~\ref{sec:min-lower}).
Furthermore, we can achieve a trade-off between the \frag of the minimum element and the other elements.
Here $\Delta=\Delta(n)$ is a parameter we can choose freely that basically upper bounds the \frag of the non-minimal elements.
We can find the minimum with $O(\log_\Delta n)$ expected  \frag while all the other elements suffer $O(\Delta + \log_\Delta n)$ comparisons
(Subsection~\ref{sec:min-lower}). 
Furthermore, this is tight: we show an $\Omega(\log_\Delta n)$ lower bound for the expected \frag of the minimum element where
the maximum \frag of non-minimum elements is at most $\Delta$.

\subparagraph{Selection.} Minimum finding is a special case of the
selection problem where we are interested in finding an element of a
given rank.  As a result, all of our lower bounds apply to this problem
as well.  Regarding upper bounds, the deterministic case is trivial if
we allow for $O(n\log n)$ work (via sorting).  We show that this can be
reduced to $O(n)$ time while keeping the \frag of all the elements at
$O(\log n)$ (Section~\ref{sec:main-selection}).  Once again,
randomization offers a substantial improvement: e.g., we can find the
median in $O(n)$ expected work and with $O(\log\log n)$ expected \frag
while non-median elements suffer $O(\sqrt{n})$ expected comparisons, or
we can find the median in $O(n)$ expected work and with
$O(\log n / \log\log n)$ expected \frag while non-median elements suffer
$O(\log^2{n})$ expected comparisons.
 
\subparagraph{Sorting and other results.}  The deterministic
selection, sorting, and heap construction \frags follow directly from
the classical results in comparator
networks~\cite{aks,brodal:heap-98}.  However, we show a separation
between comparator networks and comparison-based algorithms for the
problem of \textsc{Median} (Section~\ref{sec:main-selection}) and
\textsc{Heap Construction} (Section~\ref{sec:heapconstruction}), in
the sense that depth/\frag of $\Oh(\log n)$ can be achieved in
$\Oh(n)$ work for comparison-based algorithms, but requires
$\Omega(n \log n)$~\cite{alekseev:selection-69} and
$\Omega(n \log\log n)$~\cite{brodal:heap-98} sizes for comparator
networks for the two problems, respectively.
For sorting the two models achieve the same complexities: $\Oh(\log n)$
depth/\frag and $\Oh(n \log n)$ size/work, which are the
optimal bounds in both models due to the  $\Omega(\log n)$
lower bound on \frag for \textsc{Minimum} (Theorem~\ref{thm:min-det-fragile-theta}) and the standard
$\Omega(n \log n)$ lower bound on work for comparison-based sorting. However, it is an
open problem whether these bounds can be achieved by simpler sorting
algorithms than sorting networks, in particular whether expander
graphs are necessary. One intriguing conjecture could be that any
comparison-based sorting algorithm with $\Oh(\log n)$ \frag and
$\Oh(n \log n)$ work implies an expander graph. This would imply
expanders, optimal sorting networks and fragile-optimal comparison-based
sorting algorithms to be equivalent, in the sense that they all encode
the same level of pseudorandomness.

We note that our lower bound of $\Omega(\log^2 n)$ on the \frag of \textsc{MergeSort}
(Theorem~\ref{thm:mergesort-lower}) implies the same lower bound on the depth of any sorting network based
on binary merging, which explains why many of the existing simple
sorting networks have $\Theta(\log^2 n)$ depth. Finally, our analysis of
\textsc{MergeSort} on random inputs (Theorem~\ref{thm:mergesort-whp}) shows a separation between deterministic and randomized
\frag for such algorithms.
In summary, we consider the main contributions of this paper to be: 
\begin{itemize}
	\item the introduction of the model of \frag, which we find intrinsically
	interesting, practically relevant, and surprisingly overlooked
	\item the
	separations between this model and the model of comparator networks
	\item the separations between the deterministic and randomized setting within
	the model
	\item the lower bounds on randomized minimum finding
\end{itemize}
\section{Definitions}
\label{sec:defs}

\begin{figure}
  \begin{center}
    \scalebox{0.9}{%
\begin{tikzpicture}[
    ->-/.style={decoration={markings, mark=at position .98 with {\arrow{latex}}},postaction={decorate}}  
]

  \foreach \y in {1, ..., 8} {
    \node[anchor=east] at (0, -\y em) {$x_\y$};
    \path[draw] (0, -\y em) to (24em, -\y em);
  }
  \newcommand{\comparator}[4]{%
    \def\posx{#3*3em}
    \draw[fill=black] (\posx, -#1 em) circle (0.1em);
    \draw[fill=black] (\posx, -#2 em) circle (0.15em);
    \path[draw, black, ->-, #4] (\posx, -#2 em) to (\posx, -#1 em);
  }

  \comparator{1}{2}{1}{}
  \comparator{3}{4}{1}{}
  \comparator{5}{6}{1}{}
  \comparator{7}{8}{1}{}
  
  \comparator{1}{3}{1.9}{}
  \comparator{2}{4}{2.1}{}
  \comparator{5}{7}{1.9}{}
  \comparator{6}{8}{2.1}{}

  \comparator{2}{3}{3}{}
  \comparator{6}{7}{3}{}

  \comparator{1}{5}{4.7}{}
  \comparator{2}{6}{4.9}{}
  \comparator{3}{7}{5.1}{}
  \comparator{4}{8}{5.3}{}

  \comparator{3}{5}{5.9}{}
  \comparator{4}{6}{6.1}{}

  \comparator{2}{3}{7}{}
  \comparator{4}{5}{7}{}
  \comparator{6}{7}{7}{}
\end{tikzpicture}
}
    \vspace{-1em}
  \end{center}
  
  \caption{
    Batcher's Odd-Even-Mergesort network~\cite{Batc68}: $8$~inputs, depth $f(8){=}6$ and size $w(8){=}19$.
  }
  \label{fig:sorting-network}
\end{figure}

\textbf{Comparator networks.\ }
A comparator network $\network$ is constructed of {\em comparators} each consisting of two inputs and two (ordered) outputs.
The value of the first output is the minimum of the two inputs and the value of the second output is the maximum of the two inputs.
By this definition a comparator network $\network_n$ on $n$ inputs also consists of $n$ outputs.
Figure~\ref{fig:sorting-network} demonstrates a common visualization of comparator networks with values as horizontal wires and comparators represented by vertical arrows between pairs of wires.
Each arrow points to the output that returns the minimum input
value. Inputs are on the left and outputs on the right.
The {\em size} of the comparator network is defined as the number of
comparators in it, while its {\em depth} is defined as the number of
comparators on the longest path from an input to an output.

We note that a comparator network is straightforward to execute as a
comparison-based algorithm by simulating its comparators sequentially
from left to right (see Figure~\ref{fig:sorting-network}), breaking
ties arbitrarily. If the network has depth~$f(n)$ and size~$g(n)$, the
comparison-based algorithm clearly has fragile complexity~$f(n)$ and
work~$g(n)$.

\textbf{Networks for problems.\ }
We define the set of inputs to the comparator network $\network$ by $\I$ and the
outputs by $\network(\I)$.  We use the notation $\network(\I)^{i}$ for
$0\le i\le n-1$, to represent the $i$-th output and $\network(\I)^{i:j}$
for $0 \le i \le j \le n-1$ to represent the ordered subset of the
$i$-th through $j$-th outputs.

An {\em \sorting network} is a comparator network $\network_n$ such that
$\network_n(\I)^t$ carries the $t$-th smallest input value for all~$t$. We say such a network solves the {\em \sorting problem}.

An {\em \selection{t} network} is a comparator network $\network_{n,t}$ such that $\network_{n,t}(\I)^0$ carries the $t$-th smallest input value. We say such a network solves the {\em \selection{t} problem}.

An {\em \partition{t} network} is a comparator network $\network_{n,t}$, such that $\network_{n,t}(\I)^{0:t-1}$ carry the $t$ smallest input values.\footnote{Brodal and Pinotti~\cite{brodal:heap-98} call it an \selection{t} network, but we feel \partition{t} network is a more appropriate name.}  We say such a network solves the {\em \partition{t} problem}.

Clearly, an \selection{t} problem is asymptotically no harder
than \partition{t} problem: let $\network^{\downarrow}_{n,t}(\I)$ denote
an \partition{t} network with all comparators reversed; then
$\network'_{n,t}(\I) =
\network^{\downarrow}_{t,t-1}(\network_{n,t}(\I))$
is an \selection{t} network. However, the converse is not clear: given a
value of the $t$-th smallest element as one of the inputs, it is not
obvious how to construct an \partition{t} network with smaller size or
depth.  In Section~\ref{sec:main-selection} we will show that the two
problems are equivalent: every \selection{t} network also solves
the \partition{t} problem.

\textbf{Rank.\ }
Given a set $X$, the rank of some element $e$ in $X$, denoted by $\rankx{e}$, is equal to the
size of the subset of $X$ containing the elements that are no larger
than $e$. When the set $X$ is clear from the context, we will omit the subscript $X$ and simply write $\rank{e}$.
 
\section{Finding the minimum}\label{sec:main-minimum}
\subsection{Deterministic Algorithms}\label{sec:min-deterministic}
As a starting point, we study deterministic algorithms that find the
minimum among an input of $n$ elements.  Our results here are simple but
they act as interesting points of comparison against the
subsequent non-trivial results on randomized algorithms.

\begin{theorem}\label{thm:min-det-fragile-theta}
  The fragile complexity of finding the minimum of $n$ elements is $\lceil \log n \rceil$.
\end{theorem}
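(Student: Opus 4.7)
The plan is to show matching upper and lower bounds of $\lceil \log n \rceil$ on the fragile complexity of finding the minimum, where the upper bound is given by a standard tournament scheme and the lower bound is proved by an adversary argument based on a weight potential.

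For the upper bound, I would simply use a balanced binary tournament tree with the $n$ input elements at the leaves: at each internal node, compare the two winners of the children's subtrees and promote the smaller one. The tree has depth $\lceil \log n \rceil$, so the root-to-leaf path of any element has at most $\lceil \log n \rceil$ comparisons. Since every element participates only in comparisons along the path from its leaf to the highest ancestor where it is still the winning candidate, every element—including the minimum—participates in at most $\lceil \log n \rceil$ comparisons.

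For the lower bound, I would use an adversary argument based on a weight function. Maintain for each element $e$ the number $w(e)$ of elements currently known (via the algorithm's previous comparisons and transitivity) to be at least $e$; initially $w(e) = 1$ for all $e$. When the algorithm queries a comparison between $a$ and $b$, assume without loss of generality that $w(a) \leq w(b)$, and let the adversary declare $b < a$, i.e., the element with the larger weight wins. Updating the partial order then replaces $w(b)$ by at most $w(a) + w(b) \leq 2 w(b)$, so the winner's weight at most doubles in any comparison. Since for the algorithm to safely declare any element $m$ as the minimum, $m$ must dominate every other element in the resulting partial order, which forces $w(m) \geq n$ at the moment of output, and since $m$ wins every comparison it participates in (otherwise it would be known to be not the minimum), the weight $w(m)$—starting from $1$ and at most doubling per comparison involving $m$—requires at least $\lceil \log n \rceil$ comparisons involving $m$ to reach $n$.

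The main obstacle, and the thing I would be most careful about in writing out the formal proof, is verifying that the adversary's strategy is consistent, i.e., that at every step there is a valid total order on the inputs realizing all of the declared comparison outcomes with $m$ as the minimum. This reduces to the observation that whenever the algorithm queries $a$ vs.\ $b$, these two elements are incomparable in the current partial order (otherwise the algorithm's comparison is redundant and can be answered from the partial order without changing weights), so the adversary is free to declare either direction; and at any time the partial order obtained so far admits a linear extension with the declared winner of each query serving as the smaller element, which in particular allows $m$ to be the minimum in the final extension.
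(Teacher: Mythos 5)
Your upper bound (balanced tournament tree) is correct and is the same as the paper's.

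Your lower bound has a genuine gap. You track $w(m)$, the size of the full transitive up-set of $m$ in the known partial order, and you conclude ``at most doubling per comparison involving $m$ $\Rightarrow$ at least $\lceil\log n\rceil$ comparisons involving $m$.'' This reasoning implicitly requires that $w(m)$ can change \emph{only} when $m$ itself is compared, and that is false. Consider a comparison between $a$ and $b$ in which the adversary declares $a>b$; if $b>m$ was already established, then by transitivity $a>m$ becomes newly known, and so does $x>m$ for every $x$ already known to be above $a$. In other words, the loser $b$ passes its entire newly acquired up-set down to every element in its down-set, including $m$, even though $m$ never appeared in the comparison. Concretely, after the single comparison that establishes $a_1>m$ we have $w(m)=2$, and the algorithm can then repeatedly compare two elements with small, equal up-sets (a tie under your rule) so that the adversary is free to answer $a_2>a_1$, $a_3>a_2$, \dots, driving $w(m)$ up to $n$ while $m$ participates in only one comparison. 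Your adversary as written does not preclude this, so it does not actually force $\lceil\log n\rceil$ comparisons on the eventual minimum. (Your ``winner's weight at most doubles'' observation is true, but it controls $w$ of the \emph{winner of that comparison}, not $w$ of a distant element $m$ below the loser, and taken over all comparisons it would only bound total work, not fragile complexity.)

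The paper avoids this precisely by not using the full up-set. It colors an edge \emph{red} exactly when the comparison is between two current sinks, and tracks $r_m$, the number of elements that reach $m$ along red edges only. A red edge always points into a current sink, and the tail of a red edge instantly loses sink status, so the red edges form in-trees rooted at the sinks; consequently $r_m$ for a sink $m$ changes \emph{only} when $m$ itself takes part in a red comparison. Combined with the tie-breaking rule ``among two sinks, the one with more comparisons wins,'' this gives the clean invariant $r_m\le 2^{d_m}$, and one additionally argues that the red up-sets of the current sinks always cover all $n$ elements, so $r_v=n$ for the unique final sink $v$. Replacing your $w(\cdot)$ with $r(\cdot)$ (or, equivalently, restricting attention to comparisons between elements that are still candidate minima) is the missing idea.
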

\begin{proof}
  The upper bound follows trivially from the application of a balanced tournament tree.
  We thus focus on the lower bound.
  Let $S$ be a set of $n$ elements, and $\A$ be a deterministic comparison-based algorithm that finds the minimum element of $S$. We describe an adversarial strategy for resolving the comparisons made by $\A$ that leads to the lower bound.

  Consider a directed graph $G$ on $n$ nodes corresponding to the elements of $S$.
  With a slight abuse of notation, we use the same names for elements of $S$ and the associated nodes in graph $G$. 
  The edges of $G$ correspond to comparisons made by $\A$, and are either black or red.
  Initially $G$ has no edges. 
  If $\A$ compares two elements, we insert a directed edge between the associated nodes pointing toward the element declared smaller by the adversarial strategy. 
  Algorithm $\A$ correctly finds the minimum element if and only if, upon termination of~$\A$, the resulting graph $G$ has only one sink node.

  Consider the following adversarial strategy to resolve comparisons made by $\A$: if both elements are sinks in~$G$, the element that has already participated in more comparisons is declared smaller; if only one element is a sink in~$G$, this element is declared smaller; and if neither element is a sink in~$G$, the comparison is resolved arbitrarily (while conforming to the existing partial order).

  We color an edge in $G$ red if it corresponds to a comparison between two sinks; otherwise, we color the edge black. For each element $i$, consider its in-degree $d_i$ and the number of nodes $r_i$ in $G$ (incl. $i$ itself) from which $i$ is reachable by a directed path of only red edges.

  We show by induction that $r_i \leq 2^{d_i}$ for all sinks in $G$. 
  Initially, $r_i=1\leq 1=2^{d_i}$ for all~$i$.
  Let algorithm $\A$ compare two elements $i$ and $j$, where $i$ is a sink, and let the adversarial strategy declare $i$ to be smaller than $j$.
  Then, the resulting in-degree of $i$ is $d_i + 1$.
  If the new edge is black, the number of nodes from which $i$ is reachable via red edges does not change, and the inequality holds trivially.
  If the new edge is red, the resulting number of nodes from which $i$ is reachable is 
  $r_i+r_j \leq 2^{d_i}+2^{d_j} \leq 2^{d_i+1}$. 
  Therefore, when $\A$ terminates with the only sink $v$ in $G$, which represents the minimum element, with degree $d_v \ge \lceil\log r_v\rceil$. %
  The theorem follows by observing that a tournament tree is an instance where $r_v = n$.
\end{proof}
Observe that in addition to returning the minimum, the balanced tournament tree can also return the second smallest element, without any increase to the \frag of the minimum. We refer to this deterministic algorithm that returns the smallest and the second smallest element of a set $X$ as \Call{TournamentMinimum}{$X$}.

\begin{corollary}\label{cor:detMedLB}
  For any deterministic algorithm $\A$ that finds the median of~$n$ elements, the \frag of the median element is at least $\lceil \log n \rceil -1$.
\end{corollary}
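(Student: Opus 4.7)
The plan is to reduce \textsc{Minimum} on $m$ elements to \textsc{Median} on $n$ elements by padding with virtual ``small'' elements, and then invoke the lower bound of Theorem~\ref{thm:min-det-fragile-theta}. Concretely, given a deterministic algorithm $\A$ for the median of $n$ elements, I would construct a deterministic algorithm $\A'$ for the minimum of $m = \lfloor n/2 \rfloor + 1$ real elements $X = \{x_1, \dots, x_m\}$. Introduce $n - m = \lceil n/2\rceil - 1$ virtual elements $y_1, \dots, y_{n-m}$ whose values are fixed to be strictly smaller than every $x_i$ (and arbitrarily ordered among themselves). With this assignment the element of rank $\lceil n/2 \rceil$ in the combined $n$-element input is precisely the minimum of $X$, so $\A$ applied to the combined input returns $\min X$.

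Next, I would simulate $\A$ without ever really touching the $y_i$'s: any comparison $\A$ issues that involves at least one virtual element is answered in $O(1)$ time using the pre-assigned ordering, and only comparisons between two elements of $X$ are resolved by an actual comparison on the input. This yields a perfectly valid deterministic comparison-based algorithm $\A'$ on input $X$ that outputs $\min X$.

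The key observation is that a comparison of the minimum element $x^\star = \min X$ in $\A'$ occurs \emph{only} when $\A$ compares the median with another real element; comparisons of the median against a $y_i$ contribute to the median's \frag in $\A$ but not to the \frag of $x^\star$ in $\A'$. Consequently
\[
 f_{x^\star}^{\A'}(m) \;\le\; f_{\text{median}}^{\A}(n).
\]
By Theorem~\ref{thm:min-det-fragile-theta}, the left-hand side is at least $\lceil \log m \rceil = \lceil \log(\lfloor n/2\rfloor + 1) \rceil$, and a short case analysis on parity shows this is always at least $\lceil \log n \rceil - 1$, proving the corollary.

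The main thing to check carefully is that the simulation really is a legitimate execution of $\A$: since the declared comparison results on the $y_i$'s are all consistent with the fixed numeric assignment, any trace the simulator produces is the trace $\A$ would produce on some real input of size $n$, so $\A$'s correctness guarantees that it returns the rank-$\lceil n/2\rceil$ element, i.e. $\min X$. Everything else (the rank arithmetic and the invocation of Theorem~\ref{thm:min-det-fragile-theta}) is routine.
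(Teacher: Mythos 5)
Your proof is correct and takes the same route as the paper: a padding reduction from \textsc{Minimum} to \textsc{Median} (adding dummy elements strictly smaller than the real input so that the median of the padded instance coincides with the minimum of the real instance), followed by an invocation of Theorem~\ref{thm:min-det-fragile-theta}. The paper's proof is a one-line sketch of exactly this argument; your version carefully works out the padding size $\lceil n/2\rceil - 1$, the simulation that dispatches virtual comparisons without cost, and the ceiling arithmetic $\lceil\log(\lfloor n/2\rfloor + 1)\rceil \ge \lceil\log n\rceil - 1$.
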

\begin{proof}
  By a standard padding argument with $n-1$ small elements.
\end{proof}
 \subsection{Randomized Algorithms for Finding the Minimum}\label{sec:randAlgMin}
We now show that finding the minimum is provably easier for randomized
algorithms than for deterministic algorithms. 
We define $\fmin$ as the \frag of the minimum and
$\frem$ as the maximum \frag of the remaining elements. 
For deterministic algorithms we have shown that $\fmin \ge \log n$ regardless of
$\frem$.
This is very different in the randomized setting. 
In particular, we first show that we can achieve $\E{\fmin} = O(1)$ and
$\fmin = O(1) + \log\log n$ whp. (in Theorem~\ref{thm:minWHPlb} we show that this bound is also tight).

\vspace{0.5em}
\noindent\fbox{\scalebox{0.95}{\begin{minipage}{\textwidth}
\begin{algorithmic}[1]
  \Procedure{SampleMinimum}{$X$}   \Comment{Returns the smallest and 2nd smallest element of $X$}
  \If{$|X| \leq 8$} \Return \Call{TournamentMinimum}{$X$}
  \EndIf
  \State Let $A\subset X$ be a uniform random sample of $X$, with $|A|=\ceil{|X|/2}$ \label{cmbMnSmpA}
  \State Let $B\subset A$ be a uniform random sample of $A$, with $|B|=\floor{|X|^{2/3}}$ \label{cmbMnSmpB}\\
  \Comment{The minimum is either in (i) $C\subseteq X\setminus A$, (ii) $D \subseteq A\setminus B$ or (iii) $B$} 
  \State $(b_1,b_2)=$ \Call{SampleMinimum}{$B$} \label{cmbMnRecB} \Comment{the minimum participates only in case (iii)}
  \State Let $D= \{ x \in A \setminus B \mid x < b_2\}$  \label{cmbMnFilterB} \Comment{the minimum is compared once only in case (ii)}
  \State Let $(a'_1,a'_2)=$ \Call{SampleMinimum}{$D$} \label{cmbMnRecA} \Comment{only case (ii)}
  \State Let $(a_1,a_2)=$ \Call{TournamentMinimum}{$a'_1,a'_2,b_1,b_2$} \label{cmbMnBD} \Comment{case (ii) and (iii)}
  \State Let $C= \{ x \in X \setminus A \mid x < a_2\}$ \label{cmbMnFilterA} \Comment{only case (i)}
  \State Let $(c_1,c_2)=$ \Call{TournamentMinimum}{$C$} \label{cmbMnRecC} \Comment{only case (i)}
  \State \Return \Call{TournamentMinimum}{$ a_1,a_2,c_1,c_2$} \label{cmbMnRet} \Comment{always}
\EndProcedure
\end{algorithmic}
\end{minipage}}
}

\vspace{0.5em}

First, we show that this algorithm can actually find the minimum with expected constant
number of comparisons.
Later, we show that the probability that this algorithm performs
$t$ comparisons on the minimum drops roughly \textit{doubly exponentially} on $t$. 

We start with the simple worst-case analysis.
\begin{lemma}\label{lem:log}
  Algorithm \Call{SampleMinimum}{$X$} achieves $\fmin \leq 3\log |X|$ in the worst case.
\end{lemma}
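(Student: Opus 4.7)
My plan is to define $f(n)$ as the worst-case number of comparisons the minimum element of an input $X$ of size $n$ suffers during \Call{SampleMinimum}{$X$}, and prove $f(n) \leq 3 \log n$ by strong induction on $n$. The base case $|X| \leq 8$ is handled directly by \Call{TournamentMinimum}{}, in which every element (including the minimum) participates in at most $\lceil \log |X|\rceil \leq 3$ comparisons.

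For the inductive step, the three sets $X \setminus A$, $A \setminus B$, $B$ partition $X$, so the minimum $m$ lies in exactly one of them, matching cases (i), (ii), (iii) annotated in the pseudocode. I trace $m$'s participation in each: in case (i), $m$ is compared once against $a_2$ when the filter producing $C$ is evaluated, participates in \Call{TournamentMinimum}{$C$} with $|C| \leq \lfloor n/2 \rfloor$, and joins the final 4-element merge; in case (ii), $m$ is compared once against $b_2$ in the filter producing $D$, recurses on $D$ with $|D| \leq \lceil n/2 \rceil$, and joins both 4-element merges; in case (iii), $m$ recurses on $B$ with $|B| = \lfloor n^{2/3}\rfloor$ and joins both merges. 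Summed, these give the recurrence $f(n) \leq \max\{\log(n/2) + \Oh(1),\; f(n/2) + \Oh(1),\; f(n^{2/3}) + \Oh(1)\}$, from which $f(n) = \Oh(\log n)$ is immediate.

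The main obstacle is pinning down the constants to obtain $3 \log n$ rather than a larger multiple. The crucial observation is that both \Call{TournamentMinimum}{} calls at lines \ref{cmbMnBD} and \ref{cmbMnRet} receive inputs that already come as sorted pairs, since $(a'_1,a'_2)$, $(b_1,b_2)$, $(a_1,a_2)$, and $(c_1,c_2)$ are each returned as (min, 2nd min) by their respective producers. Hence these merges can be implemented so that the overall minimum of the four inputs is compared exactly once, against the head of the other sorted pair, rather than twice as in a naive balanced tournament. Under this accounting, case (ii) contributes $1 + f(n/2) + 1 + 1 = f(n/2) + 3$, and the inductive hypothesis $f(n/2) \leq 3\log(n/2) = 3 \log n - 3$ closes the induction; cases (i) and (iii) give the strictly smaller bounds $\log n + 1$ and $2 \log n + 2$ respectively.
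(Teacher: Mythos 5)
Your proof is correct and follows essentially the same route as the paper: the key step in both is observing that the \Call{TournamentMinimum}{} calls at lines~\ref{cmbMnBD} and~\ref{cmbMnRet} receive pre-sorted pairs, so the overall minimum is compared only once there, which yields the same recurrence $f(n) \le \max\{3 + f(n/2),\ 2 + \log n\}$ and hence $f(n) \le 3\log n$. Your accounting in cases (i) and (iii) is in fact marginally tighter than the paper's (you exploit $|C| \le \lfloor n/2\rfloor$ and $|B| = \lfloor n^{2/3}\rfloor$), but the binding case (ii) and the closing of the induction are identical.
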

\begin{proof}
  First, observe that the smallest element in Lines~\ref{cmbMnBD} and~\ref{cmbMnRet} participates in at most one comparison because pairs of elements are already sorted. Then the \frag of the minimum is defined by the maximum of the three cases:
  \begin{enumerate}[(i)]
    \item One comparison each in Lines~\ref{cmbMnFilterA} and \ref{cmbMnRet}, plus (by Theorem~\ref{thm:min-det-fragile-theta}) $\ceil{\log |C|} \le \log|X|$ comparisons in Line~\ref{cmbMnRecC}. 
    \item One comparison each in Lines~\ref{cmbMnFilterB}, \ref{cmbMnBD}, and \ref{cmbMnRet}, plus the recursive call in line~\ref{cmbMnRecA}. 
    \item One comparison each in Lines~\ref{cmbMnRecB}, \ref{cmbMnBD}, and \ref{cmbMnRet}, plus the recursive call in line~\ref{cmbMnRecB}. 
  \end{enumerate}

  The recursive calls in lines~\ref{cmbMnRecA} and~\ref{cmbMnRecB} are on at most $|X|/2$ elements because $B \subset A$, $D \subset A$, and $|A| = \ceil{|X|/2}$. 
  Consequently, the \frag of the minimum is governed by 
\begin{equation*}
T(n) \le \left\{\begin{array}{ll}
		   \max\left\{ 3+T(n/2),  2+ \log n\right\} & \mbox{ if } n > 8  \\
		   3 & \mbox{ if } n \le 8
		\end{array} \right.,
\end{equation*}
which solves to $T(n) \le 3\log n$.
\end{proof}

\begin{lemma}\label{lem:distrC}
  Assume that in Algorithm \Call{SampleMinimum}{}, the minimum~$y$ is in $X\setminus A$, i.e. we are in case~(i).
  Then $\p[|C|=k\mid y \not\in A] \leq \frac{k}{2^{k}}$ for any $k\ge 1$ and $n\geq 7$. 
\end{lemma}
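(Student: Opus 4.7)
The plan is to translate the event $\{|C|=k\}$ into a concrete condition on the random sample $A$, count the realisations to get an exact formula, and then bound it by induction on $k$.

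Enumerate the elements of $X$ in sorted order as $y = x_1 < x_2 < \cdots < x_n$; conditional on $y \notin A$, the set $A$ is uniformly random among the $m$-subsets of $\{x_2,\ldots,x_n\}$, where $m = \lceil n/2\rceil$. A short case analysis on the location of the two smallest elements of $A$ (each one is either in $B$ or in $A\setminus B$) confirms that the pair $(a_1,a_2)$ produced in Line~\ref{cmbMnBD} is exactly the two smallest of $A$: the filter $D = \{x\in A\setminus B : x < b_2\}$ is designed precisely to retain any candidate for the second smallest of $A$ that sits outside $B$, and each of the four resulting sub-cases feeds this second smallest into the final tournament. With this, $|C|$ equals $1$ (contributed by $y$) plus the number of elements among $x_2,\ldots,x_{a_2-1}$ that are not in $A$. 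Since exactly one of these elements lies in $A$ (namely $a_1$), the event $\{|C|=k\}$ is equivalent to ``$x_{k+2}\in A$ and exactly one of $x_2,\ldots,x_{k+1}$ is in $A$.''

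Counting the $m$-subsets of $\{x_2,\ldots,x_n\}$ that realise this event gives
\[
P_k \;:=\; \p[|C|=k\mid y\notin A] \;=\; k\cdot \frac{\binom{n-k-2}{m-2}}{\binom{n-1}{m}}.
\]
To conclude, I would bound $P_k \le k/2^k$ by induction on $k$, exploiting
\[
\frac{P_{k+1}}{P_k} \;=\; \frac{k+1}{k}\cdot \frac{n-k-m}{n-k-2}.
\]
Since $n - 2m \in \{-1,0\}$ for $m=\lceil n/2\rceil$, a short calculation shows $(n-k-m)/(n-k-2)\le 1/2$ as soon as $k \ge 2$, so $P_{k+1}/P_k \le (k+1)/(2k)$ for $k\ge 2$. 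Telescoping then yields $P_k \le P_2\cdot k/2^{k-1}$ for all $k\ge 2$.

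The main obstacle is thus reduced to verifying the two base cases $P_1,P_2 \le 1/2$. Both follow by direct substitution into the formula above: $P_1 = m(m-1)/((n-1)(n-2))$, which is at most $1/2$ for $n \ge 5$, and the analogous computation for $P_2$ reduces the desired inequality to the quadratic $n^2 - 6n + 6 \ge 0$, valid for $n\ge 5$. The lemma's assumption $n \ge 7$ therefore leaves plenty of slack in both base cases.
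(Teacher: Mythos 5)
Your proof is correct, and it shares the paper's combinatorial reduction — same identification of $(a_1,a_2)$ as the two smallest of $A$, same translation of $\{|C|=k\}$ into a condition on which of $x_2,\dots,x_{k+2}$ lie in $A$, and the same exact formula $P_k = k\binom{n-k-2}{m-2}/\binom{n-1}{m}$ with $m=\lceil n/2\rceil$. Where you diverge is in how the bound $P_k \le k/2^k$ is extracted. The paper rewrites $P_k$ as a product of factorial ratios and bounds it directly, splitting into the cases $k=1$ and $k\ge 2$ and using term-by-term inequalities such as $\lceil n/2\rceil \le (n+1)/2$ and $\lfloor n/2\rfloor - i \le (n-2i)/2$, finishing with the observation $\frac{n+1}{n-3}\le 2$ for $n\ge 7$. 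You instead compute the ratio $P_{k+1}/P_k = \frac{k+1}{k}\cdot\frac{n-k-m}{n-k-2}$, observe it is at most $\frac{k+1}{2k}$ once $k\ge 2$ because $n-2m\in\{-1,0\}$, and telescope from the base cases $P_1,P_2\le\frac12$. This buys a cleaner, more modular argument: all the messy inequalities collapse into one inequality on the ratio plus two finite base-case checks, and it actually delivers the bound already for $n\ge 5$. One small gap worth closing: your "quadratic $n^2-6n+6\ge0$" reduction for $P_2$ is the even-$n$ computation; for odd $n$ one gets $P_2 = \frac{n+1}{4(n-2)}$, which reduces to the linear condition $n\ge 5$ rather than the quadratic — innocuous, but the write-up should mention the parity split (exactly as it implicitly does when noting $n-2m\in\{-1,0\}$ for the ratio step).
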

\begin{proof}
 There are $\binom{n-1}{\ceil{n/2}}$ possible events of choosing a random subset $A \subset X$ of size $\ceil{n/2}$ s.t.\ $y \not \in A$. Let us count the number of the events $\{|C|=k \mid y \not \in A\}$, which is 
equivalent to $a_2$, the second smallest element of $A$, being larger than exactly  $k+1$ elements of $X$. %

For simplicity of exposition, consider the elements of $X = \{x_1, \dots, x_n\}$ in sorted order. The minimum $y=x_1 \not \in A$, therefore, $a_1$ (the smallest element of $A$) must be one of the $k$ elements $\{x_2, \dots, x_{k+1}\}$. By the above observation, $a_2 = x_{k+2}$. And the remaining $\ceil{n/2}-2$ elements of $A$ are chosen from among $\{x_{k+3}, \dots, x_n\}$. Therefore,

\begin{equation*}
  \p[|C|=k\mid y \not\in A] = \frac{k \cdot \binom{n-(k+2)}{\ceil{n/2}-2}}{\binom{n-1}{\ceil{n/2}}} 
  = k \cdot \frac{(n-(k+2))!}{(\floor{n/2}-k)! (\ceil{n/2}-2)!} \cdot \frac{(\ceil{n/2})!(\floor{n/2}-1)!}{(n-1)!}
\end{equation*}

Rearranging the terms, we get:

\begin{align*}
  \p[|C|=k\mid y \not\in A] 
  &= k \cdot \frac{(n-(k+2))!}{(n-1)!} \cdot \frac{(\ceil{n/2})!}{(\ceil{n/2}-2)!}\cdot \frac{(\floor{n/2}-1)!}{(\floor{n/2}-k)!} 
\end{align*}
There are two cases to consider:

\begin{align*}
  k = 1: \mbox{\quad\quad} && \p[|C|=k\mid y \not\in A] & = 1 \cdot \frac{1}{(n-1)(n-2)} \cdot  \ceil{n/2}\left(\ceil{n/2}-1\right) \cdot 1 \\
  &&&\le \frac{1}{(n-1)(n-2)} \cdot  \frac{(n+1)}{2}\cdot \frac{(n-1)}{2} \\
  &&&=\frac{n+1}{4 \cdot (n-2)} \le \frac{1}{2} = \frac{k}{2^k} \mbox{ \hspace{4em} for every } n \ge 5.
\end{align*} \\

\begin{align*}
  k \ge 2: \mbox{\quad\quad} && \p[|C|=k\mid y \not\in A] &= k \cdot \frac{1}{\prod_{i=1}^{k+1} (n-i)} \cdot \ceil{n/2}\left(\ceil{n/2}-1\right) \cdot \prod_{i=1}^{k-1} \left(\floor{\frac{n}{2}}-i\right)\\
  &&&\le k \cdot \frac{1}{\prod_{i=1}^{k+1} (n-i)} \cdot \frac{n+1}{2}\cdot \frac{n-1}{2} \cdot \prod_{i=1}^{k-1} \frac{n-2i}{2} \\
  &&&\le \frac{k}{2^{k+1}} \cdot (n+1)(n-1) \cdot \frac{\prod_{i=1}^{k-1} (n-2i)}{\prod_{i=1}^{k+1} (n-i)} \\
  &&&\le		\frac{k}{2^{k+1}} \cdot (n+1)(n-1)\cdot \frac{n-2}{(n-1)(n-2)(n-3)} \\
  &&&= \frac{k}{2^{k+1}} \cdot \frac{n+1}{n-3} \le \frac{k}{2^{k+1}}\cdot 2 = \frac{k}{2^k} \mbox{ \hspace{1em}  for every } n\ge 7. \qedhere
\end{align*}
\end{proof}

\def\cmbMnE{9}
\begin{theorem}\label{thm:expectedConstant}
  Algorithm \Call{SampleMinimum}{} achieves $\E{\fmin} \leq \cmbMnE$.
\end{theorem}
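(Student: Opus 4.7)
The plan is to prove $T(n) := \E{\fmin} \le 9$ by strong induction on $n$. The base case $n \le 8$ is immediate: the algorithm delegates to \textsc{TournamentMinimum}, giving $T(n) \le \lceil \log 8 \rceil = 3$. For the inductive step with $n > 8$, I would condition on which of three mutually exclusive events places the minimum $y$: case~(i), $y \in X \setminus A$, with probability $\lfloor n/2 \rfloor / n$; case~(ii), $y \in A \setminus B$, with probability $(\lceil n/2 \rceil - \lfloor n^{2/3}\rfloor)/n$; and case~(iii), $y \in B$, with probability $\lfloor n^{2/3}\rfloor / n$.

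Walking through the pseudocode line-by-line (and reusing the observation from Lemma~\ref{lem:log} that in a \textsc{TournamentMinimum} on two presorted pairs the overall minimum participates in exactly one comparison), I count the comparisons that touch $y$ in each case. Case~(i) contributes $2 + \lceil \log |C| \rceil$ comparisons (Lines~\ref{cmbMnFilterA}, \ref{cmbMnRecC}, \ref{cmbMnRet}) and, crucially, no recursive \textsc{SampleMinimum} call involves $y$. Case~(ii) contributes $3$ direct comparisons (Lines~\ref{cmbMnFilterB}, \ref{cmbMnBD}, \ref{cmbMnRet}) plus $\E{T(|D|)}$ from the recursion on $D$. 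Case~(iii) contributes $2$ direct comparisons (Lines~\ref{cmbMnBD}, \ref{cmbMnRet}) plus $T(\lfloor n^{2/3}\rfloor)$ from the recursion on $B$. Since $|D| < n$ and $\lfloor n^{2/3}\rfloor < n$, the inductive hypothesis bounds both recursive contributions by $9$. For case~(i) I invoke Lemma~\ref{lem:distrC}: $\p[|C|=k \mid y \not\in A] \le k/2^k$ for $k \ge 1$, so
\[\E{\lceil \log |C| \rceil \mid y \not\in A} \;\le\; c_1 \;:=\; \sum_{k \ge 1} \lceil \log k \rceil \cdot \frac{k}{2^k},\]
a constant I would estimate explicitly by summing the first dozen or so terms and using a geometric tail bound to conclude $c_1 < 3$.

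Assembling the three contributions gives
\[T(n) \;\le\; \tfrac{\lfloor n/2 \rfloor}{n}(2 + c_1) \;+\; \tfrac{\lceil n/2 \rceil - \lfloor n^{2/3}\rfloor}{n}(3 + 9) \;+\; \tfrac{\lfloor n^{2/3}\rfloor}{n}(2 + 9).\]
Rearranging, the target $T(n) \le 9$ is equivalent to $\tfrac{\lfloor n/2\rfloor}{n}(10 - c_1) + \tfrac{\lfloor n^{2/3}\rfloor}{n} \ge 3$. The main obstacle I anticipate is the numerical tightness of this final inequality rather than any conceptual step: the ratio $\lfloor n/2\rfloor/n$ dips as low as $4/9$ (attained at $n=9$), so a loose bound on $c_1$ will not suffice---one needs roughly $c_1 \le 13/4$---and it is precisely the explicit estimate $c_1 < 3$ extracted from Lemma~\ref{lem:distrC} that comfortably closes the gap for every $n > 8$.
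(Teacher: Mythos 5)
Your proof is correct and arrives at the same bound, but it takes a somewhat more laborious route than the paper's. The paper avoids the exact conditional probabilities altogether: it simply observes that case~(i) occurs with probability at least $1/2$, bounds the cost in cases~(ii) and~(iii) uniformly by $9+3=12$, and bounds the cost in case~(i) by $2+4$, where the $4$ comes from the very clean calculation
\[
\sum_{k\ge 1}(k-1)\,k\,2^{-k}=\sum_{k\ge 1}k^2 2^{-k}-\sum_{k\ge 1}k\,2^{-k}=6-2=4.
\]
Here the paper deliberately discards the tournament-tree structure and just uses the trivial fact that any minimum-finding procedure on $k$ elements charges the minimum at most $k-1$ comparisons; the $k\,2^{-k}$ decay from Lemma~\ref{lem:distrC} is strong enough that this loss is harmless. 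Since $6<12$ and $\p[\text{case (i)}]\ge 1/2$, the mixture is at most $\tfrac12\cdot 6+\tfrac12\cdot 12=9$. This sidesteps both the floor/ceiling bookkeeping and the numerical estimation of $c_1=\sum_k\lceil\log k\rceil k\,2^{-k}$. Your sharper per-case accounting (using $\lceil\log k\rceil$, and $11$ rather than $12$ for case~(iii)) buys you a little slack, which is why a bound as loose as $c_1<4$ would already close your inequality; the threshold you quote, $c_1\le 13/4$, is in fact not the binding one --- working out $(10-c_1)\lfloor n/2\rfloor/n+\lfloor n^{2/3}\rfloor/n\ge 3$ shows the constraint tends to $c_1\le 4$ as $n\to\infty$ and is $c_1\le 17/4$ at $n=9$. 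This does not affect your conclusion since you verify $c_1<3$, but it is worth fixing if you keep this route; otherwise, adopting the paper's coarser-but-exact $\sum(k-1)k\,2^{-k}=4$ bound yields a shorter and more self-contained argument.
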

\begin{proof}
  By induction on the size of $X$.
  In the base case $|X|\leq 8$, clearly $\fmin \leq 3$, implying the theorem.
  Now assume that the calls in Line~\ref{cmbMnRecA} and Line~\ref{cmbMnRecB} have the property that $\E{f(b_1)}\leq \cmbMnE$ and $\E{f(a'_1)}\leq \cmbMnE$.
  Both in case (ii) and case (iii), the expected number of comparisons of the minimum is $\leq \cmbMnE + 3$.
  Case (i) happens with probability at least 1/2.
  In this case, the expected number of comparisons is 2 plus the ones from Line~\ref{cmbMnRecC}.
  By Lemma~\ref{lem:distrC} we have $\p[|C|=k\mid \hbox{case (i)}] \leq k2^{-k}$.
  Because \Call{TournamentMinimum}{} (actually any algorithm not repeating the same comparison) uses the minimum at most~$k-1$ times, the expected number of comparisons in Line~\ref{cmbMnRecC} is $\sum_{k=1}^\floor{n/2} (k-1)k2^{-k} \leq \sum_{k=1}^\infty (k-1)k2^{-k} \leq 4$.
  Combining the bounds we get $\E{\fmin} \leq \frac{\cmbMnE +3}{2}+\frac{2+4}{2} = \cmbMnE$.
\end{proof}

Observe that the above proof did not use anything about the sampling of~$B$, and also did not rely on \Call{TournamentMinimum}{}.

\begin{lemma}\label{lem:expSample}
  For $|X| > 2$ and any $\gamma>1$: $\p\left[ |D| \ge \gamma |X|^{1/3} \right] < |X|\exp(-\Theta(\gamma))$
\end{lemma}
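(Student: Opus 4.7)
The key reduction is to identify $|D|$ with the rank of $b_2$ in $A$. Since $B\subseteq A$ and $b_1,b_2$ are the two smallest elements of $B$, the only member of $B$ strictly below $b_2$ is $b_1$, so
$$
|D| \;=\; \bigl|\{x\in A\setminus B : x < b_2\}\bigr| \;=\; \mathrm{rank}_A(b_2)-2.
$$
Writing $n=|X|$, $m=|A|=\lceil n/2\rceil$, $s=|B|=\lfloor n^{2/3}\rfloor$, and $r:=\mathrm{rank}_A(b_2)$, the target event becomes $\{r\ge\gamma n^{1/3}+2\}$. Since $B$ is a uniformly random $s$-subset of $A$, placing $b_1$ at any of the first $j-1$ ranks of $A$, $b_2$ at rank $j$, and the remaining $s-2$ members of $B$ among the last $m-j$ ranks yields
$$
\p[r=j] \;=\; \frac{(j-1)\binom{m-j}{s-2}}{\binom{m}{s}} \;=\; \frac{s(s-1)(j-1)}{m(m-1)}\prod_{i=0}^{j-3}\!\left(1-\tfrac{s-2}{m-2-i}\right).
$$

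My next step is to bound the product via $1-x\le e^{-x}$, noting that each factor satisfies $(s-2)/(m-2-i)\ge (s-2)/(m-2)=\Theta(n^{-1/3})$, so the product is at most $\exp(-\Theta(j/n^{1/3}))$. Combined with $s(s-1)/(m(m-1))=\Theta(n^{-2/3})$, this gives $\p[r=j]\le \Oh(j/n^{2/3})\exp(-\Theta(j/n^{1/3}))$. Summing over $j\ge \lceil\gamma n^{1/3}\rceil+2$: the summand decays geometrically in $j$ throughout the range (because $\gamma>1$ keeps the first term already past the mode), so the tail is dominated by its first term up to a $\Theta(n^{1/3})$ factor from the step size relative to the decay rate, yielding
$$
\p[|D|\ge\gamma n^{1/3}]\le \Oh(\gamma)\exp(-\Theta(\gamma))\le \exp(-\Theta(\gamma)),
$$
where $\gamma>1$ absorbs the linear prefactor into the exponential. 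This is actually stronger than the claimed $|X|\exp(-\Theta(\gamma))$; the weaker form can alternatively be obtained by the crude estimate $\p[r\ge\gamma n^{1/3}+2]\le m\cdot \max_{j}\p[r=j]$, trading the geometric summation for a union bound over $m\le |X|$ values of $j$.

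The main technical hurdle is the bookkeeping in the binomial-ratio calculation. A cleaner alternative that bypasses it is to observe that $Y:=|B\cap\{\text{first }\lceil\gamma n^{1/3}\rceil+1\text{ elements of }A\}|$ follows a hypergeometric distribution with mean $\mu\ge 2\gamma$, that the event $\{|D|\ge\gamma n^{1/3}\}$ coincides with $\{Y\le 1\}$, and that a standard Chernoff lower-tail bound for hypergeometric random variables immediately gives $\p[Y\le 1]\le \exp(-\Omega(\mu))=\exp(-\Theta(\gamma))$.
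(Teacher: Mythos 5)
Your proof is correct, and it takes a genuinely different route from the paper's. The paper views the choice of~$B$ as drawing without replacement from an urn with $|B|$ blue and $|A|{-}|B|$ red marbles, partitions the event $\{|D|\ge t\}$ into $t{+}1$ cases (position of the single possible blue marble among the first $t$ draws), upper-bounds each case by $(1-\frac{b-1}{a})^{t-1}=\exp(-\Theta(\gamma))$, and then applies a crude $t{+}1<|X|$ union bound --- which is exactly where the $|X|$ prefactor in the statement comes from. You instead identify $|D|=\mathrm{rank}_A(b_2)-2$ and compute the exact distribution of this rank (a hypergeometric-type mass function), then sum the tail; this careful summation lets the $n^{1/3}$ step-size factor cancel against the $n^{-2/3}$ prefactor, yielding $\exp(-\Theta(\gamma))$ with no $|X|$ factor --- strictly stronger than what the paper claims (and sufficient for all downstream uses in Theorem~\ref{thm:loglogWHP}). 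Your closing hypergeometric Chernoff alternative is arguably the cleanest formulation of the same idea and sidesteps the tail-summation bookkeeping entirely. One small caveat: your justification for treating the tail sum as geometrically dominated by its first term (``$\gamma>1$ keeps the first term past the mode'') quietly relies on the implicit constant in $\exp(-\Theta(j/n^{1/3}))$ being at least~$1$; it is (it is about~$2$ here, since $(s{-}2)/(m{-}2)\approx 2n^{-1/3}$), but the reasoning would be more robust if you either noted this or instead bounded the tail via the closed form $\sum_{j\ge t}jq^j=\frac{q^t(t(1-q)+q)}{(1-q)^2}$, which gives $\Oh(\gamma\,e^{-\Theta(\gamma)})$ directly for any positive rate. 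This is a stylistic quibble, not a gap.
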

\begin{proof}
  Let $n=|X|$, $a = |A| = \ceil{n/2}$ and $b=|B|=\floor{n^{2/3}}$.
  The construction of the set $B$ can be viewed as the following experiment. Consider drawing without replacement from an urn with $b$ blue and $a-b$ red marbles. The $i$-th smallest element of $A$ is chosen into $B$ iff the $i$-th draw from the urn results in a blue marble. 
  Then $|D| \ge \gamma |X|^{1/3} = \gamma n^{1/3}$ implies that this experiment results in at most one blue marble among the first $t = \gamma n^{1/3}$ draws.
  There are precisely $t+1$ elementary events that make up the condition $|D|\ge t$, namely that the $i$-th draw is a blue marble, and where $i=0$ stands for the event ``all $t$ marbles are red''.
  Let us denote the probabilities of these elementary events as $p_i$.

  Observe that each $p_i$ can be expressed as a product of $t$ factors, at least $t-1$ of which stand for drawing a red marble, each upper bounded by $1-\frac{b-1}{a}$.
  The remaining factor stands for drawing the first blue marble (from the urn with $a-i$ marbles, $b$ of which are blue), or another red marble.
  In any case we can bound
  \[
     p_i \le \left(1-  {\frac {b-1}a}\right)^{t-1} \le \left(1-  {\frac {b-1}a}\right)^{\gamma n^{1/3}-1} = \exp\left(-\Theta\left(\frac{b\gamma n^{1/3}}{a}\right)\right).
  \]
  
  Summing the $t+1$ terms, and observing $t+1<n$ if the event can happen at all, we get
  \begin{equation*}
    \p[|D| \ge \gamma |X|^{1/3}] <  n\cdot \exp\left(-\Theta\left(\frac{\gamma n^{1/3}n^{2/3}}{n/2}\right)\right) = n\cdot \exp\left( -\Theta(\gamma) \right). \qedhere
  \end{equation*}
\end{proof}

\begin{theorem}\label{thm:loglogWHP}
  There is a positive constant $c$, such that for any parameter $t \ge c$, 
  the minimum in the Algorithm \Call{SampleMinimum}{$X$} participates in at most $O(t + \log \log |X|)$ comparisons 
  with probability at least $1-\exp(-2^t)2\log\log |X|$.
\end{theorem}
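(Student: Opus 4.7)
I would prove this by tracing the recursion path taken by the minimum element $y$ through the recursive calls of \Call{SampleMinimum}{}. At each level $\ell$, exactly one of the three cases occurs, classified by where $y$ lies. In case (ii) or (iii), $y$ participates in at most three comparisons at that level (one in the filter on Line~\ref{cmbMnFilterB} or in the pair-merge of Line~\ref{cmbMnBD}/Line~\ref{cmbMnRet}) and the recursion continues on either $D_\ell$ or $B_\ell$. In case (i), the recursion on $y$ terminates: $y$ suffers one comparison in Line~\ref{cmbMnFilterA} and then $\lceil\log |C_\ell|\rceil$ comparisons inside \Call{TournamentMinimum}{$C_\ell$}. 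The base case $|X|\le 8$ contributes $O(1)$. Summing, $f_{\min}\le 3L + O(\log |C_{\ell^\ast}|)$, where $L$ is the number of recursive invocations and $\ell^\ast$ (if it exists) is the unique case-(i) level.

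The plan then splits into two concentration arguments. First, I would bound $L$ by $O(\log\log n)$ with high probability. Writing $n_\ell$ for the size of the input at level $\ell$, case (iii) gives $n_{\ell+1}=\lfloor n_\ell^{2/3}\rfloor$ deterministically, while case (ii) gives $n_{\ell+1}=|D_\ell|$. Applying Lemma~\ref{lem:expSample} with $\gamma=2^t$ at each level shows $|D_\ell|\le 2^t\, n_\ell^{1/3}\le n_\ell^{2/3}$ with probability at least $1-n_\ell\exp(-\Theta(2^t))$, as long as $n_\ell\ge 2^{3t}$. Under the intersection of these ``good'' events, $n_\ell \le n^{(2/3)^{\ell-1}}$, so after $L=O(\log\log n)$ levels we reach $n_L\le 2^{3t}$ or the base case, from which the deterministic bound of Lemma~\ref{lem:log} contributes only $O(\log n_L)=O(t)$ further comparisons.

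Second, for the terminal case-(i) level I would use Lemma~\ref{lem:distrC}: $\Pr[|C_{\ell^\ast}|\ge k\mid \text{case (i)}]\le \sum_{j\ge k} j\,2^{-j} = O(k/2^{k})$. Plugging $k=2^t$ yields a failure probability $\exp(-\Theta(2^t))$, and on the complement $\log|C_{\ell^\ast}|\le t$, so the case-(i) contribution is $O(t)$. Combining with the bound $3L$ from the non-terminal levels gives $f_{\min}=O(t+\log\log n)$.

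Finally, a union bound over the at most $L=O(\log\log n)$ levels yields the failure probability $O(\log\log n)\cdot \exp(-\Theta(2^t))$, which matches the claimed $2\log\log n\cdot \exp(-2^t)$ after tuning constants. The main obstacle I foresee is the $|X|$ factor in Lemma~\ref{lem:expSample}: at the first few levels $n_\ell$ can be almost $n$, so to compress $n_\ell\exp(-\Theta(2^t))$ into $\exp(-2^t)$ one must either pick the constant $c$ inside $\Theta(\cdot)$ large enough that the $\ln n_\ell$ term is absorbed (requiring the threshold $t\ge c$ to hide a constant factor), or apply Lemma~\ref{lem:expSample} per level with a slightly inflated $\gamma_\ell = 2^t + \log n_\ell$ and verify that $\gamma_\ell n_\ell^{1/3}\le n_\ell^{2/3}$ still holds for the $n_\ell$ under consideration; both amount to bookkeeping once the recursive structure above is fixed.
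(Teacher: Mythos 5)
Your decomposition is the same as the paper's: trace the minimum's unique recursion path, split into the terminal case-(i) level (handled by Lemma~\ref{lem:distrC}) and the transit levels (handled by Lemma~\ref{lem:expSample}), observe that after $O(\log\log n)$ size reductions of the form $n_\ell \mapsto n_\ell^{2/3}$ the problem is small enough for Lemma~\ref{lem:log} to give $O(t)$, and union bound.

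The one place you deviate is the instantiation of Lemma~\ref{lem:expSample}, and you correctly flag it yourself. With $\gamma = 2^t$ the failure probability is $n_\ell\exp(-\Theta(2^t))$, and neither of your proposed repairs is clean. Your first repair (absorb $\ln n_\ell$ by choosing the threshold constant $c$ inside $t\ge c$) would require $\ln n_\ell \le (c_1-1)2^t$; since $n_\ell$ can be close to $n$ and $t$ is allowed to be an absolute constant, this forces $t=\Omega(\log\log n)$ and weakens the theorem. Your second repair ($\gamma_\ell = 2^t + \log n_\ell$) gives $n_\ell\exp(-\Theta(2^t+\log n_\ell)) = n_\ell^{\,1 - c_1/\ln 2}\exp(-c_1 2^t)$, which only collapses to $\exp(-\Theta(2^t))$ if the hidden constant $c_1$ in Lemma~\ref{lem:expSample} exceeds $\ln 2$; that happens to be true if you unpack the lemma's proof, but it defeats the point of stating the lemma with $\Theta(\cdot)$. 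The paper's choice is $\gamma = n_\ell^{1/3}$: this still gives subproblem size $\gamma n_\ell^{1/3}=n_\ell^{2/3}$, and the failure probability becomes $n_\ell\exp(-\Theta(n_\ell^{1/3}))$, where the prefactor is swallowed because $\ln n_\ell = o(n_\ell^{1/3})$ with no assumption on the hidden constant; using $n_\ell > 2^{4t}$ (so $n_\ell^{1/3} > 2^t$) and $t\ge c$ then bounds it by $\exp(-2^t)$ directly. Adopting that $\gamma$ closes your remaining gap and otherwise your proof is the paper's.
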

\begin{proof}
  Let  $n = |X|$ and $y$ be the minimum element.
  In each recursion step, we have one of three cases:
  (i) $y\in C\subseteq X\setminus A$, (ii) $y\in D \subseteq A\setminus B$ or (iii) $y\in B$.
  Since the three sets are disjoint, the minimum always participates in at most one recursive call.
  Tracing only the recursive calls that include the minimum, we 
  use the superscript $X\tin{i}, A\tin{i}, B\tin{i}, C\tin{i}$, and $D\tin{i}$ to denote these sets 
  at depth $i$ of the recursion. 
  
  Let $h$ be the first recursive level when $y \in C\tin{h}$, i.e., $y \not \in A\tin{h}$.
  It follows that $y$ will not be involved in the future recursive calls because it is in a single call to \Call{TournamentMinimum}{}.
  Thus, at this level of recursion, the number of comparisons that $y$ will accumulate is 
  equal to $O(1)+\log|C\tin{h}|$.
  To bound this quantity, let $k = 4\cdot 2^t$. Then, by Lemma~\ref{lem:distrC}, $\p[|C\tin{h}|> k]\leq k2^{-k} = 4 \cdot 2^t \cdot 2^{-4\cdot 2^t} = 4 \cdot 2^t \cdot 4^{-2^t} \cdot 4^{-2^t}$. Since $4x4^{-x} \le 1$ for any $x\ge 1$,  $\p[|C\tin{h}|> k]\leq 4^{-2^t}$ for any $t \ge 0$. I.e., the number of comparisons that $y$ participates in at level $h$ is at most $O(1) + \log k = O(1) + t$ with probability at least $1-4^{-2^t} \ge 1-\exp(-2^t)$.

  Thus, it remains to bound the number of comparisons involving $y$ at the recursive levels $i \in [1, h-1]$.
  In each of these recursive levels $y \not \in C\tin{i}$, which only leaves the two cases: (ii) $y \in D\tin{i} \subseteq A\tin{i}\setminus B\tin{i}$
  and (iii) $y \in B\tin{i}$.
  The element $y$ is involved in at most $O(1)$ comparisons in lines \ref{cmbMnFilterB}, \ref{cmbMnBD} and \ref{cmbMnRet}.
  The two remaining lines of the algorithm are lines \ref{cmbMnRecB} and \ref{cmbMnRecA} which are the recursive calls. 
  We differentiate two types of recursive calls: 
  \begin{itemize}
      \item Type 1: $|X\tin{i}| \le 2^{4t}$. 
          In this case, by Lemma~\ref{lem:log}, the algorithm will perform 
          $O(t)$ comparisons at the recursive level $i$, as well as any subsequent recursive levels. 

      \item Type 2: $|X\tin{i}| > 2^{4t}$. 
          In this case, by Lemma~\ref{lem:expSample} on the set $X\tin{i}$ and $\gamma = |X\tin{i}|^{1/3}$ we get:  
          \[
              \p[|D\tin{i}| \ge \gamma |X\tin{i}|^{1/3}]  < |X\tin{i}| \exp\left(-\Theta\left(|X\tin{i}|^{1/3}\right)\right) <\exp\left(-\Theta\left(|X\tin{i}|^{1/3}\right)\right)
          \]

	  Note that since $|X\tin{i}|^{1/3} > 2^{t}$, by the definition of the $\Theta$-notation, there exists a positive constant $c$, such that $\exp\left(-\Theta\left(|X\tin{i}|^{1/3}\right)\right) < \exp(-2^t)$.
          Thus, it follows that with probability $1-\exp(-2^t)$, we will recurse on a subproblem of
          size at most $\gamma|X\tin{i}|^{1/3} \le |X\tin{i}|^{2/3}$.
          Let $G_i$ be this (good) event, and thus $\p[G_i] \ge  1-\exp(-2^t)$.
  \end{itemize}
  Observe that the maximum number of times we can have good events of type 2 is very limited.
  With every such good event, the size of the subproblem decreases significantly and thus eventually we will arrive at 
   a recursive call of type 1. 
  Let $j$ be this maximum number of ``good'' recursive levels of type 2.
  The problem size at the  $j$-th such recursive level is
  at most $n^{\left( 2/3 \right)^{j-1}}$ and we must have that 
  $n^{\left( 2/3 \right)^{j-1}} > 2^{4t}$ which reveals that we must have 
  $j = O\left( \log\log n \right)$.

  We are now almost done and we just need to use a union bound.
  Let $G$ be the event that at the recursive level $h$, we perform at most $O(1) + t$ comparisons,
  and all the recursive levels of type 2 are good.
  $G$ is the conjunction of at most $j+1$ events and as we have shown, each such event holds with probability
  at least $1-\exp(-2^t)$.
  Thus, it follows that $G$ happens with probability $1-(j+1)\exp(-2^t) > 1 - 2\log\log n\exp(-2^t)$.
  Furthermore, our arguments show that if $G$ happens, then the minimum will only 
  particpate in $O(t + j) = O\left( t+ \log\log n \right)$ comparisons.
 \end{proof}

The major strengths of the above algorithm is the doubly exponential drop in probability of 
comparing the minimum with too many elements. 
Based on it, we can design another simple algorithm to provide a smooth 
trade-off between $\fmin$ and $\frem$.
Let $2 \le \Delta \le n$ be an integral parameter.
We will design an algorithm that achieves
$\E{\fmin} = O( \log_\Delta n  )$ and 
$\fmin= O( \log_\Delta n \cdot \log\log \Delta )$ whp,
and $\frem =\Delta + O(\log_\Delta n \cdot \log\log \Delta)$ whp.
For simplicity we assume $n$ is a power of $\Delta$.
We build a fixed tournament tree $T$ of degree $\Delta$ and of height $\log_\Delta n$ on $X$.
For a node $v \in T$, let $X(v)$ be the set of values in the subtree rooted at $v$.
The following code computes $m(v)$, the minimum value of $X(v)$, for every node $v$.

\vspace{0.5em}
\noindent{\fbox{\begin{minipage}{0.98\textwidth}
\begin{algorithmic}[1]
  \Procedure{TreeMinimum$_\Delta$}{$X$} 
  \State For every leaf $v$, set $m(v)$ equal to the single element of $X(v)$.
  \State For every internal node $v$ with $\Delta$ children $u_1, \dots, u_\Delta$ where the values
  $m(u_1), \dots, m(u_\Delta)$ are known, compute $m(v)$ using {\sc SimpleMinimum} algorithm on input
  $\left\{ m(u_1), \dots, m(u_\Delta) \right\}$.
    \State Repeat the above step until the minimum of $X$ is computed.
\EndProcedure
\end{algorithmic}
\end{minipage}}}
\vspace{0.5em}

\noindent The correctness of {\sc TreeMinimum$_\Delta$} is trivial.
So it remains to analyze its \frag.

\begin{theorem}\label{thm:bestmin}
  In {\sc TreeMinimum$_\Delta$}, $\E{\fmin} = O(\log_\Delta n)$ and $\E{\frem}= \Delta + O(\log_\Delta n)$.
  Furthermore, with high probability, $\fmin = O\left(\frac{\log n\log\log \Delta}{\log \Delta}\right)$ and 
  $\frem = O\left(\Delta + \frac{\log n\log\log \Delta}{\log \Delta}\right)$.
\end{theorem}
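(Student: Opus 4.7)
The algorithm builds a complete $\Delta$-ary tree of height $L=\log_\Delta n$ with the input elements as leaves, and invokes \textsc{SampleMinimum} bottom-up at each internal node on its $\Delta$ subtree-minima. The true minimum $y$ is the winner of every \textsc{SampleMinimum} call along a single root-to-leaf path $P_y$; any other element $x$ participates in calls along an initial segment of its leaf-to-root path, winning all of them until it finally ``loses'' at some node of depth $\ell(x)$, after which it never appears in any comparison again.

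The plan is to charge $\fmin$ and $\frem$ level by level. For $\E{\fmin}$, I apply Theorem~\ref{thm:expectedConstant} to each of the $L$ \textsc{SampleMinimum} calls on $P_y$: since $y$ is the minimum of each such call, each contributes at most $9$ comparisons in expectation, so linearity yields $\E{\fmin}\le 9L = O(\log_\Delta n)$. For $\E{\frem}$, fix any non-minimum $x$: at each of its $\ell(x)-1$ winning levels $x$ is the minimum of its call and again contributes at most $9$ in expectation by Theorem~\ref{thm:expectedConstant}, while at the losing level a direct inspection of \textsc{SampleMinimum} shows that any single element participates in at most $O(|X|)=O(\Delta)$ comparisons in the worst case (the extreme cases being the elements that play the role of $a_2$ or $b_2$, which are compared against up to half of the input in Lines~\ref{cmbMnFilterA} and~\ref{cmbMnFilterB}, with the recursive contributions forming a geometric series dominated by the top level). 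Summing gives $\E{f(x)}\le 9L+O(\Delta)=O(\Delta+\log_\Delta n)$.

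For the whp bounds I replace Theorem~\ref{thm:expectedConstant} by Theorem~\ref{thm:loglogWHP}. Choosing the parameter $t=\Theta(\log\log\Delta)$ (plus a small additive term to absorb the $\log L$ slack required by the union bound) makes the per-level cost on $y$ bounded by $O(\log\log\Delta)$ with failure probability $\exp(-2^t)\cdot O(\log\log\Delta)$, and a union bound over the $L$ levels of $P_y$ then drives the total failure probability below $1/n$ in the regime where $\Delta$ is above a mild polylogarithmic threshold; for smaller $\Delta$ the deterministic bound $3\log\Delta$ of Lemma~\ref{lem:log} already fits into the claimed $(\log n\log\log\Delta)/\log\Delta$. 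Running the same analysis along a non-minimum element's leaf-to-root path and adding the $O(\Delta)$ worst-case losing-level contribution yields $\frem=O(\Delta+(\log n\log\log\Delta)/\log\Delta)$ whp.

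The principal obstacle is the whp step: Theorem~\ref{thm:loglogWHP} concentrates each level around $\log\log\Delta$ rather than around the tiny expectation $O(1)$, so the union bound over $L$ levels is right at the edge of working in the target regime, and the doubly-exponential decay in $t$ must be exploited carefully, possibly splitting the analysis by the magnitude of $\Delta$ so that every case fits under the unified bound $(\log n\log\log\Delta)/\log\Delta$. A secondary subtlety is that $\frem$ is an expectation of a maximum over the $n-1$ remaining elements, so the per-element expected bound has to be promoted to a bound on $\E{\max_{x\ne y}f(x)}$; this works because the losing-level cost is deterministically $O(\Delta)$ and the winning-level contributions are independent across levels and have very light tails.
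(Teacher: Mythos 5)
Your expectation bounds are correct and use the same per-level decomposition as the paper: charge $O(1)$ expected comparisons at each level on the root-to-leaf path of the minimum (or winning path of a non-minimum element) via Theorem~\ref{thm:expectedConstant}, plus a worst-case $O(\Delta)$ at the single losing level for non-minimum elements.

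The whp step, however, has a genuine gap, and it is precisely the obstacle you flag as ``right at the edge of working.'' A uniform per-level threshold $t$ followed by a union bound over the $L=\log_\Delta n$ levels forces you to pay a $\log L$ additive term in $t$ (to make $L\cdot e^{-2^t}\cdot 2\log\log\Delta < n^{-1}$ you need $2^t = \Omega(L\log\Delta)$, i.e.\ $t \ge \log L + \log\log\Delta - O(1)$), and summing the per-level cost $O(t+\log\log\Delta)$ then gives $O\left(\log_\Delta n\,(\log\log\Delta + \log\log_\Delta n)\right)$. This matches the claimed $O(\log_\Delta n\cdot\log\log\Delta)$ only when $\log\log_\Delta n = O(\log\log\Delta)$, i.e.\ when $\Delta \ge 2^{(\log n)^{\varepsilon}}$. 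For smaller super-constant $\Delta$ (e.g.\ $\Delta = \mathrm{polylog}\,n$ or $\Delta = 2^{\sqrt{\log\log n}}$) your bound is strictly weaker, and your proposed fallback to the deterministic bound $3\log\Delta$ per level does not rescue it: that gives $\fmin \le 3\log n$, whereas the claimed bound $O\bigl(\frac{\log n\log\log\Delta}{\log\Delta}\bigr)$ is $o(\log n)$ for any $\omega(1)$ choice of $\Delta$, so the deterministic bound is \emph{larger}, not smaller, than what must be shown.

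The paper avoids this by not fixing a single $t$. Because the per-level contributions $\bF_1,\dots,\bF_h$ (with $h=\log_\Delta n$) are \emph{independent}, it sets a global budget $k = c\,h\log\ln\Delta$, and for each integer composition $f_1+\cdots+f_h = k$ with $f_i\ge 1$ applies Theorem~\ref{thm:loglogWHP} at level $i$ with parameter $t=f_i$. By independence, the probability that \emph{every} level exceeds its share is at most $(2\log\log\Delta)^h\prod_i e^{-2^{f_i}}$, and by AM--GM $\sum_i 2^{f_i}$ is minimized when the $f_i$ are balanced, giving $\prod_i e^{-2^{f_i}} \le e^{-h\,2^{k/h}} = e^{-h\,(\ln\Delta)^{\Theta(c)}}$. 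A union bound over the $\binom{h+k}{h}$ compositions then drives the failure probability down to $n^{-c}$ without paying the extra $\log L$ per level, which is exactly what recovers the $\log\log\Delta$ (rather than $\log\log n$ or $\log\log\Delta+\log L$) factor. In short: the fix is to exploit independence of the levels via a composition/AM--GM argument rather than a fixed-threshold union bound.

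A minor point: your bound $\E{\frem} = \Oh(\Delta + \log_\Delta n)$ per element is correct, but $\frem$ is a maximum over $n-1$ elements, so to match the theorem's $\E{\frem} = \Delta + O(\log_\Delta n)$ you need either the whp bound plus a worst-case cap (which the paper implicitly does), or the light-tail argument you sketch; this part of your outline is salvageable once the whp machinery is in place.
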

\begin{proof}
    First, observe that $\E{\fmin } = O(\log_\Delta n)$ is an easy consequence of Theorem~\ref{thm:expectedConstant}.
    Now we focus on high probability bounds.
    Let $k=c\cdot h\log\ln \Delta$, and $h=\log_\Delta n$ for a large enough constant $c$.
    There are $h$ levels in $T$.
    Let $\bF_i$ be the random variable that counts the number of comparisons
    the minimum participates in at level $i$ of $T$. 
    Observe that these are independent random variables. 
    Let $f_1, \dots, f_h$ be integers such that $f_i\ge 1$ and
    $\sum_{i=1}^h f_i = k$, and let $c'$ be the constant hidden in the big-$O$ notation of Theorem~\ref{thm:loglogWHP}.
    Use Theorem~\ref{thm:loglogWHP}  $h$ times (with $n$ set to $\Delta$, and $t=f_i$), and also bound $2\log\log \Delta < \Delta$ to get
    \begin{align*}
        \p\Big[\vphantom{\frac12}\bF_1 \geq c'(f_1+\log\log \Delta) \vee \dots \vee \bF_h \geq c'(f_h + \log\log\Delta)\Big] \le \Delta^h e^{-\sum_{i}2^{f_i}} \le \Delta^h e^{-h 2^{k/h}}
 \end{align*}
 where the last inequality follows from the inequality of arithmetic and geometric means
 (specifically, observe that  $ \sum_{i=1}^h 2^{f_i}$ is minimized when all $f_i$'s are distributed
 evenly). %

 Now observe that the total number of different integral sequences $f_1, \dots, f_h$ that sum up to $k$
 is bounded by $h+k \choose h$ (this is the classical problem of distributing $k$ identical balls into $h$  distinct bins).
 Thus, we have
 \begin{multline*}
   \p[\fmin = O(k+h\log\log \Delta)]\le {h+k \choose h} \cdot \Delta^{h} \frac{1}{ e^{h\cdot 2^{k/h }} } \le
   \left(\frac{e(h+k)}{h}\right)^h \cdot \Delta^{h} \frac{1}{ e^{h\cdot 2^{k/h } }}  \\ \le
   \left( \frac{O\left(\frac{k}{h}\right)\cdot \Delta}{ e^{2^{k/h } }}\right)^h =
   \left( \frac{O\left(\Delta^2\right)}{ e^{2^{c\log\ln \Delta } }}\right)^h <
   \left( \frac{O(\Delta^2)}{ e^{\ln^{c}\Delta }}\right)^h <
   \left( \frac{\Delta^3}{ \Delta^{\ln^{c-1}\Delta }}\right)^h<
   \Delta^{-ch}  =
   n^{-c}
 \end{multline*}
 where in the last step we bound $(\ln \Delta)^{c-1}-3 > c$ for large enough $c$ and $\Delta \ge 3$.
    This is a high probability bound for $\fmin$.
    To bound $\frem$, observe that for every non-minimum element $x$, there exists a lowest
    node $v$ such that $x$ is not $m(v)$.
    If $x$ is not passed to the ancestors of $v$, $x$ suffers at most $\Delta$ comparisons in $v$, and
    below $v$ $x$ behaves like the minimum element, which means that the above analysis applies. 
    This yields that whp we have
    $\frem = \Delta + O\left( \frac{\log n\log\log \Delta}{\log \Delta}\right)$.
\end{proof}
\subsection{Randomized Lower Bounds for Finding the Minimum}\label{sec:min-lower}
\subsubsection{Expected Lower Bound for the Fragile Complexity of the Minimum.}
The following theorem is our main result.
\begin{theorem}\label{thm:exp-min-lower}
  In any randomized minimum finding algorithm with \frag of at most $\Delta$ for any element, the expected \frag of the minimum is at least $\Omega(\log_\Delta n)$. 
\end{theorem}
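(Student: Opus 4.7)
My plan is to lower-bound $\fmin$ by an adversarial potential argument in the style of Theorem~\ref{thm:min-det-fragile-theta}, with a potential chosen to exploit the $\Delta$-restriction. By Yao's minimax principle, it suffices to exhibit an input distribution under which every deterministic algorithm $\A$ with per-element fragile complexity $\le\Delta$ forces expected $\fmin \ge \Omega(\log_\Delta n)$.

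I would maintain, against any deterministic algorithm, a partial order $P$ of answered comparisons, the set $C$ of \emph{candidate minima} (elements with no known strictly smaller element), and the per-element comparison count $c_e$. Initially $|C|=n$ and $c_e = 0$. Define
$$\Phi \;=\; \sum_{e \in C}(\Delta-1)^{c_e}, \qquad \Phi_0 = n.$$
On each query $(x,y)$ the adversary responds: (i) if both $x, y \in C$, then assuming WLOG $c_x \ge c_y$, declare $x < y$ (removing $y$ from $C$); (ii) if exactly one of $x, y$ is in $C$, declare that one smaller (keeping $C$ intact); (iii) otherwise, answer consistently.

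The key step is to verify that $\Phi$ is non-decreasing. In case (i), the change is
$$
(\Delta-1)^{c_x+1} - (\Delta-1)^{c_x} - (\Delta-1)^{c_y} \;=\; (\Delta-2)(\Delta-1)^{c_x} - (\Delta-1)^{c_y},
$$
which is non-negative for $\Delta \ge 3$ whenever $c_x \ge c_y$. Case (ii) contributes $(\Delta-2)(\Delta-1)^{c_x} \ge 0$, and case (iii) leaves $\Phi$ unchanged. When the algorithm terminates $|C|=1$ consists of the actual minimum, so $\Phi_{\text{final}} = (\Delta-1)^{c_{\min}} \ge n$, yielding $c_{\min} \ge \log_{\Delta-1} n = \Omega(\log_\Delta n)$. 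The corner case $\Delta = 2$ is already covered by Theorem~\ref{thm:min-det-fragile-theta}, which gives the stronger $\log n = \log_2 n$ bound.

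The main obstacle is the Yao step: the adversary's answers depend on the query sequence, so they implicitly define a different input per deterministic realization of the randomized algorithm, whereas Yao demands a single input distribution $D$. I expect this to be resolved either (a)~by coupling the adversary's answers with a uniform random permutation input and showing the expected potential evolution still dominates--using that, when two sinks $x, y \in C$ are compared, a careful conditioning on the partial order makes the probability of the heavier sink winning at least~$1/2$, restoring a variant of the above inequality in expectation--or (b)~by arguing directly in the adaptive-adversary model common for such adversary-style arguments, where the per-realization bound $c_{\min} \ge \log_{\Delta-1} n$ immediately lifts to a bound on $\mE[\fmin]$ via linearity. Either route closes the gap to the matching $O(\log_\Delta n)$ upper bound of Theorem~\ref{thm:bestmin}.
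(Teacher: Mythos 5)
Your potential argument gives a clean re-derivation of the \emph{deterministic} lower bound (indeed it simplifies to Theorem~\ref{thm:min-det-fragile-theta} at $\Delta=3$, since the parameter $\Delta$ in your potential is free), but the lift to randomized algorithms fails in ways that you partly anticipate and partly do not; the paper's proof is genuinely different.

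\textbf{Your option (b) does not work.} An adaptive adversary that resolves comparisons on the fly defines a \emph{different} consistent input for each realization of the algorithm's coins. Averaging over coins therefore does not produce any single bad input, let alone a bad input distribution; the theorem (and the matching upper bound of Theorem~\ref{thm:bestmin}, which is stated for worst-case input and random coins) lives in the Yao / fixed-distribution model. The ``per-realization bound lifts by linearity'' step silently assumes the adversary can see the coins, which is a strictly stronger adversary model than the one the statement is about.

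\textbf{Your option (a) also cannot be made to work, for two independent reasons.} First, the potential is \emph{not} maintained in expectation over a random input. When two candidates $x,y$ with $c_x\gg c_y$ are compared, the good event $x<y$ contributes a bounded gain $(\Delta-2)(\Delta-1)^{c_x}-(\Delta-1)^{c_y}$, but the bad event $y<x$ contributes $(\Delta-2)(\Delta-1)^{c_y}-(\Delta-1)^{c_x}$, which is very negative; the probability of the good event is governed by the \emph{reachability counts} $r_x,r_y$ (the number of elements each is known to beat), not by $c_x,c_y$, and there is no uniform relationship between $r$ and $c$ under the $\Delta$-restriction that makes the expected change nonnegative. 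Second, and more fundamentally, even if you could establish $\mE[\Phi_{\mathrm{final}}]\ge n$, that gives $\mE\bigl[(\Delta-1)^{c_{\min}}\bigr]\ge n$, and Jensen goes the \emph{wrong} way: convexity gives $\mE\bigl[(\Delta-1)^{c_{\min}}\bigr]\ge(\Delta-1)^{\mE[c_{\min}]}$, so you cannot conclude $\mE[c_{\min}]\ge\log_{\Delta-1}n$. This is not a technicality: for \textsc{SampleMinimum} one computes $\mE[2^{\fmin}]=\Oh(\mathrm{polylog}\,n)\ll n$ because of the doubly-exponential tail, so the asserted potential inequality simply cannot hold in expectation against a fixed random input.

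\textbf{The root cause} is that your invariant nowhere uses the hypothesis that \emph{non-minimum} elements have fragile complexity at most $\Delta$, yet that hypothesis is essential: without it, $\mE[\fmin]=\Oh(1)$ is achievable (Theorem~\ref{thm:expectedConstant}), so any correct lower-bound proof must invoke the $\Delta$-bound on the other elements somewhere. The paper's proof does exactly this. It applies Yao to the iid-uniform input distribution, works in a relaxed ``comparisons with additional information'' model, and the crux (Lemma~\ref{lem:minlb_base}) is a counting argument: among the $T$ smallest elements, the known small ones can absorb only $\Delta\cdot|I_{\sm}|$ comparisons before the minimum must itself be compared to something large. Layering this over $\Theta(\log_\Delta n)$ disjoint scales $(G^{-i},G^{-i+1})$ and invoking linearity of expectation gives the bound. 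So the paper's argument is a scale-decomposition/counting argument exploiting the $\Delta$-cap on \emph{other} elements, whereas yours is a potential argument exploiting nothing about $\Delta$ at all; the two are not variants of each other, and yours cannot be repaired along the lines you sketch.
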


Note that this theorem implies the \frag of finding the minimum:
\begin{corollary}\label{cor:thetaallsublog}
  Let $f(n)$ be the expected \frag of finding the minimum (i.e. the smallest function such that some algorithm achieves $f(n)$ \frag for all elements (minimum and the rest) in expectation).
  Then $f(n) = \Theta (\frac{\log n}{\log\log n})$.
\end{corollary}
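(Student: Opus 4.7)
The plan is to prove the two matching bounds by carefully balancing the trade-off parameter $\Delta$ appearing in Theorems~\ref{thm:bestmin} and \ref{thm:exp-min-lower}.

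\textbf{Upper bound.} For the $O(\log n / \log \log n)$ upper bound I would invoke Theorem~\ref{thm:bestmin} with the choice $\Delta := \lceil \log n / \log \log n \rceil$. With this choice, $\log \Delta = \log\log n - \log\log\log n = \Theta(\log\log n)$, so $\log_\Delta n = \log n / \log \Delta = \Theta(\log n / \log\log n)$. Both guarantees given by the theorem, namely $\E{\fmin} = O(\log_\Delta n)$ and $\E{\frem} = \Delta + O(\log_\Delta n)$, then become $O(\log n / \log\log n)$; that is, the two terms $\Delta$ and $\log_\Delta n$ balance exactly at this point. So {\sc TreeMinimum}$_\Delta$ witnesses $f(n) = O(\log n / \log\log n)$.

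\textbf{Lower bound.} For the matching $\Omega(\log n / \log\log n)$ bound, assume some algorithm achieves expected \frag at most $f(n)$ for every element. I would apply Theorem~\ref{thm:exp-min-lower} with the parameter $\Delta := f(n)$, obtaining
\[
f(n) \;\ge\; \E{\fmin} \;=\; \Omega\bigl(\log_{f(n)} n\bigr) \;=\; \Omega\!\left(\frac{\log n}{\log f(n)}\right),
\]
so that $f(n)\log f(n) = \Omega(\log n)$. A short direct calculation then gives $f(n) = \Omega(\log n / \log\log n)$: if $f(n) \le c\,\log n / \log\log n$ for sufficiently small $c>0$, substituting back into $f(n)\log f(n)$ yields a quantity bounded above by $c \log n$, contradicting the inequality.

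\textbf{Main obstacle.} The only genuine subtlety is that Theorem~\ref{thm:exp-min-lower}, as stated, assumes a worst-case upper bound $\Delta$ on every element's \frag, whereas the corollary only provides an expected bound $f(n)$. To bridge this gap I would either note that the Yao-style averaging underlying Theorem~\ref{thm:exp-min-lower} goes through unchanged when the hypothesis is phrased as an expected cap (only the averaging over the input distribution changes, not the combinatorial adversary argument), or explicitly truncate the algorithm at a threshold $\Theta(f(n))$ per element and use Markov together with a coupling to a brute-force fallback to argue that the truncated algorithm still finds the minimum with constant probability while satisfying the worst-case hypothesis, so that the theorem applies up to constants. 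The arithmetic balancing $f(n)\log f(n) = \Theta(\log n)$ is then the easy step; the reduction to the worst-case setting is where the real work sits.
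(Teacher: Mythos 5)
Your approach is essentially the one the paper uses: invoke Theorem~\ref{thm:bestmin} for the upper bound and Theorem~\ref{thm:exp-min-lower} for the lower bound, balance at $\Delta \approx \log n / \log\log n$, and note that a high-probability upper bound is also an upper bound on the expectation. The balancing arithmetic on both sides of your sketch is correct, and setting $\Delta = f(n)$ in the lower-bound direction and solving $f(n)\log f(n)=\Omega(\log n)$ is equivalent to the paper's choice of $\Delta = \log n/\log\log n$.

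What you flag as the ``main obstacle'' is, however, a genuine subtlety that the paper's one-sentence proof does not explicitly discharge. Theorem~\ref{thm:exp-min-lower} is proved via Lemma~\ref{lem:minlb_base}, whose key counting step $|F| < \Delta\cdot|I_\sm|$ uses $\Delta$ as a deterministic worst-case cap on the fragile complexity of every non-minimum element, whereas the corollary's hypothesis only gives an expected cap $f(n)$. Neither of your two suggested bridges quite closes the gap as written. The truncate-and-fall-back route fails because Markov controls the overflow probability of any \emph{single} element, but a union bound over all $n$ elements leaves a failure probability that is far too large to charge the fallback's cost against the expected \frag of the minimum. The ``re-trace the averaging'' route is the right instinct but is not automatic either: after applying Yao, the per-element \frags $f_i$ are random variables correlated with the event $\{i\in I_\sm\}$ (being small in value is exactly the kind of thing that could make an element attract more comparisons), so one cannot bound $\mathbb{E}\bigl[\sum_{i\in I_\sm} f_i\bigr]$ from $\mathbb{E}[f_i]\le f(n)$ without an argument controlling that correlation. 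In short, the step you correctly identify as ``where the real work sits'' is also the step the paper leaves implicit; the corollary would benefit from a few extra lines extracting a worst-case cap from the expected cap (or directly pushing a conditional-expectation version of Lemma~\ref{lem:minlb_base}) before invoking Theorem~\ref{thm:exp-min-lower}.
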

\begin{proof}
  Use Theorem~\ref{thm:bestmin} as the upper bound and 
  Theorem~\ref{thm:exp-min-lower}, both with $\Delta = \frac{\log n}{\log\log n}$, observing that if $f(n)$ is an upper bound that holds with high probability, it is also an upper bound on the expectation.
\end{proof}

To prove Theorem~\ref{thm:exp-min-lower} we give a lower bound for a {\em deterministic} algorithm~$\A$ on a random input of $n$ values, $x_1, \dots, x_n$ where each $x_i$
is chosen iid and uniformly in $(0,1)$.
By Yao's minimax principle, the lower bound on the expected \frag of the minimum when running~$\A$ also holds for any randomized algorithm.

We prove our lower bound in a model that we call ``comparisons with additional information (CAI)'':
if the algorithm $\A$ compares two elements $x_i$ and $x_j$ and it turns out that $x_i < x_j$, then the value $x_j$ is revealed to the algorithm.
Clearly, the algorithm can only do better with this extra information. 
The heart of the proof is the following lemma which also acts as the 
``base case'' of our proof.
\begin{lemma}\label{lem:minlb_base}
  Let $\Delta$ be an upper bound on $\frem$.
  Consider $T$ values $x_1, \dots, x_T$ chosen iid and uniformly in $(0,b)$.
  Consider a deterministic algorithm $\A$ in CAI model that finds the minimum value $y$ among $x_1, \dots, x_T$. 
  If $T > 1000\Delta$, then with probability at least $\frac{7}{10}$ $\A$ will compare $y$ against an element~$x$ such that $x \ge b/(100\Delta).$
\end{lemma}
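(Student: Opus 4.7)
I proceed by contradiction. Set $L := b/(100\Delta)$, call an element \emph{small} if its value lies below $L$, let $S$ be the set of small elements, and let $B$ be the event that every partner of $y$ in $\A$'s execution is small; the target is $\p[B] \le 3/10$. Since $\mE[|S|] = T/(100\Delta) \ge 10$ by the hypothesis $T > 1000\Delta$, a Chernoff bound gives $|S| \ge 2$ with probability close to $1$; in particular the second-smallest value $y_2$ is small with probability close to $1$. A structural observation in the spirit of Theorem~\ref{thm:min-det-fragile-theta} is that any correct algorithm must directly compare $y$ with $y_2$: since $y_2$ is smaller than every non-$y$ element, $y_2$ can only win a comparison against $y$, so in CAI its value is revealed only by a direct $y$-$y_2$ comparison; without such a comparison, $y_2$ has no incoming edge in the comparison DAG and is a second source besides $y$, precluding unique identification of the minimum. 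Hence $y_2\in N(y)$ almost surely, and under $B$ this guaranteed partner is itself small with high probability, so it does not by itself preclude~$B$.

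To complete the proof, I would establish that under $B$ the algorithm cannot cover all large elements with increasing chains starting at $y$. Correctness requires that every non-$y$ element be reachable from $y$ in the directed comparison graph; under $B$, each such chain's first edge ends in a small partner of $y$, and the remaining edges lie in the non-$y$ subgraph of maximum degree $\Delta$ (by the hypothesis $\frem \le \Delta$). The plan is to condition on $\A$'s transcript just before its first comparison involving $y$, use exchangeability of the currently unrevealed positions to reduce to an ``increasing-chain reach'' problem in a random-value bounded-degree graph, and bound the expected number of large elements reachable from $|N(y)| \le |S|-1 \approx T/(100\Delta)$ small starting points by a layered expected-reach computation. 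Comparing this bound against the actual number of large elements, $(1 - 1/(100\Delta))T \gg T/2$, forces a contradiction with correctness whenever $\p[B] > 3/10$; the constant $3/10$ arises after union-bounding the few concentration failure events.

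The main obstacle is the adaptivity of $\A$ in the CAI model: since revealed values drive the algorithm's choice of subsequent comparisons, the non-$y$ subgraph is not chosen independently of the values, so classical random-graph reach bounds do not apply directly. My plan is to sidestep this by conditioning on the pre-$y$ transcript; conditional on this transcript, the values at the unrevealed positions are iid uniform on intervals determined by their revealed partners' values, so standard random-value reachability bounds apply to the conditional distribution with only minor modifications. The specific constants $100$ and $1000$ appearing in the lemma statement are precisely calibrated to make the final union bound---over the Chernoff estimate on $|S|$, the concentration of the reachable-count estimate, and the posterior-uniformity argument---yield the desired constant $3/10$.
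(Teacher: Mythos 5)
Your proposal takes a genuinely different route from the paper, and I believe the core of it does not work.

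The paper's proof does not involve reachability or chain-coverage at all. Instead it augments the algorithm to $\A'$, which is told the identities and values of \emph{all} small elements (those below $b/(100\Delta)$) except the minimum. After a Markov bound gives $|I_\sm|\le T/(10\Delta)$ with probability $9/10$, the unknowns $U$ (the minimum plus all large elements, with $|U|>9T/10$) are split into the set $F$ whose \emph{first} comparison is against a known small element and the set $W$ first compared against another unknown. The degree constraint on the known small elements forces $|F|\le\Delta|I_\sm|\le T/10$, so at least $8/9$ of $U$ lies in $W$; by exchangeability the minimum is a uniform member of $U$, and any element of $W$ has its first comparison against a large value. This is a local first-comparison counting argument, not a global coverage argument.

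Your central step---that under event $B$ the algorithm ``cannot cover all large elements with increasing chains starting at $y$''---is not true as a deterministic statement, and I don't see how to make it true probabilistically either. If the algorithm happens to compare $y$ only against $y_2$, then $y_2$ (which is smaller than all remaining elements) can be compared to $\Delta$ fresh elements and win all of them, each of those can be compared to $\Delta$ fresh elements and, using the CAI-revealed values, the algorithm can always compare a known value against a carefully-chosen fresh one, building a bounded-degree increasing tree. Because the tree can have depth $\Theta(\log_\Delta T)$, its size is not constrained to be small: reachability from $y$'s small neighbours in a $\Delta$-bounded-degree graph can be all of $U$. So the scenario you are trying to rule out is in fact achievable, and the real question (which the paper's proof answers) is \emph{with what probability} the algorithm can set itself up in such a favourable configuration without ever learning, through $y$ itself, which element is the minimum. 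The ``expected layered reach'' estimate you sketch would apply to a graph chosen independently of the values, but here the edges are chosen adaptively, after observing CAI-revealed values; that is precisely the obstacle you flag, and conditioning on the transcript up to $y$'s first comparison does not remove it, because the overwhelming majority of the comparison graph is built \emph{after} $y$'s first comparison and is still value-driven. Your structural observation that any correct CAI algorithm must directly compare $y$ with $y_2$ is correct (and a nice mirror of the paper's Theorem~\ref{thm:min-det-fragile-theta} adversary), but as you note it does not help, and the remainder of the plan does not close the gap. If you want to rescue the approach you should instead bound how many of the unknowns can have their first comparison ``absorbed'' by the bounded-degree small elements---that is exactly the $F$/$W$ split in the paper.
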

\begin{proof}
  By simple scaling, we can assume $b=1$.
  Let $p$ be the probability that $\A$ compares $y$ against a value larger than $1/(100\Delta)$. 
  Let $I_\sm$ be the set of indices $i$ such that $x_i < 1/(100\Delta)$. 
  Let $\A'$ be a deterministic algorithm in CAI model such that:
  \begin{itemize}
  \item $\A'$ is given all the indices  in $I_\sm$ (and their corresponding values) except for the index of the minimum.
    We call these the known values. 
  \item $\A'$ minimizes the probability $p'$ of comparing the $y$ against a value larger than $1/(100\Delta)$.
  \item $\A'$ finds the minimum value among the unknown values. 
  \end{itemize}
  Since $p' \le p$, it suffices to bound $p'$ from below.
  We do this in the remainder of the proof. 

  Observe that the expected number of values $x_i$ such that $x_i < 1/(100\Delta)$ is $T/(100\Delta)$.
  Thus, by Markov's inequality, $\p[|I_\sm| \le T/(10\Delta) ] \ge \frac{9}{10}$.
  Let's call the event $|I_\sm| \le T/(10\Delta) $ the good event. 
  For algorithm $\A'$ all values smaller than $1/(100\Delta)$ except for the minimum are known.
  Let~$U$ be the set of indices of the unknown values. Observe that a value~$x_i$ for~$i\in U$ is either the minimum or larger than $1/(100\Delta)$, and that $|U|=T-|I_\sm|+1 > \frac9{10} T$ (using $\Delta\geq 1$) in the good event.
  Because $\A'$ is a deterministic algorithm, the set~$U$ is split into set~$F$ of elements that have their first comparison against a known element, and set $W$ of those that are first compared with another element with index in~$U$.
  Because of the global bound~$\Delta$ on the fragile complexity of the known elements, we know $|F|<\Delta\cdot |I_\sm| \leq \Delta T / (10\Delta) = T/10$. %
  Combining this with the probability of the good event, by union bound, the probability of the minimum being compared with a value greater than~$1/(100\Delta)$ is at least
  $1- (1-\frac{9}{10}) - (1-\frac{8}{9}) \ge 7 /10$. 
\end{proof}

Based on the above lemma, our proof idea is the following. 
Let $G = 100\Delta$. 
We would like to prove that on average  $\A$ cannot avoid comparing the minimum to a lot of elements. 
In particular, we show that, with constant probability, the minimum will be compared against some value in the range $[G^{-i}, G^{-i+1}]$ for every integer $i$, $1 \le i \le \frac{\log_G n}{2}$.
Our lower bound then follows by an easy application of the linearity of expectations. 
Proving this, however, is a little bit tricky. 
However, observe that Lemma~\ref{lem:minlb_base} already proves this for $i=1$.
Next, we use the following lemma to apply Lemma~\ref{lem:minlb_base} over all values of $i$, $1 \le i \le \frac{\log_G n}{2}$.
\begin{lemma}\label{lem:minlb_view}
  For a value $b$ with $0 < b < 1$, define $p_k = {n \choose k} b^i (1-b)^{n-k}$, for $0 \le k \le n$. 
  Choosing $x_1, \dots, x_n$ iid and uniformly in $(0,1)$ is equivalent to the following:
  with probability $p_k$, uniformly sample a set $I$ of $k$ distinct indices in $\left\{ 1, \dots, n \right\}$ among all the subsets of size $k$.
  For each $i \in I$, pick $x_i$ iid and uniformly in $(0,b)$.
  For each $i \not \in I$, pick $x_i$ iid and uniformly in $(b,1)$.
\end{lemma}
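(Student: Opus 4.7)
\textbf{Proof proposal for Lemma~\ref{lem:minlb_view}.}
The plan is to verify that the two sampling procedures induce identical joint distributions on $(x_1,\dots,x_n) \in (0,1)^n$ by conditioning on the pattern of which coordinates fall below the threshold~$b$.

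First I would introduce, for each subset $I \subseteq \{1,\dots,n\}$, the event $E_I = \{x_i < b \text{ iff } i \in I\}$; these events partition $(0,1)^n$ (up to a measure-zero set). Under the iid uniform model, independence of the coordinates together with $\Pr[x_i < b] = b$ and $\Pr[x_i \geq b] = 1-b$ gives
\[
  \Pr[E_I] = b^{|I|}(1-b)^{n-|I|}.
\]
Summing over all $I$ with $|I|=k$ yields $\Pr[|I|=k] = \binom{n}{k} b^{k}(1-b)^{n-k} = p_k$, which matches the marginal distribution of $|I|$ in the second procedure.

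Next I would check that the conditional distribution of the chosen set $I$ given $|I|=k$ is uniform over the $\binom{n}{k}$ subsets of size $k$: this is immediate since $\Pr[E_I]$ depends only on $|I|$, so by symmetry each such $I$ is equally likely. This matches the uniform sampling of $I$ in the second procedure.

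Finally I would verify the conditional distribution of the values given $E_I$. Since the $x_i$ are independent in the first model, conditioning on the product event $E_I$ leaves the coordinates mutually independent, with $x_i \mid \{x_i < b\}$ uniform on $(0,b)$ for $i \in I$ and $x_i \mid \{x_i \geq b\}$ uniform on $(b,1)$ for $i \notin I$ (a standard property of the uniform distribution on an interval restricted to a subinterval). This exactly reproduces the second stage of the second procedure, completing the equivalence.

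There is no real obstacle here; the only minor care is to keep the three independent ingredients (marginal of $|I|$, conditional distribution of $I$ given its size, and conditional distribution of the values given $I$) cleanly separated so that the product of their distributions matches the iid uniform law.
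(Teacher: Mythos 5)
Your proof is correct and is essentially the same argument as the paper's: the paper phrases the decomposition geometrically as partitioning the unit cube $(0,1)^n$ into $2^n$ axis-aligned boxes determined by which coordinates lie in $(0,b)$, whereas you phrase it probabilistically via the events $E_I$, but these are the same decomposition. If anything your write-up is a bit more complete, since you explicitly verify the conditional distribution of the values given $E_I$ (uniform on the subintervals), a step the paper leaves implicit.
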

\begin{proof}
    It is easy to see that choosing $x_1, \dots, x_n$ iid uniformly in $(0,1)$ is equivalent to choosing a point $X$ uniformly at random inside an $n$ dimensional unit cube $(0,1)^n$. Therefore, we will prove the equivalence between (i) the distribution defined in the lemma, and (ii) choosing such point $X$.

    Let $Q$ be the $n$-dimensional unit cube. 
    Subdivide $Q$ into $2^n$ rectangular region defined by the Cartesian product of intervals
    $(0,b)$ and $(b,1)$, i.e., $\left\{ (0,b), (b,1) \right\}^n$ 
    (or alternatively, bisect $Q$ with $n$ hyperplanes, with the $i$-th hyperplane perpendicular to the $i$-th axis and intersecting it at coordinate equal to $b$). 

    Consider the set $R_k$ of rectangles  in  $\left\{ (0,b), (b,1) \right\}^n$ with exactly
    $k$ sides of length $b$ and $n-k$ sides of length $1-b$.
    Observe that for every choice of $k$ (distinct) indices $i_1, \dots, i_k$ out of $\left\{ 1, \dots, n \right\}$,
    there exists exactly one rectangle $r$ in $R_k$ such that $r$ has side length $b$ at dimensions $i_1, \dots, i_k$, 
    and all the other sides of $r$ has length $n-k$. 
    As a result, we know that  the  number of rectangles in $R_k$ is ${n\choose k}$ and the volume of 
    each rectangle in $R_k$ is $b^k(1-b)^k$.
    Thus, if we choose a point $X$ randomly inside $Q$, with probability $p_k$ it will fall inside
    a rectangle $r$ in $R_k$; furthermore, conditioned on this event, 
    the dimensions $i_1, \dots, i_k$ where $r$ has side length $b$ is a 
    uniform subset of $k$ distinct indices from $\left\{ 1, \dots, n \right\}$. 
\end{proof}

Remember that our goal was to prove that with constant probability, the minimum will be compared against some value in the range $[G^{-i}, G^{-i+1}]$ for every integer $i$, $1 \le i \le \frac{\log_G n}{2}$.
We can pick $b=G^{-i+1}$ and apply Lemma~\ref{lem:minlb_view}.
We then observe that it is very likely that the set of indices $I$ that we are sampling in Lemma~\ref{lem:minlb_view} will contain many indices. 
For every element $x_i$, $i\in I$, we are sampling $x_i$ independently and uniformly in $(0,b)$ which opens the door for us
to apply Lemma~\ref{lem:minlb_base}. 
Then we argue that Lemma~\ref{lem:minlb_base} would imply that with constant probability
the minimum will be compared against a value in the range $(b/G,b) = (G^{-i},G^{-i+1})$.
The lower bound claim of Theorem~\ref{thm:exp-min-lower} then follows by invoking the linearity of expectations. 

We are ready to prove that the minimum element will have $\Omega(\log_\Delta n)$ comparisons on average.
\begin{proof}[Proof of Theorem~\ref{thm:exp-min-lower}]
    First, observe that we can assume $n \ge (100,000 \Delta)^2$ as otherwise we are aiming for a trivial bound of  $\Omega(1)$. 
    We create an input set of $n$ values $x_1, \dots, x_n$ where each $x_i$ is chosen iid and uniformly
    in $(0,1)$. 
    Let $G=100\Delta$. 
    Consider an integer $i$ such that $1 \le i < \frac{\log_G n}{2}$.
    We are going to prove that with constant probability, the minimum will be compared against a value in the range $(G^{-i},G^{-i+1})$, which,
    by linearity of expectation, shows the stated $\Omega(\log_\Delta n)$ lower bound for the \frag of the minimum.

    Consider a fixed value of $i$. 
    Let $S$ be the set of indices with values that are smaller than $G^{-i+1}$. 
    Let $p$ be the probability that $\A$ compares the minimum against an~$x_j$ with $j\in S$ such that $x_j \ge G^{-i}$. 
    To prove the theorem, it suffices to prove that $p$ is lower bounded by a constant.
    Now consider an algorithm $\A'$ that finds the minimum but for whom all the values other than those in $S$ have
    been revealed and furthermore, assume $\A'$ minimizes the probability of comparing the minimum against an element $x \ge G^{-i}$ (in other words, we pick the algorithm which minimizes this probability, among all the algorithms). 
    Clearly, $p' \le p$. 
    In the rest of the proof we will give a lower bound for $p'$. 

    Observe that $|S|$ is a random variable with binomial distribution. 
    Hence $\E{|S|} = nG^{-i+1} > \sqrt{n}$ where the latter follows from $i < \frac{\log_G n}{2}$.
    By the properties of the binomial distribution we have that 
    $\p\left[|S| <  \frac{\E{|S|}}{100}\right]< \frac {1}{10}.$ 
    Thus, with probability at least $\frac {9}{10}$, we will have the ``good'' event that
    $|S|\ge \frac{\E{|S|}}{100} \ge \frac{\sqrt{n}}{100}$.

    In case of the good event, Lemma~\ref{lem:minlb_view} implying that conditioned on $S$ being the set of values smaller than $G^{-i+1}$, each value~$x_j$ with $j\in S$ is distributed independently and uniformly in the range $(0,G^{-i+1})$. 
    As a result, we can now invoke Lemma~\ref{lem:minlb_base} on the set $S$ with $T= |S|$.
    Since $n \ge (100,000 \Delta)^2$ we have $T = |S| \ge \frac{\sqrt{n}}{100} \ge \frac{100,000\Delta}{100}$.
    By Lemma~\ref{lem:minlb_base}, with probability at least $\frac{7}{10}$, the minimum will be compared against a value that is larger than $G^{-i}$.
    Thus, by law of total probability, it follows that in case of a good event, with probability $\frac{7}{10}$ the minimum will be compared to a value in the range $(G^{-i},G^{-i+1})$.
    However, as the good event happens with probability $\frac{9}{10}$, it follows that with probability at least $1-(1-\frac{7}{10})-(1-\frac{9}{10})= \frac{6}{10}$, the minimum will be compared against a value in the range $(G^{-i},G^{-i+1})$.
\end{proof}

\subsubsection{Lower bound for the \frag of the minimum whp.}%

With Theorem~\ref{thm:loglogWHP} in Subsection~\ref{sec:randAlgMin}, we show in particular that {\sc SampleMinimum} guarantees that the \frag of the minimum is at most $\Oh(\log\log n)$ with probability at least $1-1/n^c$ for any $c>1$.
(By setting $t = 2 \log\log n$).

Here we show that this is optimal up to constant factors in the fragile complexity.
\begin{theorem}\label{thm:minWHPlb}
  For any constant $\varepsilon > 0$, there exists a value of $n_0$ such that  the following holds for any 
  randomized algorithm $\A$ and for any $n > n_0$: there exists an input of size $n$ such that
  with probability at least $n^{-\varepsilon}$, $\A$ performs  $\ge  \frac12 \log\log n$
  comparisons with the minimum. 
\end{theorem}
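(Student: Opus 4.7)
The plan is to apply Yao's minimax principle: letting the input be $n$ values drawn iid uniformly in $(0,1)$, it suffices to show that for every deterministic algorithm $\A$ the fragile complexity $F$ of the minimum $y$ satisfies $\Pr[F \ge \tfrac12\log\log n] \ge n^{-\varepsilon}$. This reduces the whp lower bound to a joint-probability statement about deterministic algorithms on a fixed random distribution, which I will handle by a layered-bucket argument that mirrors the proof of Theorem~\ref{thm:exp-min-lower} but is promoted from an expectation bound to a probability bound, and that does not rely on a cap $\Delta$ on $\frem$.

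Set $k = \lceil \tfrac12 \log\log n \rceil$ and pick a decreasing sequence of thresholds $1 = \tau_0 > \tau_1 > \cdots > \tau_k > 0$ whose exponents decay doubly exponentially, for instance $\tau_j = n^{-\beta^j}$ for a suitable constant $\beta \in (0,1)$, chosen so that for each $j$ the expected number of input values in the layer $L_j := (\tau_j, \tau_{j-1}]$ is at least $\sqrt n$. Let $E_j$ be the event that during $\A$'s execution some comparison pairs $y$ with an element whose value lies in $L_j$. Since the layers $L_j$ are pairwise disjoint, distinct events are witnessed by distinct comparisons, so on $\bigcap_{j=1}^k E_j$ the minimum participates in at least $k$ comparisons and $F \ge \tfrac12\log\log n$.

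The main task is to establish the inductive claim $\Pr[E_j \mid E_1 \cap \dots \cap E_{j-1}] \ge c$ for an absolute constant $c \in (0,1)$; the chain rule then yields $\Pr[\bigcap_{j=1}^k E_j] \ge c^k = 2^{-k \log(1/c)} = (\log n)^{-O(1)}$, which comfortably exceeds $n^{-\varepsilon}$ for every $\varepsilon > 0$ and all $n > n_0(\varepsilon)$. The conditional step follows the blueprint of Lemma~\ref{lem:minlb_base} via Lemma~\ref{lem:minlb_view}: conditioning on the set $I_j := \{ i : x_i \le \tau_{j-1}\}$, whose size concentrates around $n \tau_{j-1}$, the values inside $I_j$ are iid uniform on $(0, \tau_{j-1})$; the sub-algorithm's first interaction with an $I_j$-element must with constant probability hit the top slice $L_j$, which is expected to contain a constant fraction of $|I_j|$.

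The principal obstacle is that Lemma~\ref{lem:minlb_base} uses the cap $\Delta$ on $\frem$ to bound the number of ``first-touch'' edges from unknown $I_j$-elements into the known ones (the step $|F| < \Delta \cdot |I_\sm|$ in its proof), whereas here no such cap is available. I plan to circumvent this via a symmetrisation argument: conditional on $|I_j|$, the identity of $y$ is uniform inside $I_j$, so any deterministic pattern of comparisons among $I_j$-elements interacts with $y$ with probability proportional to the number of distinct $I_j$-elements touched. Bounding the expected number of $L_j$-avoiding non-$y$ interactions before the algorithm is forced either to touch $y$ or to touch an $L_j$-element should then yield the constant conditional probability required. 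Formalising this cap-free analogue of Lemma~\ref{lem:minlb_base} is the technical heart of the argument.
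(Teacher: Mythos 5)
Your proposal takes a genuinely different route from the paper, but it has an unfilled gap that you yourself flag at the end, and that gap is in fact the whole difficulty, so as written this is not a proof.

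The paper's argument is much shorter and sidesteps the issue entirely. After applying Yao's principle, it focuses on the set $S$ of the $k = \sqrt{\log n}$ smallest inputs (so $\log k = \tfrac12\log\log n$), and notes that the relative order of $S$ is a uniformly random permutation of $k$ elements. Revealing all non-$S$ elements to the algorithm can only help it, and what remains is a minimum-finding instance on $k$ elements. Now Theorem~\ref{thm:min-det-fragile-theta} (the \emph{deterministic} adversary bound) is applied to this instance: among the $k!$ permutations of $S$, at least one forces $\A$ to perform $\lceil\log k\rceil = \tfrac12\log\log n$ comparisons on the minimum. That bad permutation has probability at least $1/k!$, and $\log k! < k\log k = \sqrt{\log n}\cdot\tfrac12\log\log n = o(\log n)$, so $1/k! > n^{-\varepsilon}$ for $n$ large. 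No layered bucket construction, no analogue of Lemma~\ref{lem:minlb_base}, and crucially no dependence on a cap $\Delta$ anywhere.

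Your approach instead tries to promote the layered argument of Theorem~\ref{thm:exp-min-lower} from an expectation bound to a tail bound, and you correctly identify that Lemma~\ref{lem:minlb_base} hinges on the step $|F| \le \Delta\cdot|I_\sm|$, which uses the global cap $\Delta$ on $\frem$ to limit how much partitioning information the algorithm can extract from the known small elements before being forced to touch the unknowns among themselves. Without that cap, a single element can absorb all comparisons (a QuickSelect-style pivot compared against everything), and the bound on $|F|$ evaporates, so the constant-conditional-probability claim $\Pr[E_j\mid E_1\cap\dots\cap E_{j-1}]\ge c$ no longer follows from the same argument. The ``symmetrisation'' idea you sketch is precisely where the paper would need a new lemma, and you explicitly state you have not formalised it. Since the statement being proved holds with \emph{no} restriction on $\frem$, whereas Lemma~\ref{lem:minlb_base} is only proved under the restriction, this is a real gap rather than a routine detail. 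The lesson is that the whp bound does not need the full machinery of the expected-case lower bound: restricting to a small random subset and invoking the deterministic adversary directly is both simpler and more robust, because it is oblivious to how the algorithm distributes comparisons over the non-minimum elements.
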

\begin{proof}
  We use (again) Yao's principle and consider a fixed deterministic algorithm~$\A$ working on the uniform input distribution, i.e., all input permutations have probability $1/n!$.
  Let $f= \frac12 \log\log n$ be the upper bound on the \frag of the minimum. 
  Let $k=2^{f}=\sqrt{\log n}$ and let $S$ be the set of the $k$ smallest input values. 
  Let $\pi$ be a uniform permutation (the input) and  
  $\pi(S)$ be the permutation of the elements of $S$ in $\pi$.
  Observe that $\pi(S)$ is a uniform permutation of the elements of $S$. 
  We reveal the elements not in $S$ to $\A$. 
  So, $\A$ only needs to find the minimum in $\pi(S)$. 
  By Theorem~\ref{thm:min-det-fragile-theta} there is at least one
  ``bad'' permutation of~$S$ which forces
  algorithm $\A$ to do $\log k = f$ comparisons on the smallest element.
  Observe $\log k! < \log k^k = k \log k = \sqrt{\log n}\frac12\log\log n$. Observe that there exists
  a value of $n_0$ such that for $n>n_0$ the right hand side is upper bounded by $\varepsilon \log n$, 
  so $k! \leq n^\varepsilon$, for $n > n_0$.
  Hence, the probability of a ``bad'' permutation is at least $1/k! > n^{-\varepsilon}$.
  \qedhere
\end{proof}
 
\section{Selection and median}\label{sec:main-selection}
The {\em \selection{t} problem} asks to find the $t$-th smallest element among $n$ elements of the input.
The simplest solution to the \selection{t} problem is to sort the input.
Therefore, it can be solved in $\Oh(\log n)$ \frag and $\Oh(n\log n)$ work by using the AKS sorting network~\cite{aks-halvers}.
For comparator networks, both of these bounds are optimal: the former is shown by Theorem~\ref{thm:min-det-fragile-theta} (and in fact it applies also to any algorithm) and the latter is shown in Section~\ref{sec:deterministicSelectionNetwork}.

In contrast, in this section we show that comparison-based algorithms can do better:
we can solve \select deterministically in $\Theta(n)$ work and $\Theta(\log n)$ \frag, thus, showing a separation between the two models.
However, to do that, we resort to constructions that are based on expander graphs.
Avoiding usage of the expander graphs or finding a simpler optimal deterministic solution is an interesting open problem (see Section~\ref{sec:conc}).
Moreover, in Subsection~\ref{sec:median} we show that we can do even better by using randomization.

\subsection{Deterministic selection}
\label{sec:deterministicSelectionAlg}

\begin{theorem}\label{thm:detMed}
There is a deterministic algorithm for \select which performs $\Oh(n)$ \work and has $\Oh(\log n)$ \frag.
\end{theorem}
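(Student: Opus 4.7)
The plan is to blend the classical median-of-medians (BFPRT) recipe for linear-work selection with the AKS sorting network and the deterministic merging procedure of Theorem~\ref{thm:exp_merging} to keep fragile complexity at $\Oh(\log n)$. Vanilla BFPRT already performs $\Oh(n)$ work in $\Oh(\log n)$ rounds and makes every non-pivot survivor absorb only $\Oh(1)$ comparisons per round. The single obstacle is the pivot itself, which is compared against every remaining element and hence accumulates $\Omega(n)$ comparisons per round. I would remove this bottleneck by replacing the single pivot by a batch of $\Theta(\log n)$ sorted splitters and partitioning via a distributed merge.

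In each round on $n_i$ surviving elements, the steps I would carry out are: (i) partition the survivors into groups of constant size and sort each group with a fixed tiny sorter, costing $\Oh(n_i)$ work and $\Oh(1)$ fragile per element; (ii) extract one representative per group and, via the standard median-of-medians counting argument, recursively pick $\Theta(\log n_i)$ evenly-spaced order statistics among the representatives to serve as splitters, then sort these splitters with AKS in $\Oh(\log n_i \log\log n_i)$ work and $\Oh(\log\log n_i)$ fragile per splitter; (iii) partition the surviving input across the $|S|+1$ buckets defined by the splitters, implemented by merging the sorted splitter list against block-sorted chunks of the input using Theorem~\ref{thm:exp_merging}, so that each input element accrues $\Oh(\log\log n_i)$ fragile per round and each splitter is touched by only $\Oh(\log n_i)$ comparisons in total; (iv) count bucket ranks, identify the bucket holding the target rank $t$, and recurse on that bucket.

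Choosing the splitters as evenly-spaced order statistics of the group medians forces the bucket containing the target to have size $\Oh(n_i/\log n_i)$, so the recursion depth is $\Oh(\log n/\log\log n)$, and the per-round work sums geometrically to $\Oh(n)$. Every surviving element incurs $\Oh(\log\log n_i)$ comparisons per round, totalling $\Oh(\log n)$ fragile over the entire recursion. The main obstacle is precisely step (iii): a naive dispatch such as binary searching each input element through the splitter tree would concentrate $\Omega(n_i)$ comparisons on the top splitter and destroy its fragile budget, so the merge has to be arranged so that every splitter participates in only a bounded number of comparisons per round and survives only the $\Oh(\log n_i/\log\log n_i)$ recursion levels along a single root-to-leaf path. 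Engineering this balanced routing out of the deterministic merge subroutine—so that fragile complexity is simultaneously bounded for input elements and for splitters—is the delicate technical step I would expect to occupy most of the proof.
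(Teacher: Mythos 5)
There is a genuine gap in step~(iii), and you correctly flag it yourself: with only $\Theta(\log n_i)$ splitters, the splitters must \emph{collectively} participate in $\Omega(n_i)$ comparisons just to route all $n_i$ survivors into buckets, which forces some splitter to take $\Omega(n_i / \log n_i)$ hits regardless of how cleverly the merge is organized. Exponential merging does not rescue this: it caps a chunk element's \frag at $\Oh(\log |S|) = \Oh(\log\log n_i)$, but a splitter's total exposure over all $n_i/c$ chunk-merges is roughly $(n_i/c)\cdot\Oh(\log c)$, which cannot be pushed down to $\Oh(\log n_i)$ without blowing up the chunk size and hence the per-element sorting cost. The claim ``each splitter is touched by only $\Oh(\log n_i)$ comparisons in total'' is therefore unachievable in your setup for $n_i \gg \log^2 n_i$.

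The paper's proof resolves exactly this tension by abandoning the few-splitters-many-buckets idea and making two structural choices you don't. First, it uses $\Theta(n_i/\log n_i)$ pivots (the middle elements of $S_k$, produced by an alternating cascade of $\varepsilon$-halvers), so that distributing the $n_i$ pruning comparisons evenly gives each pivot only $\Oh(\log n_i)$ hits; this also means the pruning is deliberately coarse (each survivor is compared against just one left-pivot and one right-pivot, producing a binary $\markL/\markR$ marking, not a $|S|{+}1$-way bucketing). Second, and critically, the pivot sets $S_k\tin{i}$, $\overline{S}_k\tin{i}$ are \emph{removed from the recursion entirely} once used --- they have spent their fragile budget and are stashed aside, with all stashed sets merged and sorted once at the very end. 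Your proposal feeds splitters (or the surviving input, including old splitters) back into further rounds, which is exactly where the budget accounting breaks. If you want to repair your approach along its own lines, you would need to make the splitter set roughly $\Theta(n_i/\log n_i)$ large, select it by a method other than median-of-medians recursion (the paper uses $\varepsilon$-halvers, which require expander-based constructions), drop the idea of fine-grained bucketing in favor of coarse marking, and add the ``stash and re-sort at the end'' mechanism --- at which point you have essentially reconstructed the paper's proof.
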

\begin{proof}
Below, we give an algorithm for the median problem. By simple padding of
the input, median solves the \selection{t} problem for arbitrary
$t \neq \frac{n}{2}$.

A central building block from the AKS sorting network is an
\emph{$\varepsilon$-halver}. An $\varepsilon$-halver approximately
performs a partitioning of an array of size~$n$ into the smallest half
and the largest half of the elements. More precisely, for any
$m \le n/2$, at most~$\varepsilon n$ of the $m$ smallest elements will
end up in the right half of the array, and at most~$\varepsilon n$ of
the $m$ largest elements will end up in the left half of the
array. Using expander graphs, a comparator network implementing an
$\varepsilon$-halver in constant depth can be
built~\cite{aks-halvers,focs92*686}. We use the corresponding comparison-based algorithm of constant fragile complexity.

We make the convention that when using $\varepsilon$-halvers, the larger elements
placed at the right half and the smaller elements are placed at the left half. 
We first use an $\varepsilon$-halver on the input array~$S$ of length
$n$, dividing it into two subarrays of length~$n/2$. 
Let's call the right half $S_1$. 
As an $\varepsilon$-halver does an ``approximate'' partitioning, $S_1$ will contain
$l$ of the smallest $n/2$ elements by mistake, however, it will contain 
$n/2-l$ of the $n/2$ largest with $l \le \varepsilon n/2$. 
From this point forward, we apply $\varepsilon$-halvers but alternate between picking the right and then left half.
In particular, we apply an $\varepsilon$-halver to $S_i$ (starting from $i=1$), and set $S_{i+1}$ to be the 
left half (resp. right half) of the resulting partition if $i$ is odd (resp. if $i$ is even). 
See Figure~\ref{fig:recursiveHalving}.

We stop the process after $k=2\lfloor\frac{\log\log n}{2} \rfloor $ steps; we choose an even $k$ to simplify
the upcoming discussions. 
This results in a set  $S_k$ of size $\Theta(n/\log n)$. 
We sort $S_k$ using an AKS-based sorting algorithm, which takes $\Oh(n)$ work and
has \frag $\Oh(\log n)$, and we then extract the middle $|S_k|/2$ of
these sorted elements as the set $R_P$ (``right pivots''). 

We claim the rank of every element in $R_P$ is between $(1+\alpha)\frac{n}{2}$
and $(2-\alpha)\frac{n}{2}$ for some absolute constant $0 < \alpha < 1$.
To prove this claim, we use the properties of $\varepsilon$-halvers. 
Consider $S_i$: we partition $S_i$ into two sets, and select $S_{i+1}$ to be either
the left or the right half, depending on the parity of $i$.
Assume $S_{i+1}$ is selected to be the right half (similarly, left half). 
We mark an element of $S_{i+1}$ as a \emph{left mistake} (similarly, a \emph{right mistake}), 
if it is among the $|S_i|/2$ smaller (similarly, larger) elements of $S_i$. 
We say an element of $S_{i+1}$ is \emph{good}.

Now assume $\varepsilon < 1/64$.
We can now use simple induction to show the following:
If $S_i$ contains $t_\ell$ left mistakes and $t_r$ 
right mistakes, then the left mistakes are the $t_\ell$ smallest element of $S_i$
and the right mistakes are the $t_r$ largest elements in $S_i$ and furthermore, 
$S_i$ contains at most $2\varepsilon|S_i|$ left mistakes and 
$2\varepsilon |S_i|$ right mistakes. 

These claims are obviously true for $S_1$ and   $S_2$ thus assume the hold for $S_i$;
we would like to show that they also hold for $S_{i+2}$. 
W.l.o.g, assume $S_{i+1}$ is selected to be the right half after partitioning $S_i$ using
an $\varepsilon$-halver. 
Consider the sorted order of $S_i$ and in particular, the  set $L$ containing $|S_i|/2$ 
largest elements in $S_i$. 
As $\varepsilon < 1/64$, it follows that $2\varepsilon |S_i|< |S_i|/2$ and as a result, 
$L$ contains no left mistakes by our induction hypothesis since all the left mistakes are among the 
$|S_i|/2$ smallest elements of $|S_i|$. 
However, $L$  contains all the up to $2\varepsilon |S_i|$ right mistakes. 
By properties of an $\varepsilon$-halver, $S_{i+1}$ has at most $\varepsilon|S_i|$ elements
that do not belong in $L$. 
Thus, $S_{i+1}$ contains at most $\varepsilon|S_i|$ left mistakes and 
at most $2\varepsilon|S_i|$ right mistakes.
Crucially, notice that $S_{i+2}$ is obtained by using an $\varepsilon$-halver on $L$ and
selecting \emph{the left} half of the resulting partition. 
A similar argument now shows that $S_{i+2}$ has $\varepsilon|S_i| =2 \varepsilon|S_{i+1}|$ left mistakes
but $\varepsilon|S_{i+1}|$ right mistakes.
This concludes the inductive proof of our claims. 
Observe that, as a corollary, at least $1-4/\varepsilon > 1/2$ fraction of the elements
in $S_i$ are good. 

\begin{figure}[t]
  \centering
  \includegraphics[width=0.9\textwidth]{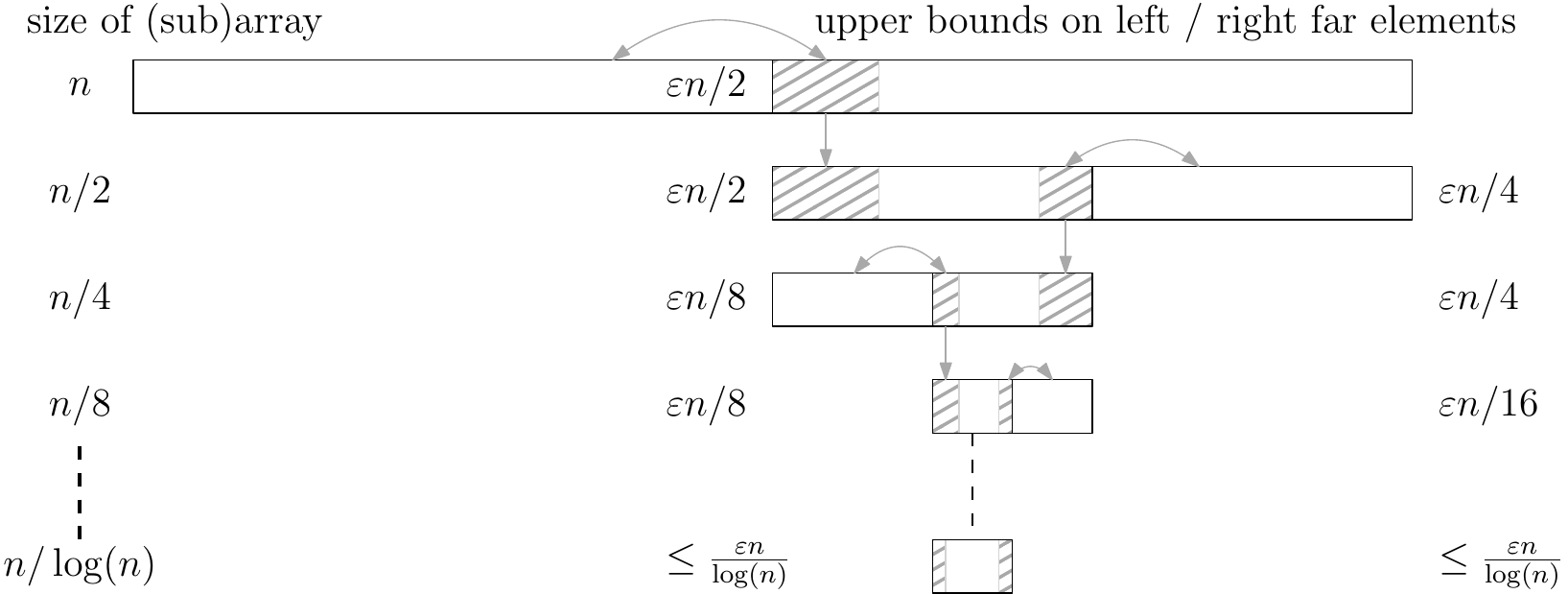}
  \caption{Illustration of the alternating division process using $\varepsilon$-halvers.}
  \label{fig:recursiveHalving}
\end{figure}

After $i$ steps, we have an array part~$S_i$ of length~$n_i=n/2^i$ with
at most $2\varepsilon n_i$ left mistakes and at most $2\varepsilon n_i$ right
mistakes. 
For a moment assume there are no mistakes in any of the partitioning steps done by 
$\varepsilon$-halvers. 
An easy inductive proof implies that in this case, the rank of the elements in $S_i$, for even $i$, are between 
$A_i = (1+\frac{4^{i-1}-1}{3 4^{i-1}})\frac{n}{2}$ and 
$B_i = (1+\frac{4^{i-1}+3}{3 4^{i-1}})\frac{n}{2}$. 
The claim is clearly true for $i=2$ and it can be verified for $i+2$:
there are $\frac{n}{2} \frac{1}{4^{i-1}}$ elements between the aforementioned ranks
and thus, the ranks of the elements in $S_{i+2}$ will between
$A_i+\frac{n}{2} \frac{1}{4^{i}}$  and 
$A_i+ 2\frac{n}{2} \frac{1}{4^{i}}$ (partition the range between $A_i$ and $B_i$ into four equal chunks
and pick the second chunk).
Now observe that $A_i +\frac{n}{2} \frac{1}{4^{i}} = (1+\frac{4^{i-1}-1}{3 4^{i-1}} + \frac{1}{4^i})\frac{n}{2}
= (1+\frac{4^i-1}{3 4^i})\frac{n}{2}$ and $A_i +2\frac{n}{2} \frac{1}{4^{i}} =
(1+\frac{4^{i-1}-1}{3 4^{i-1}} + 2\frac{1}{4^i})\frac{n}{2}
= (1+\frac{4^i+3}{3 4^i})\frac{n}{2}$.

However, $\varepsilon$-halvers will likely make plenty of mistakes. 
Nonetheless, observe that every mistake made by an $\varepsilon$-halver can only change the rank of a good element $x$
in $S_k$ by one: each mistake either involves placing an element that is actually smaller than $x$ to the right of $x$
or an element that is larger than $x$ to the left of $x$. 
Furthermore, observe that the total number of elements marked as a mistake is bounded by a geometric series:
\[ 
    \sum_{i=1}^\infty 4\varepsilon \frac{n}{2^i} \le 4 \varepsilon n < \frac{n}{16}.
\]

\noindent This in turn implies that the rank of the good elements in $S_k$ is between
\begin{align}
    (1+\frac{4^{k-1}-1}{3 4^{k-1}})\frac{n}{2}-\frac{n}{16} = (1+ \frac{1}{3}-\frac{1}{34^{k-1}} - \frac{1}{8})\frac{n}{2}  > (1+\frac{1}{8})\frac{n}{2}\label{eq:left}
\end{align}
and 
\begin{align}
    (1+\frac{4^{k-1}+3}{3 4^{k-1}})\frac{n}{2}+\frac{n}{16} = (1+ \frac{1}{3}+\frac{1}{4^{k-1}} + \frac{1}{8})\frac{n}{2} < (1+\frac{7}{8})\frac{n}{2}\label{eq:right}.
\end{align}

Remember that $S_k$ contains $\Theta(n/\log n)$ good elements.
We might not know exactly which elements of $S_k$ are good but we know that
the middle $|S_k|/2$ elements are certainly good.
As discussed, we select these elements as our pivot set $R_P$.
We now compare all elements~$y$ of~$S$ with some element~$x$ from $R_P$.
We evenly distribute the comparisons such that every element of 
$R_P$ gets compared against at most $\lceil n/|R_P|\rceil = O(\log n)$
elements.
If $x \ge y$, we mark $y$ by~\markR. An element
marked by~\markR has rank in~$S$ of at least $(1+1/8)n/2$ by Eq.~\ref{eq:left}, and thus it is guaranteed 
not to be the median. 
By Eq.~\ref{eq:right}, we mark at least $n/16$ elements by~\markR.

We then perform a symmetric process in the first half of~$S$, leading to
a set $\overline{S}_k$ of size $\Theta(n/\log n)$ which has a subset $L_P$ (for
``left pivots'') of $\Theta(n/\log n)$ elements whose rank in~$S$ are 
between $\frac{1}{8}\cdot \frac{n}{2}$ and $\frac{7}{8}\cdot \frac{n}{2}$. 
We compare all elements~$y$ of~$S$ with
some element~$x'$ from $L_P$ and as before distribute the comparisons evenly among the elements of  
$L_P$.
If $y < x'$, we mark $y$ by~\markL. 
And as before, an element marked by~\markL is smaller than the median and 
at least $n/16$ elements are marked by~\markL.

We can discard an element of mark~\markL together with an element of mark~\markR
as doing so will not change the median among the remaining elements, as we are discarding
an element larger than the median and an element smaller than the median. 
As a result, we can discard $n/8$ elements.

At first glance, it might feel like we are done. However, we cannot safely recurse
on the remaining elements as the elements in $S_k$ and $\overline{S}_k$ have already incurred too many
comparisons. 
This results in a slight complication but it can be averted as follows.

At the beginning we have an input $S$ of $n$ elements. 
Define $S\tin{0} = S$ and $n\tin{0}=n$ as the top level of our recursion.
At depth $i$ of the recursion, we have a set $S\tin{i}$ containing $n\tin{i}$ elements. 
We then select sets $S\tin{i}_k$ and $\overline{S}\tin{i}_k$ (note that $k$ is also a function of
$i$ but to reduce the clutter in the notation, we just use $k$) and discard
$n\tin{i}/8$ elements of $S\tin{i}$ using the above mentioned procedure. 
The set $S\tin{i+1}$ is defined as the set containing the remaining elements, \emph{excluding}
the elements of the sets $S\tin{i}_k$ and $\overline{S}\tin{i}_k$; the elements of these sets are not
pruned but they are put aside and we will handle them later.

We stop the recursion as soon as we reach a recursion depth $j$ with  $n\tin{j} \le n/\log n$. 
At this point, we simply union all the elements in the sets $S\tin{i}_k$ and $\overline{S}\tin{i}_k$,
$1 \le i < j$ and $S\tin{j}$ into a final set $\S$, sort $\S$ and then report the median of $\S$. 
As discussed, pruning elements does not change the median and thus the reported median is correct.
Note that since we always prune a fraction of the elements at each recursive step, we have 
$j = O(\log\log n)$. 

We now analyze the \frag. Invoking an $\varepsilon$-halver incurs $\Oh(1)$
comparisons on each element, so a full division process incurs
$\Oh(\log\log n)$ comparisons on each element of $S\tin{i}$.
Over $j = O(\log \log n)$ recursions, this adds up to $O( (\log\log n)^2) = O(\log n)$. 
Next, the elements in $S\tin{i}_k$ and $\overline{S}\tin{i}_k$ are sorted but they are not sent to the
next recursion step so this is simply a one-time cost. 
These elements suffer an additional $\Theta(\log n)$ comparisons during the pruning phase
(to be precise, a subset of them that are selected as ``left pivot'' or ``right pivot'' elements)
but this cost is also only suffered once.
Finally, the remaining elements participate in a final sorting round. 
Thus, each element participates in $O(\log n)$ comparisons in the worst-case.

Thus, it remains to analyze the \work. 
Invoking an $\varepsilon$-halver incurs a linear number of comparisons. 
However, as we prune at least a fraction of the elements at each step,
we have $n\tin{i} \le n (\frac{7}{8})^i$. 
Thus, the total amount of work done by $\varepsilon$-halvers is linear. 
The same holds for the sorting of $S\tin{i}_k$ and $\overline{S}\tin{i}_k$
as their sizes is bounded by $O(n\tin{i}/\log n\tin{i})$ which forms a geometrically decreasing series. 
Finally, observe that $|\S| = O(n/\log n)$ since
$S\tin{j} = O(n/\log n)$ and 
$\sum_{i=0}^j |S\tin{i}_k| + |\overline{S}\tin{i}_k| = O(n/\log n)$.
his implies, we can sort $\S$ in $\Oh(n)$ work as well. 
\end{proof}

\begin{corollary}
There is a deterministic algorithm for partition which performs $\Oh(n)$
\work and has $\Oh(\log n)$ \frag.
\end{corollary}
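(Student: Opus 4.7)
The plan is to reuse the deterministic selection algorithm of Theorem~\ref{thm:detMed} and observe that it already produces a partition as a byproduct, so no additional comparisons are needed beyond those already charged for selection.

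First I would argue that every input element is classified as either ``smaller than the median'' or ``greater than the median'' by the end of the selection algorithm's execution. During each recursive round $i$, each element of $S\tin{i}$ ends up in exactly one of three situations: (a) it belongs to one of the pivot sets $S\tin{i}_k$ or $\overline{S}\tin{i}_k$ and is thereby transferred into the global set $\S$; (b) it is marked \markL\ or \markR\ and subsequently paired and discarded; or (c) it remains unmarked (or marked but unpaired) and is carried into $S\tin{i+1}$. Since the recursion halts at level $j$ and $S\tin{j}$ is merged into $\S$, every element eventually ends in one of the two terminal fates: it is marked-and-discarded in some round, or it lies in $\S$. For category (b), the properties of $R_P$ and $L_P$ established in the proof of Theorem~\ref{thm:detMed} (their ranks lie strictly on one side of the median) guarantee that a \markL\ element is smaller than $m$ and a \markR\ element is larger than $m$. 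For elements in $\S$, sorting $\S$ at the end of the selection algorithm reveals their rank relative to $m$ directly.

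Thus a simple bookkeeping layer on top of the selection algorithm --- remembering, for each element, whether it was ever marked \markL\ or \markR, or storing its position in the sorted $\S$ --- yields a complete partition of the input into elements below and elements above the median, without performing a single extra comparison. The work and \frag\ bounds therefore carry over unchanged: $\Oh(n)$ and $\Oh(\log n)$ respectively.

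Finally, to handle an arbitrary rank $t$ rather than just the median, I would apply a standard padding argument: append $|n-2t|$ dummy elements, set to $-\infty$ if $t<n/2$ and to $+\infty$ if $t>n/2$, so that the $t$-th smallest element of the original input becomes the median of the augmented instance of size $\Theta(n)$. Running the above median-partition procedure on the padded input and discarding the dummies yields a \partition{t} algorithm with the same asymptotic bounds. The only subtle point I expect in writing this out carefully is verifying case (c) above --- that an element that is never marked in any round must indeed end up in some pivot set or in the final residual $S\tin{j}$, so that no element can silently escape the classification; this follows directly from the way $S\tin{i+1}$ is defined as the non-pivot, non-discarded residue of $S\tin{i}$.
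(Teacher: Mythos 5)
Your proposal is correct and takes essentially the same approach as the paper's own proof: observe that the deterministic selection algorithm already classifies every element via the \markL/\markR\ marks or its position in the sorted terminal set $\S$, so a partition falls out for free, and then pad with dummies to reduce \partition{t} to the median case. The only addition in your write-up is the explicit case analysis confirming that no element escapes classification, which the paper leaves implicit but which is a reasonable point to spell out.
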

\begin{proof}
At the end of the \select algorithm, the set of elements smaller (larger) than the median is the union of the respective filtered sets (sets $\markL$ and $\markR$ in the proof in the full version of the paper~\cite{DBLP:journals/corr/abs-1901-02857}) and the first (last) half of the sorted set in the base case of the recursion.
Again, simple padding generalizes this to \partition{t} for arbitrary $t \neq \frac{n}{2}$.
\end{proof}
\subsection{Deterministic selection via comparator networks}
\label{sec:deterministicSelectionNetwork}

In this section, we discuss the \selection{t} problem in the setting of comparator networks. We present an upper and a matching lower bound on the size of a comparator network solving the \mbox{\selection{t}} problem. In the next section, we consider the problem in the setting of comparison-based algorithms and give an algorithm for selection with the same fragile complexity as in the case of comparator networks, but with total work that is asymptotically smaller. Combined, this shows a separation in power between the two models.

To begin, observe that the \selection{t} problem can be solved by sorting the input. Therefore, the \selection{t} problem can be solved using a comparator network of size $\Oh (n \log n)$ and depth $\Oh(\log n)$ by using the AKS sorting network~\cite{aks}. Consequently, the \selection{t} problem can be solved with $\Oh(\log n)$ \frag and $\Oh(n \log n)$ \work.

Next, we show that the size of the \selection{t} network using the AKS network is asymptotically tight. Before we present the lower bound theorem, we need to introduce some notation and prove two auxiliary lemmas. 
 
Given a set $X$, we say two elements $x_i, x_j \in X$ are {\em rank-neighboring} if $|\rank{x_i} - \rank{x_j}| = 1$. We say a permutation $\hat{\pi}(X)$ is {\em rank-neighboring} if an ordered sequence $X$ and $\hat{\pi}(X)$ differ in only two elements and these two elements are rank-neighboring.
Observe that any permutation $\pi(X)$ is a composition of some number of rank-neighboring permutations.

We define the {\em signature} of an ordered sequence $X$ with respect to an integer $a$ to be a function $\sig: \Z \times \Z^{n-1} \rightarrow \{0, 1\}^{n-1}$, such that $\sig(a, (x_1, \dots, x_{n-1} )) = (y_1, \dots, y_{n-1})$ and for all $1\le i\le n-1$: 
\begin{eqnarray*}
y_i = \left\{\begin{array}{ll}
    0 &\mbox{ if } x_i \le a \\
    1 &\mbox{ if } x_i  > a 
  \end{array}
\right.
\end{eqnarray*}

\begin{lemma}\label{lemma:rank-neighbor}
For any totally ordered set $X$ of size $n$, any \selection{t} network $\networknt$, and for any rank-neighboring permutation $\hat{\pi}$: $\sig(\networknt(X))=\sig(\networknt(\hat{\pi}(X)))$. 
\end{lemma}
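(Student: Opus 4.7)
The plan is to apply the $0/1$-principle for comparator networks---namely, that for any monotone function $\phi:\Z\to\{0,1\}$ and any comparator network $\network$ we have $\phi(\network(X))=\network(\phi(X))$ componentwise---to well-chosen thresholds around the pivot $a$. Let $v_1<v_2<\cdots<v_n$ be the sorted values of $X$, so $a=v_t$, and for a threshold $c$ write $\phi_c(v)=[v>c]$. Suppose $\hat\pi$ swaps two rank-neighboring entries of ranks $s$ and $s+1$. I will case split on the relation of $s$ to $t$.

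If $s+1\le t$ or $s\ge t+1$ (the two swapped values lie on the same side of $v_t$), then $\phi_{v_t}(X)=\phi_{v_t}(\hat\pi(X))$. By the $0/1$-principle applied to $\phi_{v_t}$, the network produces the same bit pattern, so $\phi_{v_t}(\networknt(X))=\phi_{v_t}(\networknt(\hat\pi(X)))$ wire-by-wire; this is exactly the signature equality, since the signature encodes the predicate $[\cdot>v_t]$.

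The hard case is $s=t$: the swap exchanges $v_t$ and $v_{t+1}$, and $\phi_{v_t}$ really does change, so the one-threshold argument breaks. Here I would instead apply the $0/1$-principle at the two neighboring thresholds $v_{t-1}$ and $v_{t+1}$. Under $\phi_{v_{t-1}}$ both of $v_t,v_{t+1}$ map to $1$, and under $\phi_{v_{t+1}}$ both map to $0$, so each of these $0/1$-encodings of the input is invariant under the swap, and so are the corresponding outputs. Subtracting the two output bit patterns identifies, wire by wire, exactly the wires that carry a value in $\{v_t,v_{t+1}\}$; this set of wires is therefore the same for $\networknt(X)$ and $\networknt(\hat\pi(X))$. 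Since wire $0$ carries $v_t$ in both (the network is a selection network), the unique non-zero wire in that set---which must carry $v_{t+1}$---is forced to be the same wire $j$ in both outputs. On every wire $i\in\{1,\dots,n-1\}\setminus\{j\}$, the value is simultaneously $\le v_{t-1}$ or simultaneously $\ge v_{t+2}$ in the two outputs, so its $[\,\cdot>v_t]$ bit agrees; on wire $j$ both outputs carry $v_{t+1}$ and hence signature bit $1$. The signatures thus coincide on wires $1,\dots,n-1$.

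The degenerate boundaries ($t=1$ or $t=n$) are handled separately: in case $s=t$ with $t=1$ only the threshold $v_{t+1}$ is available but already suffices, since every non-pivot value is then $>v_1$ and the signature is the all-ones string in both outputs; with $t=n$ case $s=t$ cannot arise. The main obstacle is case $s=t$, where the threshold at the pivot itself flips; the trick above---bracketing $\{v_t,v_{t+1}\}$ by the two adjacent thresholds and using that wire $0$ is locked to $v_t$---is what forces the wire position of $v_{t+1}$ to be preserved.
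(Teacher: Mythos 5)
Your proof is correct, and it takes a genuinely different route from the paper's. The paper argues \emph{operationally}: it traces the two swapped values through the network, observes that because they are rank-neighbors no other input can distinguish them, and concludes (by an induction over the comparator layers that the paper leaves implicit) that the two executions produce identical outputs on every wire except possibly the two wires that carry the swapped pair; it then shows those two wires cannot flip a signature bit either because both values lie on the same side of the selected value or because one of them is the selected value, pinned to wire~$0$. You instead argue \emph{algebraically}, reducing the claim to the well-known fact that comparator networks commute with monotone maps $\phi\colon\Z\to\{0,1\}$ --- the engine behind the $0/1$-principle. When the swap stays on one side of $v_t$, a single threshold $\phi_{v_t}$ is already invariant under the swap and you are done in one line. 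In the genuinely hard case $s=t$ (swap of $v_t$ and $v_{t+1}$), you bracket the pair with the two adjacent thresholds $\phi_{v_{t-1}}$ and $\phi_{v_{t+1}}$, both invariant, and intersect the resulting $0/1$ output patterns to pin down the set of output wires carrying $\{v_t,v_{t+1}\}$; the selection property (wire~$0$ carries $v_t$) then forces the location of $v_{t+1}$ and the signature equality follows wire by wire. Comparatively, your argument substitutes a single standard lemma for the paper's path-tracing induction and is in that sense tighter; it does require a three-way case split on $s$ versus $t$ plus boundary checks at $t\in\{1,n\}$, whereas the paper's argument is uniform and proves the slightly stronger fact that the two output vectors differ in at most two coordinates. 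Both buy the same conclusion.
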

\begin{proof}
Consider the two inputs $x_i$ and $x_j$ in which $X$ and $\hat{\pi}(X)$ differ. During the computation the input values traverse the network until they reach the outputs. During the computation of the network $\networknt(X)$ element $x_i$ (resp., $x_j$) starts at the $i$-th (resp., $j$-th) input and reaches the $i'$-th (resp., $j'$-th) output. During the computation of the network $\networknt(\hat{\pi}(X))$, element $x_i$ (resp., $x_j$) starts at the $j$-th (resp., $i$-th) input. Let us determine which outputs they reach.

Consider the two paths $P_i$ and $P_j$ that $x_i$ and $x_j$, respectively, traverse during the computation of $\networknt(X)$. 
Since $\hat{\pi}$ is a rank-neighboring permutation, the outputs of every comparator $C$ are the same for $\networknt(X)$ and $\networknt(\hat{\pi}(X))$ except for the (set of) comparator(s) $C^*$, whose two inputs are $x_i$ and $x_j$. 
Thus, throughout the computation of $\networknt(\hat{\pi}(X))$,  $x_i$ and $x_j$ only traverse the edges of the paths $P_i \cup P_j$, i.e., the outputs of $\networknt(\hat{\pi}(X))$ are the same as the outputs of $\networknt(X)$ everywhere except for, possibly, at the $i'$-th and $j'$-th output. 

If $\rank{x_i} < t$ and $\rank{x_j} < t$, or $\rank{x_i} > t$ and $\rank{x_j} > t$, the signatures of these two outputs are the same. Consequently, $\sig(\networknt(X))=\sig(\networknt(\hat{\pi}(X)))$.  
Otherwise, without loss of generality, let $\rank{x_i} = t$ (the case of $\rank{x_j} = t$ is symmetric). 
Then $i' = 0$, and $\networknt(X)^{0} = \networknt(\hat{\pi}(X))^{0} = x_i$. Then, $\networknt(X)^{j'} = \networknt(\hat{\pi}(X))^{j'} = x_j$, and it follows that $\sig(\networknt(X))=\sig(\networknt(\hat{\pi}(X)))$.
\end{proof}

\begin{lemma}\label{lemma:rank-network}
An \selection{t}  network can be turned into an \partition{t} network.
\end{lemma}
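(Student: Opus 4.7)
The plan is to show that after any input is passed through an $(n,t)$-selection network~$\networknt$, the set $S \subseteq \{0,\dots,n-1\}$ of output positions holding the $t$ smallest input values is the \emph{same} for every input permutation. Once this is established, an $(n,t)$-partition network is obtained from $\networknt$ by a pure relabelling of the output wires, namely by mapping $S$ bijectively onto $\{0,\dots,t-1\}$ and its complement onto $\{t,\dots,n-1\}$. Such relabelling introduces no new comparators, so the resulting network has identical size and depth to $\networknt$.

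To identify such a set~$S$, I would iterate Lemma~\ref{lemma:rank-neighbor}. By the observation following the definition of rank-neighbouring permutations, every permutation of $X$ is a composition of rank-neighbouring permutations; applying the lemma along any such decomposition yields $\sig(\networknt(X)) = \sig(\networknt(\pi(X)))$ for every permutation~$\pi$ of~$X$. Because the behaviour of a comparator network depends only on the relative order of the input values, the resulting signature $\sig^* = (y_1,\dots,y_{n-1})$ is in fact the same for every totally ordered set of size~$n$ and hence depends only on the network~$\networknt$. Since output~$0$ always carries the $t$-th smallest input (by the defining property of a selection network), interpreting $\sig^*$ against that pivot encodes exactly which of the remaining $n-1$ output positions receive an element smaller than the $t$-th smallest.

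Now set $S = \{0\} \cup \{i \in \{1,\dots,n-1\} : y_i = 0\}$. Evaluating on the sorted input shows that $S$ contains position~$0$ together with the $t-1$ positions that must absorb the $t-1$ strictly smaller values, so $|S|=t$, and the outputs at positions in~$S$ are exactly the $t$ smallest inputs on every run of the network. The relabelling described in the first paragraph then turns $\networknt$ into the desired $(n,t)$-partition network. The only conceptual obstacle is establishing the input-independence of~$S$, which is precisely what Lemma~\ref{lemma:rank-neighbor} combined with the fact that rank-neighbouring transpositions generate the symmetric group delivers; the remainder is a short counting argument and a bookkeeping step on the output indices.
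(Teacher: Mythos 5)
Your proof is correct and is essentially the paper's own argument: both iterate Lemma~\ref{lemma:rank-neighbor} over a decomposition of an arbitrary permutation into rank-neighbouring ones to conclude that the output signature is input-independent, then rewire the outputs so that the positions carrying the $t$ smallest elements become the first $t$ outputs. The extra bookkeeping you supply (making $S$ explicit and checking $|S|=t$) is implicit in the paper and does not change the substance.
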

\begin{proof}
Since every permutation $\pi(X)$ can be obtained from $X$ by a sequence of rank-neighboring permutations, it follows from Lemma~\ref{lemma:rank-neighbor} that $\sig(\networknt(\pi(X))) = \sig(\networknt(X))$, i.e., for every permutation of the inputs in the \selection{t} network, the same subset of outputs carry the values that are at most $t$. 
Thus, the \partition{t} network can be obtained from the \selection{t} network by reordering (re-wiring) the outputs such that the ones with signature $0$ are the first $t$ outputs. 
\end{proof}

\begin{theorem}
An \selection{t} network for any $t \le n/2$ has size $\Omega((n-t) \log (t+1))$.
\end{theorem}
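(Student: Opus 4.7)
The plan is to apply Lemma~\ref{lemma:rank-network} to reduce the question to lower-bounding the size of any $(n,t)$-partition network, since the lemma converts a selection network into a partition network without adding comparators. Hence I will work with $(n,t)$-partition networks throughout.

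For each ``top'' output position $i\in\{t,\dots,n-1\}$, let $L_i\subseteq\{0,\dots,n-1\}$ be the set of input positions whose value can appear at output $i$ under some input permutation. I first claim $|L_i|\ge t+1$: otherwise, assigning the $|L_i|\le t$ smallest values to inputs in $L_i$ and larger values to the remaining inputs would force the value at $i$ to come from $L_i$ and thus be among the $t$ smallest, contradicting the requirement that outputs $\{t,\dots,n-1\}$ carry the $n-t$ largest values. Tracing backward from $i$ in the network DAG---where each wire has a unique source and each comparator has in-degree~$2$---the backward cone $D_i$ unfolds into a binary structure whose at least $t+1$ distinct input leaves must be combined; this forces $D_i$ to contain at least $t$ distinct comparators and to have depth at least $\lceil\log_2(t+1)\rceil$.

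With these per-output observations in hand, the crux is to aggregate them into a global size bound. My plan is a depth-weighted charging argument. For each top output $i$, assign $\log_2(t+1)$ units of charge distributed along a longest input-to-$i$ path in $D_i$ (say, one unit per comparator on such a path), for a total of $\Omega((n-t)\log(t+1))$ units over all $n-t$ top outputs. The goal is to show that every individual comparator absorbs only $O(1)$ units of charge in total, which immediately yields $s=\Omega((n-t)\log(t+1))$.

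The main obstacle is precisely this last aggregation: a comparator near the network's inputs may belong to the backward cones of many top outputs, so a naive charging overcounts. To control this, I would exploit the fact that each comparator has exactly two output wires, each leading to a unique downstream destination; consequently the ``forward tree'' of network outputs reachable from a comparator at depth~$d$ from the outputs contains at most $2^d$ leaves. Balancing this exponential growth against the logarithmic depth required per top output---perhaps via a potential that decays exponentially in backward depth, or by an induction on the network depth that pairs each top output with a ``fresh'' longest-path comparator not yet used by shallower outputs---should make the per-comparator charge truly $O(1)$. Formalizing this accounting, and checking that the constants come out right so the bound is $\Omega((n-t)\log(t+1))$ rather than something weaker like $\Omega(n-t+\log(t+1))$, is where I expect the real technical work to lie.
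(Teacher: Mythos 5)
Your first move---applying Lemma~\ref{lemma:rank-network} to reduce the claim to a size lower bound for \partition{t} networks---is exactly what the paper does. From there, though, the paper simply cites Alekseev~\cite{alekseev:selection-69} for the $\Omega((n-t)\log(t+1))$ bound on \partition{t} networks, while you set out to re-derive that bound from scratch. That is where the proposal has a genuine gap.

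Your per-output observations are fine: every top output position $i\ge t$ must be backward-reachable from at least $t+1$ inputs (else one could force an element of rank $\le t$ to appear there), and since tracing backward through a comparator doubles the number of wires under consideration at most, a backward cone reaching $t+1$ inputs has depth at least $\lceil\log_2(t+1)\rceil$. The problem is the aggregation. You want to charge $\Theta(\log(t+1))$ units per top output along a longest path in its backward cone and then argue each comparator absorbs $O(1)$ total charge, but you never establish that bound, and the heuristic you offer (a comparator at backward depth $d$ reaches at most $2^d$ outputs, so exponential decay of a potential should work) does not obviously give it. The $2^d$ fan-out bound says a \emph{shallow} comparator cannot sit in too many cones, but the charge you place on a comparator depends on being on a \emph{longest} path, and comparators close to the inputs can be on the longest path of many cones simultaneously; nothing in the sketch prevents a single early comparator from being charged $\omega(1)$ times across the $n-t$ cones. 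You acknowledge this is ``where the real technical work'' lies, which is accurate---but it means the central step of the proof is missing, and it is not clear the charging scheme as described can be rescued without a fundamentally new idea (Alekseev's original argument is not a longest-path charging argument). Either complete the accounting rigorously, or do what the paper does and invoke Alekseev's theorem directly after the reduction.
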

\begin{proof}
Alekseev~\cite{alekseev:selection-69} showed the $\Omega((n-t)\log(t+1))$ lower bound for the size of an \partition{t} network, and, by Lemma~\ref{lemma:rank-network}, every \selection{t} network also solves the \partition{t} problem.
\end{proof}

\begin{corollary}
A comparator network that finds the median of $n$ elements has size $\Omega(n \log n)$.
\end{corollary}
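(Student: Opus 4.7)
The plan is to derive this corollary as a direct instantiation of the preceding theorem. Finding the median is exactly the \selection{t} problem with $t = \lfloor n/2 \rfloor$, so any comparator network that finds the median is by definition an \selection{t} network for this value of $t$. Since $t \le n/2$, the hypothesis of the theorem is satisfied.

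Plugging $t = \lfloor n/2 \rfloor$ into the bound $\Omega((n-t)\log(t+1))$ gives
\[
\Omega\bigl((n-\lfloor n/2 \rfloor)\log(\lfloor n/2 \rfloor + 1)\bigr) = \Omega\bigl((n/2)\log(n/2)\bigr) = \Omega(n \log n),
\]
which is precisely the claimed lower bound. There is no real obstacle here beyond verifying that the median problem fits the \selection{t} framework for this specific $t$, and absorbing constant factors in the asymptotic notation. The nontrivial content all lies in the preceding theorem (via Alekseev's lower bound and Lemma~\ref{lemma:rank-network}, which shows that \selection{t} networks are as hard as \partition{t} networks); the corollary itself is a one-line specialization.
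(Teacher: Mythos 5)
Your proof is correct and is exactly the intended argument: the paper states the corollary with no explicit proof because it is a direct instantiation of the preceding theorem at $t = \lfloor n/2\rfloor$, which is precisely what you do.
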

\subsection{Randomized selection}
\label{sec:median}

\fbox{\scalebox{0.98}{\begin{minipage}{\textwidth}
\begin{algorithmic}[1]
  \Procedure{\randmedname}{$X=\{x_1, \ldots, x_n\}, k(\cdot), d(\cdot)$}
    \Comment{Sampling phase}  
    \State Randomly sample $k$ elements from $X$, move them into an array $S$, sort $S$ with AKS
    \State Choose appropriate~$b$ to distribute $S$ into buckets $L_b, \ldots L_1, C, R_1, \ldots,  R_b$ such that:
    \State \hspace{1em} $n_0 = 2 \sqrt{k \log n}$, $n_1 = 3 \sqrt{k \log n}$, $n_i = d \cdot n_{i-1}$
    \State \hspace{1em} $C = S[ k/2  {-} n_0     : k/2 {+} n_0 ]$  median candidates
    \State \hspace{1em} $L_i = S[k/2 {-} n_i     : k/2 - n_{i-1}]$ buckets of elements presumed smaller than median
    \State \hspace{1em} $R_i = S[k/2 {+} n_{i-1} : k/2 {+} n_{i}]$ buckets of elements presumed larger than median
  
    \Comment{Probing phase}
    \For{$x_i \in X \setminus S$ \text{in random order}}
      \For{$j \in [b{-}1, \ldots, 1]$ \text{in order}}
        \State $x_A \gets $ arbitrary element in $L_j$ with fewest compares
        \State $c \gets 1$ \text{if $x_A$ is marked else} $2$
		\Comment{Pivots added in probing (= weak) are marked}
        \If{$x_i < x_A$}
          \State add $x_i$ as new pivot to $L_{j+c}$ if $j < b-c$ and mark it, \\\ \hspace{7em} otherwise discard $x_i$ by inserting it into $L_b$
          \State continue with next element $x_{i+1}$
        \EndIf
        
        \State $x_B \gets $ arbitrary element in $R_j$ with fewest compares
        \State $c \gets 1$ \text{if $x_B$ is marked else} $2$
        \If{$x_i > x_B$}
          \State add $x_i$ as new pivot to $R_{j+c}$ if $j < b-c$ and mark it, \\\ \hspace{7em} otherwise discard $x_i$ by inserting it into $R_b$
          \State continue with next element $x_{i+1}$
        \EndIf
      \EndFor
    
      \Comment{By now it is established that $S[k/2 - n_1] \le x_i \le S[k/2 + n_1]$}
      \State add $x_i$ as a median candidate to $C$
    \EndFor
  
    \If{$\max(\sum_i |L_i|, \sum_i |R_i|) > n/2$}
      \Comment{Partitioning too imbalanced $\Rightarrow$ median not in $C$}
      \label{alg:rand-median-fallback}
      \State \Return \Call{DetMedian}{X}
    \EndIf
  
    \If{$|C| < \log^4 N$ where $N$ is the size of the initial input}
      \State sort $C$ with AKS and return median
    \EndIf
  
    \State $j = \sum_i \left(|L_i| - |R_i|\right)$
  
    \If{$j < 0$}
      \State add $|j|$ arbitrary elements from $\bigcup_i R_i$ to $C$
    \Else
      \State add $j$ arbitrary elements from $\bigcup_i L_i$ to $C$
    \EndIf
  
    \State \Return \Call{\randmedname}{$C$, $k(\cdot)$, $d(\cdot)$}
  \EndProcedure
\end{algorithmic}
\end{minipage}}}
 \vspace{1em}

We now present the details of an expected work-optimal selection algorithm with a trade-off between the expected fragile complexity $\fmed(n)$ of the selected element and the maximum expected fragile complexity $\frem(n)$ of the remaining elements.
In particular, we obtain the following combinations:
\begin{theorem}\label{thm:RandSelectionCombined}
Randomized selection is possible in expected linear work, while achieving expected \frag of the median $\E{\fmed(n)} = \Oh(\log\log n)$ and of the remaining elements $\E{\frem(n)} = \Oh(\sqrt{n})$, or $\E{\fmed(n)} = \Oh\left(\frac{\log n}{\log\log n}\right)$ and $\E{\frem(n)} = \Oh(\log^2 n)$.
\end{theorem}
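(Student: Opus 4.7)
The plan is to analyze \randmed{} in terms of its two parameter functions, the sample size $k(n)$ and the bucket growth factor $d(n)$, prove correctness, expected linear work, and separate fragile bounds on the median and the remaining elements, and then pick parameters that realize the two trade-offs in the theorem.

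First I would establish correctness. By Chernoff/Hoeffding applied to the hypergeometric distribution of the rank of the $(k/2 \pm n_1)$-th order statistic of $S$, the pair $S[k/2 - n_1]$ and $S[k/2 + n_1]$ sandwiches the true median with probability $1 - n^{-\Omega(1)}$ whenever $k = \omega(\log n)$; hence the candidate set $C$ contains the median with high probability, and the fallback to \textsc{DetMedian} on line~\ref{alg:rand-median-fallback} fires only with inverse-polynomial probability, contributing $n^{-\Omega(1)} \cdot \Oh(n) = o(1)$ to the expected work and expected fragile complexity of every element (using Theorem~\ref{thm:detMed} for the fallback cost).

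Second, I would bound the work per recursion level. Sorting the sample $S$ with AKS costs $\Oh(k \log k)$. In the probing loop, each non-sample element visits buckets from the outermost inward and, since the bucket boundaries in rank space grow geometrically by factor $d$, a uniformly random-rank element passes a bucket with probability at most $1/d$, so its expected number of bucket visits is $\Oh(d/(d-1)) = \Oh(1)$ (or $\Oh(b)$ in the worst case); total probing work is thus $\Oh(n)$. Recursion then runs on $|C| = \Theta(\sqrt{k \log n})$, so choosing $k(n)$ so that the sequence $n^{(i+1)} = \Theta(\sqrt{k(n^{(i)}) \log n^{(i)}})$ shrinks geometrically makes the total work telescope to $\Oh(n)$ down to the base case $|C| \le \log^4 N$, where AKS-sorting $C$ costs another $\Oh(n)$.

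Third, the fragile bounds. The median, if not in the sample $S$ (which fails with probability $k/n$ and in that case incurs only the one-shot AKS cost $\Oh(\log k)$ on it), traverses at most $b = \Theta(\log_d (\sqrt{k}/\sqrt{\log n}))$ probing rounds before being placed in $C$, and is untouched otherwise at this level; summing over the $r$ recursion levels gives $\E{\fmed} = \Oh(rb) + \Oh(\log\log n)$ from the final base-case sort on $C$ of size $\Oh(\polylog n)$. For the remaining elements the crucial point is that bucket $L_j$ (resp.\ $R_j$) is created with $n_j - n_{j-1} = \Theta(d^{j-1}\sqrt{k\log n})$ pivots, the ``fewest comparisons'' selection rule round-robins the probing load across them, and marked pivots inserted during probing at most double the pool before being replaced; this yields $\E{\frem}$ per level dominated by $\Oh(n/\sqrt{k \log n})$ from the innermost bucket $L_1,R_1$, plus $\Oh(b)$ from AKS-sorting $S$, plus $\Oh(1)$ from deeper-bucket probes.

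Finally, the parameter choices close the argument. For $\E{\fmed}=\Oh(\log\log n)$ and $\E{\frem}=\Oh(\sqrt n)$ I would take $k=\Theta(n/\log n)$ and a superconstant $d$ so that $b=\Oh(1)$ per level and the recursion terminates in $r=\Oh(\log\log n)$ levels (since $n^{(i+1)} = \Theta(\sqrt{n^{(i)}})$). For $\E{\fmed}=\Oh(\log n/\log\log n)$ and $\E{\frem}=\Oh(\log^2 n)$ I would take a smaller $k=\Theta(n/\polylog n)$ and a moderate $d=\Theta(\log n)$, giving $b=\Oh(\log n/\log\log n)$ per level but inflating $\E{\frem}$ only to $\Oh(\log^2 n)$ because the smallest bucket still contains $\Omega(n/\log^2 n)$ pivots. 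The main obstacle is the fragile accounting for the bucket pivots under the ``fewest comparisons'' plus marking rule: one must show that no adversarial ordering of probed elements can concentrate comparisons on a single pivot, and that the cascade of marked pivots (each used $\Oh(1)$ times before spawning a successor) keeps the per-pivot load uniform enough in expectation for the two stated bounds to hold simultaneously with linear expected work.
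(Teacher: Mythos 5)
Your overall plan follows the paper's structure closely (Chernoff for correctness, recursion-depth bound via $|C| = \Oh(\sqrt{k\log n})$, separate recurrences for $\fmed$ and $\frem$, then parameter tuning), so the approach is essentially the same. However, there are two concrete problems.

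First, your parameter choice for the first trade-off is underspecified in a way that actually breaks the bound. With $k = \Theta(n/\log n)$, the innermost bucket has $n_1 = \Theta(\sqrt{k\log n}) = \Theta(\sqrt{n})$ and the buckets must span out to rank $\Theta(k)$ in $S$, so $b = \Theta(\log_d(\sqrt{k/\log n})) = \Theta(\log n / \log d)$. A ``superconstant'' $d$ (e.g.\ $d = \polylog n$) gives $b = \Theta(\log n / \log\log n)$, not $\Oh(1)$, which would make $\E{\fmed} = \Oh(\log n)$ rather than $\Oh(\log\log n)$. You need $d = n^{\Theta(1)}$ to force $b = \Oh(1)$; the paper chooses $k = n^{2/3}$, $d = n^{1/12}$ precisely so that both $b = \Oh(1)$ and the work constraints ($k = \Oh(n/\log n)$, $d = \Omega(\log n)$) hold.

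Second, and more seriously, your per-pivot accounting for $\frem$ is too coarse to yield the second trade-off. You charge the innermost bucket with $\Oh(n / n_1) = \Oh(n / \sqrt{k\log n})$, which for any admissible $k = \Oh(n/\log n)$ is $\Omega(\sqrt{n})$, never $\Oh(\log^2 n)$. Your attempt to rescue this by asserting the smallest bucket ``still contains $\Omega(n/\log^2 n)$ pivots'' is factually wrong: with $k = n/\log n$ the innermost bucket has only $\Theta(\sqrt{n})$ pivots. What is actually needed (and what the paper proves in Lemma~\ref{lem:rand-median-frag-rem}) is that the filtering cascade shields the inner buckets, so that bucket $R_j$ ($j\le 2$) receives only $\Oh(n\cdot n_j/k)$ items whp, inflated by a factor $d$ because pivots placed near the unfavorable bucket border act as weak classifiers; this gives per-pivot load $\Oh(nd/k)$, which with $k = n/\log n$, $d = \log n$ is $\Oh(\log^2 n)$. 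Your proposal acknowledges that the pivot-load accounting is ``the main obstacle'' but does not supply this argument, and without it the $\Oh(\log^2 n)$ bound cannot be obtained from your framework.
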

Just like with the deterministic approach in Section~\ref{sec:deterministicSelectionAlg} we restrict ourselves to the special case of median finding.  The general $(n,t)$-selection problem can be solved by
initially adding an appropriate number of distinct dummy elements. Note that comparisons of real elements with dummy elements do not contribute to the fragile complexity since we know that all dummy elements are either larger or smaller than all real elements depending on the value of $t$. Similarly, the comparison between dummy elements
comes for free.

We present \randmed{} an expected work-optimal median selection algorithm with a trade-off between the expected fragile complexity $\fmed(n)$ of the median element and the maximum expected fragile complexity $\frem(n)$ of the remaining elements. 
By adjusting a parameter affecting the trade-off, we can vary $\left\langle\E{\fmed(n)}, \max\E{\frem(n)}\right\rangle$ in the range between $\left\langle\Oh(\log\log n), \Oh(\sqrt n)\right\rangle$ and $\left\langle \Oh(\log n/\log\log n), \Oh(\log^2 n) \right\rangle$ (see Theorems~\ref{thm:randmedLogLog} and \ref{thm:randmedSubLog}). 

\randmed{} (see pseudo code) takes a totally ordered set $X$ as input.
It draws a set $S$ of $k(n)$ random samples, sorts them and subsequently uses the items in $S$ as pivots to identify a set $C \subset X$ of values around the median, such that an equal number of items smaller and greater than the median are excluded from $C$. %
Finally, it recurses on $C$ to select the median.

The recursion reaches the base case when the input is of size $\Oh(\polylog n)$, %
at which point it can be sorted to trivially expose the median.
\randmed{} employs two sets $L_1$ and $R_1$ of $n_1 = \Oh(\sqrt{k \log n})$ pivots almost surely below and above the median respectively.
All candidates for $C$ are compared to one item in $L_1$ and $R_1$ each filtering elements that are either too small or too large.
The sizes of $L_1, R_1$ and $C$ are balanced to achieve fast pruning and low failure probability\footnote{%
  \randmed{} is guaranteed to select the correct median.
  Failure results in an asymptotically insignificant increase of expected fragile complexity (see line~\ref{alg:rand-median-fallback} in the pseudocode).
} (see Lemmas~\ref{lem:rand-median-not-in-c} and~\ref{lem:rand-median-rec-depth}).

To reduce the fragile complexity of elements in $L_1$ and $R_1$, most elements are prefiltered using a cascade of weaker classifiers $L_i$ and $R_i$ for $2 \le i \le b$ geometrically growing in size by a factor of~$d(n)$ when moving away from the center.
Filtered elements that are classified into a bucket $L_i$ or $B_i$ with $i < b$ are used as new pivots and effectively limit the expected load per pivot.
As the median is likely to travel through this cascade, the number of filter layers is a compromise between the fragile complexity \fmed{} of the median and \frem{} of the remaining elements.

We define $k(n)$ and $d(n)$ as functions since they depend on the problem size which changes for recursive calls.
If $n$ is unambiguous from context, we denote them as $k$ and $d$ respectively and assume $k(n) = \Omega(n^\varepsilon)$ for some $\varepsilon > 0$.

\begin{lemma}\label{lem:rand-median-not-in-c}
  Consider any recursion step and let $X$ be the set of $n$ elements passed as the subproblem. After all elements in $X$ are processed, the center partition $C$ contains the median $x_m \in X$ whp.
\end{lemma}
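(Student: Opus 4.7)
The plan is to isolate a single high-probability event---that the rank $r$ of the median $x_m$ in the sorted sample $S$ lies in the central window $[k/2 - n_0,\, k/2 + n_0]$---and then argue that under this event the filter cascade cannot divert $x_m$ into any $L_j$ or $R_j$, so $x_m$ ends up in $C$ either directly (if $x_m \in S$) or as a consequence of the probing phase (otherwise).

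First I would bound the probability that $r$ escapes the central window. Because $S$ is a uniform subset of $k$ elements of $X$ and $x_m$ has rank exactly $n/2$ in $X$, the variable $r$ is hypergeometric with mean $k/2$; a Chernoff/Hoeffding tail bound for sampling without replacement gives
\begin{equation*}
  \pr{|r - k/2| \ge \sqrt{k\log n}} \;\le\; 1/\mathrm{poly}(n).
\end{equation*}
Since $n_0 = 2\sqrt{k\log n}$, this implies $S[k/2 - n_0] \le x_m \le S[k/2 + n_0]$ with high probability; call this event~$\mathcal{E}$.

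Next I would establish the invariant that under $\mathcal{E}$, at every moment of the probing phase, every element currently stored in some $L_j$ is strictly smaller than $x_m$, and every element in some $R_j$ is strictly larger. The argument is by induction on the time of insertion. The initial pivots copied from $S$ populate $L_j \subseteq S[k/2 - n_j : k/2 - n_{j-1}]$ and $R_j \subseteq S[k/2 + n_{j-1} : k/2 + n_j]$, so their sample ranks lie outside $[k/2 - n_0,\, k/2 + n_0]$ and $\mathcal{E}$ immediately yields the base case. For a weak pivot $w$ added to $L_{j+c}$, the insertion happens precisely because the algorithm just verified $w < x_A$ for some $x_A$ already in $L_j$; by the induction hypothesis $x_A < x_m$, so $w < x_m$. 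The symmetric argument handles weak insertions into~$R_{j+c}$.

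Combining the two steps, when $x_m$ is processed in the inner probing loop each pivot $x_A \in L_j$ it meets satisfies $x_A < x_m$ and each $x_B \in R_j$ satisfies $x_B > x_m$, so $x_m$ falls through all $b-1$ levels and is inserted into $C$; and if $x_m \in S$, event $\mathcal{E}$ already places $x_m$ inside the initially constructed $C$. The step I expect to be the main technical subtlety is the concentration bound: one must verify that the Chernoff inequality for sampling without replacement delivers a genuinely polynomial failure probability in the parameter regime $k = \Omega(n^\varepsilon)$ with $n_0 = 2\sqrt{k\log n}$, ideally inverse-polynomial in the original input size $N$ so that a subsequent union bound over recursion levels still yields a whp statement. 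By contrast, the inductive preservation of the invariant for weak pivots is conceptually clean once one observes that weak pivots can only migrate deeper into the cascade, never back toward $C$.
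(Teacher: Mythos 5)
Your proof is correct and takes essentially the same route as the paper's: both reduce the lemma to a Chernoff-type concentration bound showing that the number of sample elements below $x_m$ lies within $n_0$ of $k/2$ with high probability, and both observe that under this event $x_m$ must end up in $C$. You additionally make explicit, via an induction on insertion order, the invariant that every pivot in any $L_j$ is below $x_m$ and every pivot in any $R_j$ is above it — a step the paper merely asserts as an observation — and you correctly flag that the failure probability should be checked to be inverse-polynomial in the original input size so that the later union bound over recursion depths goes through.
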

\begin{proof}%
  The algorithm can fail to move the median into bucket $C$ only if the sample $S \subset X$ is highly skewed.
  More formally, we use a safety margin $n_0$ around the median of $S$ and observe that if there exists $x_l, x_r \in S_C := S[k/2 {-} n_0 : k/2 + n_0 ]$ with $x_l < x_m < x_r$, the median $x_m$ is moved into~$C$.

  This fails in case too many small or large elements are sampled.
  In the following, we bound the probability of the former from above;
  the symmetric case of too many large elements follows analogously.  
  Consider $k$ Bernoulli random variables $X_i$ indicating that the $i$-th sample $s_i < x_m$ lies below the median and apply Chernoff's inequality $\pr{\sum_i X_i > (1{+}\delta) \mu} \le \exp(- \mu \delta^2/3 )$ where $\mu = \E{\sum_i X_i}$ and $\delta < 1$:
  \begin{equation*}
    \pr{\lnot \exists\ x_r {\in} S_C\colon x_m {<} x_r}
    = \pr{\sum_{i=1}^k X_i > k/2 {+} n_0}
    \le \exp\left(- \frac{2n_0^2}{3k}\right) 
    \stackrel{n_0 = \sqrt {2k \log n}}{=} n^{-4/3} \,. \qedhere
  \end{equation*}
\end{proof}

\begin{lemma}\label{lem:rand-median-rec-depth}
  Each recursion step of \randmed{} reduces the problem size from $n$ to $\Oh(\sqrt{n \log n})$ whp.
\end{lemma}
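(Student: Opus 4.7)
The plan is to bound the size of the candidate set $C$ that is passed to the recursive call. The set $C$ receives at most $2n_0 = O(\sqrt{k\log n})$ elements during initialization from $S$, plus all elements of $X\setminus S$ that survive the probing cascade all the way to the innermost level. So the main task is to bound the number of surviving elements from $X\setminus S$.

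The key structural observation is that the innermost buckets $L_1$ and $R_1$ are never modified during the probing phase: the loop over $j$ ranges over $j\in[b-1,\dots,1]$ and every new pivot is inserted into $L_{j+c}$ or $R_{j+c}$ with $c\ge 1$, so only buckets with index $\ge 2$ grow. Hence throughout the whole probing phase
\[
L_1 = S[k/2-n_1:k/2-n_0],\qquad R_1=S[k/2+n_0:k/2+n_1]
\]
remain exactly as the sort of $S$ defined them. Consequently any $x_i\in X\setminus S$ added to $C$ must satisfy $\min L_1 \le x_i \le \max R_1$, i.e.\ $x_i$ lies in the value interval $I=[S[k/2-n_1],\,S[k/2+n_1]]$. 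This reduces the analysis to bounding $|\{x\in X : x\in I\}|$, independently of the adaptive behaviour of the cascade; the additional pivots inserted at deeper levels can only shrink the effective acceptance window, so using $L_1,R_1$ as the filter gives a clean upper bound.

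Next I would invoke a standard order-statistic concentration argument. For any fixed threshold $\tau$, the number of samples of $S$ that fall below $\tau$ is a hypergeometric random variable with mean $\rank_X(\tau)\cdot k/n$; by a Chernoff/Hoeffding bound for sampling without replacement, the rank in $X$ of $S[k/2\pm n_1]$ is concentrated around its expectation $(k/2\pm n_1)\cdot n/k$ with an additive deviation of $O(\sqrt{n\log n})$ whp. Taking a union bound over the two endpoints,
\[
\bigl|\{x\in X : x\in I\}\bigr| \;\le\; \frac{2n_1 n}{k} + O\!\left(\sqrt{n\log n}\right) \quad\text{whp.}
\]
Substituting $n_1=3\sqrt{k\log n}$ yields the first term $O(n\sqrt{\log n/k})$, and for the choices of $k(n)$ used to obtain the trade-offs in Theorem~\ref{thm:RandSelectionCombined} this is $O(\sqrt{n\log n})$. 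Together with the $2n_0=O(\sqrt{k\log n})$ initial elements, we get $|C|=O(\sqrt{n\log n})$ whp.

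The main obstacle is the concentration step: one must argue carefully about order statistics of a sample drawn \emph{without} replacement and verify that the resulting deviation bound is $O(\sqrt{n\log n})$ with probability at least $1-n^{-c}$ for a suitably large constant~$c$, so that the whp claim actually composes over the $O(\log\log n)$ recursion levels implied by Lemma~\ref{lem:rand-median-not-in-c}. Verifying that additional pivots added to $L_{\ge 2}$, $R_{\ge 2}$ during probing cannot enlarge the acceptance window (only shrink it) is straightforward from the pseudocode but should be stated explicitly to justify using the static $L_1,R_1$ as the worst-case filter.
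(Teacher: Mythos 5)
Your proof follows essentially the same route as the paper's: identify $\min L_1$ and $\max R_1$ as the effective (and static) filter --- noting, as you correctly do, that the probing phase only inserts new pivots into $L_{j+c},R_{j+c}$ with $c\ge 1$, so $L_1,R_1$ never change --- then concentrate the $X$-rank of the order statistics $S[k/2\pm n_1]$ via a Chernoff/Hoeffding-type bound, and add the $2n_0$ seed elements of $C$. Your extra care about sampling without replacement and about monotonicity of the acceptance window is a welcome sharpening of what the paper leaves implicit.

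One issue worth flagging, which your proof shares with the paper's: substituting $n_1=3\sqrt{k\log n}$ gives $2n_1 n/k = 6n\sqrt{\log n/k}$, which is $\Theta(\sqrt n\,\log n)$ for $k=n/\log n$ and $\Theta(n^{2/3}\sqrt{\log n})$ for $k=n^{2/3}$ --- in both cases strictly larger than $\sqrt{n\log n}$, so your closing claim that the first term is $O(\sqrt{n\log n})$ does not hold for the parameter choices of Theorem~\ref{thm:RandSelectionCombined}. The paper has the same slip: it posits $\ell = n/2 + 3\sqrt{n\log n}$ for the rank of $\max R_1$, but that rank actually concentrates around $n/2 + 3n\sqrt{\log n/k}$, which exceeds $n/2 + 3\sqrt{n\log n}$ whenever $k<n$. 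Since $n\sqrt{\log n/k}$ is still $n^{1-\Omega(1)}$ for these $k$, the recursion depth remains $O(\log\log n)$ and the downstream theorems survive, but the lemma as stated (and hence your last line) is slightly optimistic; a clean fix is to state the reduction as $n \mapsto O\!\left(n\sqrt{\log n/k(n)}\right)$.
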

\begin{proof}%
  The algorithm recurses on the center bucket $C$ which contains the initial sample of size $2n_0 = \Oh(\sqrt{n \log n})$ and all elements that are not filtered out by $L_1$ and $R_1$.
  We pessimistically assume that each element added to $C$ compared to the weakest classifiers in these filters (i.e., the largest element in $R_1$ and the smallest in $L_1$).
  
  We hence bound the rank in $X$ of the $R_1$'s largest pivot; due to symmetry $L_1$ follows analogously.
  Using a setup similar to the proof of Lemma~\ref{lem:rand-median-not-in-c}, we define Bernoulli random variables indicating that the $i$-th sample $s_i$ is larger than the $\ell$-th largest element in $X$ where $\ell = n/2 + 3\sqrt{n\log(n)}$.
  Applying Chernoff's inequality yields the claim.
\end{proof}

\begin{lemma}\label{lem:rand-median-frag-med}
  The expected \frag of the median is
  \[ \E{\fmed(n)} = \E{\fmed\left(\Oh(\sqrt {n \log n})\right)} + \Oh\big(
      \underbrace{\frac k n \log k}_\text{Sampled} +
      \underbrace{(1-\frac k n) \log_d k}_\text{Not sampled} +
      \underbrace{\phantom{\frac k n}1\phantom{\frac k n}}_\text{Misclassified} 
  \big)\,. \]
\end{lemma}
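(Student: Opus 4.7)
The plan is to analyze a single invocation of \randmed{} by tracing the median element's trajectory through the sampling and probing phases, and then to add the recursive call on $C$. By Lemma~\ref{lem:rand-median-not-in-c} the median ends up in $C$ whp, so the recursion is well-defined with $|C| = \Oh(\sqrt{n \log n})$ whp by Lemma~\ref{lem:rand-median-rec-depth}; this gives the $\E{\fmed(\Oh(\sqrt{n \log n}))}$ term. The remaining task is to bound the expected number of comparisons the median accumulates during the current level.

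I would split on whether the median $x_m$ is among the $k$ sampled elements in $S$. This happens with probability $k/n$, and in that case $x_m$ participates in the AKS sort of $S$, which has fragile complexity $\Oh(\log k)$. Weighted by the sampling probability, this contributes the $\frac{k}{n}\log k$ term. Note that once $x_m$ is sorted into $C$, it is never compared to a non-sample element during the probing phase, so no additional cost accrues at this level.

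Complementarily, with probability $1 - k/n$ the median is not sampled and participates in the probing loop. The cascade has $b$ layers with $n_1 = \Theta(\sqrt{k\log n})$ and $n_i = d\, n_{i-1}$, and since the largest bucket has size $\Theta(k)$ we get $b = \Oh(\log_d k)$. At each level $j$ that $x_m$ visits, it performs at most two comparisons (one each against a pivot in $L_j$ and $R_j$) before continuing inward. Because $x_m$ is the true median, at every level the comparisons with pivots in $L_j$ (resp.\ $R_j$) yield $x_m > x_A$ (resp.\ $x_m < x_B$) whenever those pivots lie on the correct side of the median, which occurs except on the rare skew event of Lemma~\ref{lem:rand-median-not-in-c}. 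Consequently, in the typical execution $x_m$ traverses all $b$ layers and lands in $C$ at a cost of $\Oh(\log_d k)$ comparisons, yielding the $(1 - \frac{k}{n})\log_d k$ term.

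Finally, I would bound the failure contribution. By Lemma~\ref{lem:rand-median-not-in-c}, the probability that the sample is so skewed that $x_m$ would be inserted into $L_b$ or $R_b$ (or that the imbalance check triggers the \Call{DetMedian}{} fallback at line~\ref{alg:rand-median-fallback}) is at most $\Oh(n^{-4/3})$. In such a failure, $x_m$ can suffer at most $\Oh(\log n)$ comparisons in \Call{DetMedian}{} plus $\Oh(b) = \Oh(\log n)$ comparisons already accumulated in the probing loop, so the expected contribution is $\Oh(n^{-4/3}\log n) = \Oh(1)$, giving the constant term. Adding the per-level contribution and the recursive cost on $C$ yields the claimed recurrence. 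The main subtlety is the bookkeeping in this failure case: one must verify that the median has not yet accumulated more than $\Oh(\log n)$ comparisons at the moment the algorithm detects imbalance, so that the tiny failure probability genuinely absorbs the fallback cost into an $\Oh(1)$ expected term.
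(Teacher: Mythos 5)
Your decomposition matches the paper's: you split on whether the median is sampled (probability $k/n$, cost $\Oh(\log k)$ from the AKS sort), not sampled (probability $1-k/n$, cost $\Oh(\log_d k)$ from traversing the cascade), plus a misclassification term, plus the recursion on $C$ justified by Lemmas~\ref{lem:rand-median-not-in-c} and~\ref{lem:rand-median-rec-depth}. The paper's own proof is even terser about the misclassification case, simply asserting a ``vanishing contribution due to its small probability,'' so you are supplying detail the paper leaves implicit, which is a good instinct.

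There is, however, a flaw in the detail you added. You claim that in the failure event the median accumulates at most $\Oh(b)=\Oh(\log n)$ comparisons before the imbalance check fires, and you flag verifying this as ``the main subtlety.'' That bound does not hold. If the sample is skewed, the median can be misclassified into some bucket $L_i$ or $R_i$ with $i<b$, at which point it becomes a \emph{marked pivot} and is repeatedly selected (as a least-compared pivot) to classify later elements in the probing loop; that can cost it far more than $\Oh(\log n)$ comparisons before line~\ref{alg:rand-median-fallback} is ever reached. Moreover the imbalance check need not trigger at all in a misclassification event. The fix is easy and does not change the conclusion: in the failure event you can crudely cap the median's comparisons by the algorithm's total work, $\Oh(n)$ per recursion level, and since the misclassification probability is $\Oh(n^{-4/3})$ by Lemma~\ref{lem:rand-median-not-in-c}, the expected contribution is $\Oh(n^{-4/3}\cdot n)=o(1)$, which is absorbed into the $\Oh(1)$ term. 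With that repair your proof is correct and essentially identical in approach to the paper's.
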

\begin{proof}%
  Due to Lemma~\ref{lem:rand-median-rec-depth}, a recursion step reduces the problem size from $n$ to $\Oh(\sqrt{n\log n})$ whp represented in the recursive summand.
  The remaining terms apply depending on whether the median is sampled or not:
  with $\pr{x_m \not\in S} = 1 - k/n$ the median is not sampled and moved towards the center triggering a constant number of comparisons in each of the $\Oh(\log_d n)$ buckets.
  Otherwise if the median $x_m$ is sampled, it incurs $\Oh(\log k)$ comparisons while $S$ is sorted.
  By Lemma~\ref{lem:rand-median-not-in-c}, the median is then assigned to $C$ whp and protected from further comparisons.
  According to Lemma~\ref{lem:rand-median-not-in-c}, the complementary event of $x_m$ being misclassified has a vanishing contribution due to its small probability.
\end{proof}

\begin{lemma}\label{lem:rand-median-frag-rem}%
  The expected \frag of non-median elements is 
  \begin{align*}
    \E{\frem(n)} &= \E{\frem\left(\Oh(\sqrt {n \log n})\right)} \\
    &+ \Oh\big(
      \underbrace{\log k}_\text{Sampled} +
      \underbrace{\log_d n}_\text{Not sampled} +
      \max(
          \underbrace{\phantom{\frac k n}d^2\phantom{\frac k n}}_{\scriptstyle\begin{array}{c}\text{Pivot in $R_i$} \\ i {>} 2\end{array}},
          \underbrace{\frac{nd}{k}}_{\scriptstyle\begin{array}{c}\text{Pivot in $R_j$} \\ j {\le} 2\end{array}}
      )
    \big)\,,
  \end{align*} 
  where $d(n) = \Omega(\log^\varepsilon n)$ for some $\varepsilon>0$ and $k(n) = \Oh(n / \log n)$.
\end{lemma}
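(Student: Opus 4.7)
I would bound the expected fragile complexity of each non-median element by decomposing its comparisons into four contributions that match the summands of the claim:
(a) if the element is one of the $k$ sampled elements, it participates in the AKS sort of $S$ for $O(\log k)$ comparisons;
(b) if it is not sampled, it acts as a probe and is compared against at most two pivots per level as the for-loop walks $j$ from $b-1$ down to $1$, giving $O(b) = O(\log_d n)$ comparisons because the bucket radii $n_i$ grow geometrically with ratio $d$;
(c) if the element is ever placed inside some bucket $L_i$ or $R_i$ (either as an initial sample pivot, or as a marked weak pivot inserted during probing), every subsequent probe routed to it by the ``fewest compares'' tie-breaker contributes to its load;
(d) finally, it participates in the recursive call on $C$, whose size is $O(\sqrt{n \log n})$ whp by Lemma~\ref{lem:rand-median-rec-depth}, giving the $\E{\frem(O(\sqrt{n\log n}))}$ summand.

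Items (a), (b), and (d) follow directly from the pseudocode and Lemma~\ref{lem:rand-median-rec-depth}, so the substance of the proof is bounding (c). I would estimate, for every bucket, the total number $T_i$ of probe-vs-pivot comparisons routed to $R_i$ (and symmetrically to $L_i$) and the number $P_i$ of pivots servicing it, and then appeal to the ``fewest compares'' rule to argue that no individual pivot is hit more than $O(T_i / P_i)$ times. Since $P_i$ only grows over the execution, it suffices to use the initial sample-pivot count $S_i = n_i - n_{i-1}$ in the denominator: adding weak pivots can only decrease the maximum load. For $T_i$ I would use standard order-statistic concentration for a uniform size-$k$ sample of an $n$-element set (analogous to the Chernoff argument in Lemma~\ref{lem:rand-median-not-in-c}): the probes that reach the $R_i$ comparison are exactly those whose rank in $X$ lies in the window between the $L_i$ threshold and the $R_{i+1}$ threshold, a window of size $\Theta(n_{i+1}\cdot n/k)$ whp. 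Combined with $S_i = \Theta(d \cdot n_{i-1})$ for $i \ge 2$, this yields per-pivot load $O(d \cdot n / k)$, which is dominated by $\max(d^2, nd/k)$ in either regime. For the two innermost buckets ($j \le 2$), the two-speed forwarding (shift by $c \in \{1,2\}$) can only insert new pivots one or two layers \emph{further out}, so $L_1, R_1, L_2, R_2$ contain only sample pivots; the same ratio with $S_1 = \Theta(\sqrt{k \log n})$ then yields the $O(nd/k)$ bound directly.

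The main obstacle will be controlling the dynamic interplay between the two-speed forwarding and the load-balancing rule: pivots arrive online as more probes are processed, and marked vs.\ unmarked pivots redirect probes by different amounts. My plan is to exploit two observations: first, because $P_i$ only grows, the worst-case per-pivot load is upper-bounded by $T_i/S_i$ regardless of when weak pivots appear, allowing me to pessimistically ignore them in the denominator; second, the random processing order of probes together with ``fewest compares'' yields near-uniform load over the currently available pivots, which a standard balls-into-bins argument converts into the $O(T_i/S_i)$ bound on each fixed pivot whp. Finally, combining (a)--(d) with the stated assumptions $d = \Omega(\log^\varepsilon n)$ and $k = O(n/\log n)$---needed for the above concentration bounds to hold with high probability---gives the claimed recurrence for $\E{\frem(n)}$.
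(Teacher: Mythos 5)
Your high-level decomposition (sampled / not-sampled / pivot load / recursion) matches the paper's, and your ``total load over initial pivots'' bound $T_i/S_i$ is a legitimate simplification: since $T_i/S_i = \Theta(n_{i+1}\cdot n/(k\,n_i)) = \Theta(dn/k)$, and $nd/k \le \max(d^2, nd/k)$, a uniform $O(nd/k)$ bound would indeed imply the lemma without needing the paper's separate rate argument (new pivot arrives every $O(d^2)$ comparisons) for the replenished buckets $i>2$. Your observation that adding pivots under the fewest-compares rule can only decrease the maximum per-pivot load is correct and lets you safely ignore late arrivals in the denominator.

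The gap is in the numerator. Your step asserting that ``the probes that reach the $R_i$ comparison are exactly those whose rank in $X$ lies in the window \dots, a window of size $\Theta(n_{i+1}\cdot n/k)$ whp'' is not justified, because it implicitly assumes that the pivots actually compared against at layers $b-1,\dots,i+1$ lie in their canonical rank ranges. Marked pivots do not: a pivot added to $L_{j+c}$ was only verified to be smaller than some $L_j$ pivot and larger than some outer pivot, so its rank can sit a layer or two outside $L_{j+c}$'s range, and if that outer pivot was itself a marked pivot the misplacement can compound. Such misplaced pivots are weak classifiers that let strictly more probes through to inner layers, and since they arrive with zero compares the fewest-compares rule actively prefers them, so the effect is not negligible. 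The paper's proof explicitly absorbs this by pessimistically assuming classification happens two layers late (``$y\in R_{i-2}$'') and by a separate factor-$d$ correction for $j\le 2$, which is precisely what produces the $n_{i+1}n/k$ figure there; your order-statistic concentration argument only gives the ``ideal'' window and says nothing about this cascade. You do flag the ``dynamic interplay'' as the main obstacle, but your two proposed observations address the denominator and the load balancing, not the inflation of $T_i$ itself, so the proposal as written does not close this hole.
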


\begin{proof}
  As the recursion is implemented analogously to Lemma~\ref{lem:rand-median-frag-med}, we discuss only the contribution of a single recursion step.
  Let $x \in X {\setminus} \{x_m\}$ be an arbitrary non-median element.
  
  The element $x$ is either sampled and participates in $\Oh(\log k)$ comparisons.
  Otherwise it traverse the filter cascade and moves to $C$, $L_i$ or $R_i$ requiring $\Oh(\log_d n)$ comparisons.
  
  If it becomes a median candidate (i.e. $x \in C$), $x$ has a fragile complexity as discussed in Lemma~\ref{lem:rand-median-frag-med} which is asymptotically negligible here.
  Thus we only consider the case that $x$ is assigned to $L_i$ or $R_i$ and we assume $x \in R_i$ without loss of generality due to symmetry.
  If it becomes a member of the outer-most bucket $R_b$, it is effectively discarded.
  Otherwise, it can function as a new pivot element replenishing the bucket's comparison budget.
  As \randmed{} always uses a bucket's least-frequently compared element as pivot, it suffices to bound the expected number of comparisons until a new pivot arrives.
    
  Observe that \randmed{} needs to find an element $y \in (R_{i-2} \cup R_{i-1})$ with $y < x$ in order to establish that $x \in R_i$.
  This is due to the fact that pivots can be placed near the unfavorable border of a bucket rendering them weak classifiers.
  We here pessimistically assume that $y \in R_{i-2}$ for simplicity's sake.
  By construction the initial bucket sizes $n_i$ grow geometrically by a factor of $d$ as $i$ increases.
  Therefore, any item compared to bucket $R_i$ continues to the next bucket with probability at most $1 / d$.
  Consequently, bucket $R_i$ with $i>2$ sustains expected $\Oh(d^2)$ comparisons until a new pivot arrives.
  
  This is not true for the two inner most buckets $R_j$ with $j \in \{1,2\}$ as they are not replenished.
  Bucket $R_j$ ultimately receives $\Oh(n \cdot \frac{n_j}{k})$ items whp, however it is expected to process $d$~times as many comparisons due to the possibly weak classifiers in the previous bucket $R_{j+1}$.
  Since bucket $R_j$ contains $n_j$ pivots in , each of it participates in $\Oh(nd / k)$ comparisons.
\end{proof}

\begin{theorem}\label{thm:randmedLogLog}
  \randmed{} achieves $\E{\fmed(n)} = \Oh(\log\log n)$ and $\E{\frem(n)} = \Oh(\sqrt{n})$.
\end{theorem}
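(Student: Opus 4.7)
The plan is to instantiate the parameter functions $k(n)$ and $d(n)$ and then substitute into the recurrences of Lemmas~\ref{lem:rand-median-frag-med} and~\ref{lem:rand-median-frag-rem}. A natural choice is $k(n) = n^{3/4}$ and $d(n) = n^{1/4}$. First I would verify the side conditions: $k(n) = \Omega(n^\varepsilon)$ (for any $\varepsilon \le 3/4$), $k(n) = \Oh(n/\log n)$, and $d(n) = \Omega(\log^\varepsilon n)$, all of which hold for sufficiently large~$n$.

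Next I would substitute into the per-step contribution of Lemma~\ref{lem:rand-median-frag-med}. We have $\frac{k}{n}\log k = n^{-1/4}\cdot\frac{3}{4}\log n = o(1)$, and $\log_d k = \log(n^{3/4})/\log(n^{1/4}) = 3$, so the total per-step contribution to $\E{\fmed}$ is $\Oh(1)$. Since Lemma~\ref{lem:rand-median-rec-depth} guarantees the problem size shrinks from $n$ to $\Oh(\sqrt{n\log n})$ whp in each step, the recursion depth is $\Oh(\log\log n)$. Summing the $\Oh(1)$ per-step cost over this depth, and adding the base case cost of sorting $\Oh(\log^4 N)$ elements with AKS (which contributes $\Oh(\log\log N)$), yields $\E{\fmed(n)} = \Oh(\log\log n)$.

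For $\E{\frem(n)}$, I substitute into the per-step bound of Lemma~\ref{lem:rand-median-frag-rem}. The sampling term $\log k = \Oh(\log n)$ and the traversal term $\log_d n = \Oh(1)$ are dominated by the pivot-load term, which is $\max(d^2, nd/k) = \max(n^{1/2},\, n\cdot n^{1/4}/n^{3/4}) = \Oh(\sqrt n)$. At recursion depth $i$, the subproblem has size $n_i$ with $n_0 = n$ and $n_{i+1} = \Oh(\sqrt{n_i \log n_i})$, so the per-step contribution is $\Oh(\sqrt{n_i})$. Writing out the first few terms shows $\sum_i \sqrt{n_i} \le \sqrt n(1 + (\log n /\sqrt n)^{1/2} + \cdots) = \Oh(\sqrt n)$ since the sequence decays much faster than geometrically once we are past the top level. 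Hence $\E{\frem(n)} = \Oh(\sqrt n)$.

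The main obstacle I expect is making sure these bounds truly hold in expectation rather than only with high probability: the recursion-depth and Lemma~\ref{lem:rand-median-not-in-c} failure events contribute at most a $1/\mathrm{poly}(n)$ probability of falling back to \Call{DetMedian}{}, which has $\Oh(\log n)$ fragile complexity by Theorem~\ref{thm:detMed}; this contributes only $n^{-\Omega(1)}\cdot \Oh(\log n) = o(1)$ to the expectation and is therefore absorbed. Once this is verified, the two claimed expectation bounds follow directly from the instantiation above.
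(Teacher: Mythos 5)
Your proof is correct and follows essentially the same approach as the paper's: instantiate $k(n)$ and $d(n)$ as polynomials in $n$ and substitute into Lemmas~\ref{lem:rand-median-frag-med} and~\ref{lem:rand-median-frag-rem}, summing over the $\Oh(\log\log n)$ recursion levels. The paper uses $k(n)=n^{2/3}$, $d(n)=n^{1/12}$ while you use $k(n)=n^{3/4}$, $d(n)=n^{1/4}$, but both make $\max(d^2, nd/k)=\Theta(\sqrt n)$ and $\log_d k=\Oh(1)$, so the bounds come out identical; your added remarks on the base-case sort and the $\textsc{DetMedian}$ fallback are sound and only make the argument more self-contained.
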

\begin{proof}
  Choose $k(n) = n^\varepsilon$, $d(n) = n^\delta$ with $\varepsilon=2/3$, $\delta = 1/12$. Then Lemmas~\ref{lem:rand-median-frag-med} and~\ref{lem:rand-median-frag-rem} yield:
  \begin{align*}
    \E{\fmed(n)} &= \E{\fmed(\Oh(\sqrt{n \log n}))} + \Oh(n^{\varepsilon - 1} \varepsilon \log n + \frac \varepsilon \delta) 
    = \Oh(\log\log n)\,, \\
    \E{\frem(n)} &= \E{\frem(\Oh(\sqrt{n \log n}))} + \Oh(
      (\varepsilon {+} \frac 1 \delta) \log n + 
      n^{\max(2\delta, 1 - \varepsilon + 2\delta)})
    = \Oh(\sqrt n) \,. \qedhere
  \end{align*}
\end{proof}

\begin{theorem}\label{thm:randmedSubLog}
  \randmed{} achieves $\E{\fmed(n)} = \Oh\left(\frac{\log n}{\log\log n}\right)$, $\E{\frem(n)} = \Oh(\log^2 n)$.
\end{theorem}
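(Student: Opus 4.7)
The plan is to follow the same template as the proof of Theorem~\ref{thm:randmedLogLog}: choose the parameters $k(n)$ and $d(n)$ appropriately, plug them into Lemmas~\ref{lem:rand-median-frag-med} and~\ref{lem:rand-median-frag-rem}, and then resolve the resulting recursions. I would set $k(n) = n/\log n$ and $d(n) = \log n$. This satisfies the side conditions of Lemma~\ref{lem:rand-median-frag-rem} (namely $d(n) = \Omega(\log^\varepsilon n)$ and $k(n) = \Oh(n/\log n)$), as well as the standing assumption $k(n) = \Omega(n^\varepsilon)$ (with, say, $\varepsilon = 1/2$).

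For the median, I would substitute into Lemma~\ref{lem:rand-median-frag-med}. The ``sampled'' term becomes $\frac{k}{n}\log k = \frac{1}{\log n}\cdot \Theta(\log n) = \Theta(1)$; the ``not sampled'' term becomes $(1 - k/n)\log_d k = \Theta(\log n/\log\log n)$ since $\log_{\log n}(n/\log n) = \Theta(\log n / \log\log n)$; and the ``misclassified'' term contributes only $\Oh(1)$. So a single recursion step adds $\Oh(\log n/\log\log n)$ to $\E{\fmed(n)}$.

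For the remaining elements, I would substitute into Lemma~\ref{lem:rand-median-frag-rem}. The ``sampled'' term is $\log k = \Theta(\log n)$, the ``not sampled'' term is $\log_d n = \Theta(\log n/\log\log n)$, and the pivot load is $\max(d^2, nd/k) = \max(\log^2 n,\, \log^2 n) = \Theta(\log^2 n)$, which dominates. Hence one recursion step adds $\Oh(\log^2 n)$ to $\E{\frem(n)}$.

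Finally, I would resolve the recurrences using Lemma~\ref{lem:rand-median-rec-depth}, which tells us that each step shrinks the subproblem from $n$ to $\Oh(\sqrt{n\log n})$ whp. Writing $m = \sqrt{n\log n}$, we have $\log m = \tfrac12(\log n + \log\log n) \sim \tfrac12 \log n$, so the per-level contribution to $\fmed$ shrinks by a factor of roughly $1/2$ per level, and the per-level contribution to $\frem$ by a factor of roughly $1/4$; both form geometric series dominated by the top-level term. The recursion bottoms out in $\Oh(\log\log N)$ levels at a subproblem of size $\Oh(\polylog N)$, which is sorted with AKS for an additional $\Oh(\log\log N)$ fragile cost that is absorbed into both bounds. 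This yields $\E{\fmed(n)} = \Oh(\log n/\log\log n)$ and $\E{\frem(n)} = \Oh(\log^2 n)$. The only step requiring minor care is verifying that $g(\sqrt{n\log n})$ is a strictly smaller constant fraction of $g(n)$ for both $g(n) = \log n/\log\log n$ and $g(n) = \log^2 n$ once $n$ is large enough to ensure the geometric decay, and that the low-probability failure event of Lemma~\ref{lem:rand-median-not-in-c} triggering the \Call{DetMedian}{} fallback on line~\ref{alg:rand-median-fallback} contributes only a vanishing additive term; neither is a real obstacle.
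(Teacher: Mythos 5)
Your proposal is correct and follows essentially the same route as the paper: you pick $k(n) = n/\log n$ and $d(n) = \log n$, substitute into Lemmas~\ref{lem:rand-median-frag-med} and~\ref{lem:rand-median-frag-rem}, and resolve the recursion via the $n \to \Oh(\sqrt{n\log n})$ shrinkage of Lemma~\ref{lem:rand-median-rec-depth}. The paper's proof is simply a terser version of the same calculation; your added remarks about geometric decay of the per-level costs and the base case make explicit what the paper leaves implicit.
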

\begin{proof}
  Choose $k(n) = \frac{n}{\log n}$, $d(n) = \log n$. Then Lemmas~\ref{lem:rand-median-frag-med} and~\ref{lem:rand-median-frag-rem} yield:
  \begin{align*}
    \E{\fmed(n)} &= \E{\fmed(\Oh(\sqrt{n \log n}))} + \Oh\left(
      \frac{\log n}{\log \log n}
    \right) = \Oh\left(\frac{\log n}{\log\log n}\right)\,, \\
    \E{\frem(n)} &= \E{\frem(\Oh(\sqrt{n \log n}))} + \Oh(
      \log^2 n
    ) = \Oh(\log^2 n) \,. \qedhere
  \end{align*}
\end{proof}
  
\begin{theorem}\label{lem:randmed-work}
  For $k = \Oh(n/\log n)$ and $d = \Omega(\log n)$, \randmed{} performs a total of $\Oh(n)$ comparisons in expectation, implying $\w{n} = \Oh(n)$ expected work.
\end{theorem}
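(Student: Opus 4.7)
The plan is to decompose the expected work of \randmed{} into contributions from the sampling phase, the probing phase, the recursive call, and the low-probability fallback to \Call{DetMedian}{}, and to bound each separately.

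First I would handle the sampling phase. Sorting the $k$ samples with AKS costs $\Oh(k \log k)$ comparisons, and since we assume $k = \Oh(n / \log n)$, this is $\Oh(n)$. The base case when $|C| < \log^4 N$ sorts a polylogarithmic set with AKS, which is negligible compared to $n$. The fallback to \Call{DetMedian}{} (triggered when the sample misclassifies the median) contributes $\Oh(n) \cdot n^{-\Omega(1)}$ in expectation by Theorem~\ref{thm:detMed} and the bound from Lemma~\ref{lem:rand-median-not-in-c} on the failure probability.

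The heart of the argument is the probing phase, where I would argue that total work across all buckets is $\Oh(n)$. Each non-sampled element enters the cascade at the outer level $b{-}1$ and makes at most two comparisons per level until it is classified into some bucket. Using the definition $n_i = \Theta(\sqrt{k \log n}) \cdot d^{i-1}$ and the fact that $S$ is a uniform random sample of $X$, a Chernoff bound shows that for each level $j$ the fraction of elements in $X$ whose rank lies between the weakest pivots of $L_j$ and $R_j$ is concentrated around $2n_{j-1}/k$. Consequently, the expected number of elements surviving to level $j$ (i.e., not filtered at levels $b{-}1, \ldots, j{+}1$) is $\Oh(n \cdot n_j / k)$. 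Summing two comparisons per surviving element across levels and using that $n_j$ grows geometrically with ratio~$d \ge 2$, the total probing work is dominated by the outer contribution $\Oh(n)$ plus $\sum_{j=1}^{b-2} \Oh(n \cdot n_j / k) = \Oh(n \cdot n_{b-2}/k)$. The constraint that $n_b = \Theta(k)$ (so the buckets cover the sample) forces $n_{b-2}/k = \Oh(1/d^2)$, so this geometric tail sums to $\Oh(n/d^2) = \Oh(n)$. The extra constant-work operations per element (initial entry, adding to $C$, maintaining pivot marks) are absorbed.

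Finally I would handle the recursion. By Lemma~\ref{lem:rand-median-rec-depth}, the recursive subproblem has size $n' = \Oh(\sqrt{n \log n})$ whp, so letting $T(n)$ denote the expected total work, $T(n) \le cn + T(\Oh(\sqrt{n \log n})) + (\text{low-probability overflow})$. Since $\sqrt{n \log n} = o(n)$ and the problem sizes decrease geometrically in the repeated composition, $T(n)$ telescopes to $\Oh(n)$; the rare event that $|C|$ is larger than the whp bound contributes only lower-order terms because one can pessimistically charge the overflow to another full \randmed{} or to \Call{DetMedian}{}, both of which are $\Oh(n)$ in expectation and occur with probability $n^{-\Omega(1)}$.

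The main obstacle is the second step: carefully showing that the geometric growth of $n_i$ together with the constraint $n_b = \Theta(k)$ makes the sum of per-level comparison counts collapse to $\Oh(n)$, and verifying that the weak (probing-inserted) pivots, which are placed at arbitrary positions inside their bucket, do not change the range argument by more than a constant factor — this is what requires the $c \in \{1,2\}$ offset when inserting new pivots, and it must be checked that this does not undermine the geometric decay used in the work bound.
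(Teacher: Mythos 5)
Your decomposition (sampling cost, probing cascade, recursion, and the low-probability fallback to \Call{DetMedian}{}) matches the paper's proof, and your geometric-decay argument for the probing phase — expected survivors at level $j$ scale like $n\cdot n_j/k$, summing to $\Oh(n)$ — is simply a more explicit version of the paper's terse ``$\Oh(1)$ expected comparisons per item,'' with the recursion $g(n)=g(\sqrt{n\log n})+\Oh(n)$ closing the argument exactly as in the paper. You also handle the base case and the fallback branch, which the paper leaves implicit, so that part is if anything more complete.

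The one thing the paper discusses that you omit is the cost of selecting ``the element in $L_j$ with fewest compares'': the paper implements this with a minimum priority queue keyed on per-pivot comparison counts and then keeps each retrieved pivot for $d=\Omega(\log n)$ rounds so the $\Oh(\log n)$ queue overhead amortizes to $\Oh(1)$ per element comparison, while noting this retention does not change $\fmed$ or $\frem$ asymptotically. Strictly speaking these priority-queue operations compare integer counters rather than input elements, so under the paper's formal definition of work (number of comparisons on input elements) they do not contribute and your proof is adequate for the theorem as stated; the paper's paragraph is really addressing total running time of a concrete implementation. It would still be worth a sentence in your write-up to flag that ``fewest compares'' must be realized without extra input-element comparisons, which is what the priority-queue-plus-retention trick achieves.
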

\begin{proof}
  We consider the first recursion step and analyze the total number of comparisons.
  \randmed{} sorts $k$ elements using AKS resulting in $\Oh(k \log k) = \Oh(n)$ comparisons. 
  It then moves $\Oh(n)$ items through the filtering cascade consisting of buckets of geometrically decreasing size resulting of $\Oh(1)$ expected comparisons per item.
  Each bucket stores its pivots in a minimum priority queue with the number of comparisons endured by each pivot as keys.
  Even without exploitation of integer keys, retrieving and inserting keys is possible with $\Oh(\log n)$ comparisons each.
  Hence, we select a pivot and keep it for $d = \Omega(\log n)$ steps, resulting in amortized $\Oh(1)$ work per comparison.
  This does not affect $\fmed$ and $\frem$ asymptotically.
  Using Lemma~\ref{lem:rand-median-rec-depth}, the total number of comparisons hence $g(n) = g(\sqrt{n\log n}) + \Oh(n) = \Oh(n)$.
\end{proof}

\section{Sorting}\label{sec:main-sorting}
Recall from Section~\ref{sec:intro} that the few existing sorting
networks with depth $\Oh(\log n)$ are all based on expanders, while a
number of $\Oh(\log^2 n)$ depth networks have been developed based on
binary merging. Here, we study the power of the mergesort paradigm
with respect to fragile complexity. We first prove that any sorting
algorithm based on binary merging must have a worst-case fragile
complexity of $\Omega(\log^2 n)$.  This provides an explanation why
all existing sorting networks based on merging have a depth no better
than this. We also prove that the standard mergesort algorithm on
random input has fragile complexity $\Oh(\log n)$ with high
probability, thereby showing a separation between the deterministic
and the randomized situation for binary mergesorts. Finally, we
demonstrate that the standard mergesort algorithm has a worst-case
fragile complexity of $\Theta(n)$, but that this can be improved to
$\Oh(\log^2 n)$ by changing the merging algorithm to use exponential
search.

\begin{lemma}\label{lem:merge}
  Merging of two sorted sequences $A$ and $B$ has fragile complexity at least
  $\lfloor \log_2 |A| \rfloor + 1$.
\end{lemma}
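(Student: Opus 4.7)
The plan is to reduce merging to a search problem by fixing $|B|=1$, say $B=\{b\}$, and showing that the element $b$ must itself participate in at least $\lfloor \log_2|A|\rfloor + 1$ comparisons in the worst case. Since the fragile complexity is the maximum over all elements, this is enough. The reduction works because once $|B|=1$, merging $A$ and $B$ is equivalent to determining the rank of $b$ among the $|A|+1$ possible positions relative to the already-sorted sequence $A$.

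The key observation is that comparisons not involving $b$ are between two elements of $A$, whose relative order is already known from the sortedness of $A$. Such comparisons therefore yield no new information about $b$'s rank, and can be resolved by the adversary consistently with $A$ being sorted, regardless of where $b$ ends up. Hence only comparisons involving $b$ contribute to distinguishing the $|A|+1$ possible outputs of the merge.

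I would make this rigorous via an explicit adversary. Maintain a contiguous interval $I \subseteq \{0,1,\dots,|A|\}$ of still-possible ranks of $b$ in $A$, initialised to $I = \{0,1,\dots,|A|\}$ so that $|I|=|A|+1$. Whenever the algorithm compares $b$ with some $a_i \in A$, the interval $I$ splits into the two sub-intervals consistent with $b < a_i$ and $b > a_i$; the adversary answers so as to retain the larger sub-interval, so $|I|$ decreases by at most a factor of $2$ (rounded up). For comparisons between two elements of $A$, the adversary answers according to the known sorted order of $A$, leaving $I$ unchanged. The algorithm cannot correctly output the merged sequence until $|I|=1$. Hence if $k$ denotes the number of comparisons in which $b$ participates, we need $|A|+1 \le 2^k$, i.e., $k \ge \lceil \log_2(|A|+1)\rceil$.

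To finish, I would simply verify the arithmetic identity $\lceil \log_2(|A|+1)\rceil = \lfloor \log_2|A|\rfloor + 1$ for all $|A|\ge 1$ (both equal the unique $k$ with $2^{k-1} \le |A| \le 2^k-1$). There is no real obstacle here; the only thing to be careful about is explicitly ruling out that $A$-vs-$A$ comparisons could help the algorithm, which is handled by having the adversary commit upfront to the full sorted order on $A$ and leave $b$'s rank as the only remaining uncertainty.
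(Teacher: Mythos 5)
Your core argument (an adversary maintaining an interval of possible ranks for a designated element, halving it at each comparison with that element, and the arithmetic $\lceil \log_2(|A|+1) \rceil = \lfloor \log_2 |A| \rfloor + 1$) is correct and is exactly the mechanism the paper uses. However, you have introduced a restriction that the paper does not make and that creates a real gap: you ``fix $|B| = 1$.'' The lemma is a lower bound for merging $A$ and $B$ of arbitrary sizes, and the paper applies it (in Lemma~\ref{lem:balancedmergesort} and Theorem~\ref{thm:mergesort-lower}) to merges where $|B|$ is comparable to $|A|$; a lower bound proved only for the input shape $|B|=1$ does not transfer to an algorithm whose input shape is $|B| = |A|$. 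Moreover, the later application needs the adversary to be free to pick \emph{which} element of $B$ serves as the scapegoat (it must reuse the scapegoat from the subtree merges), and your setup with a singleton $B$ does not expose that choice.

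The paper's proof keeps $B$ general and obtains your situation as a consequence rather than an assumption: the adversary designates one element $x \in B$ as the scapegoat, commits up front that every element of $B$ below $x$ lies below all of $A$ and every element of $B$ above $x$ lies above all of $A$, and answers $A$-vs-$A$ and $B$-vs-$B$ comparisons by their sorted orders. After that commitment, the only residual uncertainty is the rank of $x$ within $A$ — precisely the $|A|+1$ possibilities you analyse — and only comparisons between $x$ and elements of $A$ can reduce it. To repair your proof, replace ``fix $|B|=1$'' with this pre-commitment step; the rest of your interval argument then goes through verbatim and matches the paper's reasoning.
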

\begin{proof}
  A standard adversary argument: The adversary designates one element $x$
  in $B$ to be the scapegoat and resolves in advance answers to
  comparisons between $A$ and $B_{1} = \{y \in B \mid y < x\}$ by
  $B_{1} < A$ and answers to comparisons between $A$ and
  $B_{2} = \{y \in B \mid x < y\}$ by $A < B_{2}$. There are still $|A|+1$
  total orders on $A \cup B$ compatible with these choices, one for each
  position of $x$ in the sorted order of~$A$. Only comparisons between $x$
  and members of $A$ can make some of these total orders incompatible with
  answers given by the adversary. Since the adversary can always choose to
  answer such comparisons in a way which at most halves the number of
  compatible orders, at least $\lfloor \log_2 |A| \rfloor + 1$ comparisons
  involving $x$ have to take place before a single total order is known.
\end{proof}

By standard \textsc{MergeSort}, we mean the algorithm which divides the
$n$~input elements into two sets of sizes $\lceil n/2 \rceil$ and
$\lfloor n/2 \rfloor$, recursively sorts these, and then merges the
resulting two sorted sequences into one. The merge algorithm is not
restricted, unless we specify it explicitly (which we only do for the
upper bound, not the lower bound).

\begin{lemma}\label{lem:balancedmergesort}
  Standard \textsc{MergeSort} has fragile complexity $\Omega(\log^2 n)$.
\end{lemma}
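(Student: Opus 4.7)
The plan is to prove the $\Omega(\log^2 n)$ lower bound by induction on $n$, using an adversary argument that follows the recursion tree of \textsc{MergeSort} and iteratively re-applies the scapegoat trick from Lemma~\ref{lem:merge} at every level, always pinning the same element.

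Let $M(n)$ denote the worst-case fragile complexity of \textsc{MergeSort} on $n$ elements. I would prove $M(n) \ge M(\lfloor n/2\rfloor) + \lfloor\log_2\lfloor n/2\rfloor\rfloor + 1$, from which $M(n) = \Omega(\log^2 n)$ follows by unrolling the recurrence along the $\Theta(\log n)$ levels. The base case $n=1$ is trivial. For the inductive step, I would describe an adversary that answers the comparisons performed by \textsc{MergeSort} in two phases. First, while the algorithm recursively sorts the left half $L$ of size $\lfloor n/2\rfloor$, the adversary follows the inductive strategy, producing an element $x\in L$ that participates in at least $M(\lfloor n/2\rfloor)$ comparisons. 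During this phase, no comparisons at all take place between $L$ and the right half $R$, because the two recursive calls are independent. Second, when the algorithm finally merges the now-sorted $L$ and $R$, the adversary invokes the strategy from Lemma~\ref{lem:merge} on $A=R$ and $B=L$, choosing $x$ as the scapegoat, thereby forcing at least $\lfloor\log_2\lfloor n/2\rfloor\rfloor + 1$ further comparisons to touch $x$.

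The main thing to check carefully is that the adversary's answers are globally consistent, i.e.\ that a single total order on the $n$ elements agrees with every response. The answers given during Phase~1 fix only the relative order \emph{within} $L$ (and, separately, within $R$ via whatever strategy we use there); no comparison crossing $L$ and $R$ has been asked yet. The scapegoat strategy of Lemma~\ref{lem:merge} requires exactly this freedom: it demands that $B_1=\{y\in L : y<x\}$ can be declared smaller than all of $R$, and $B_2=\{y\in L : x<y\}$ larger than all of $R$, leaving $x$'s position inside the sorted order of $R$ undetermined. Because $x$ has not previously been compared to any element of $R$, and because $B_1,B_2$ are the already-fixed lower and upper parts of $L$, this declaration is consistent with all Phase~1 answers. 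Hence the merge phase can legitimately force an additional $\lfloor\log_2\lfloor n/2\rfloor\rfloor+1$ comparisons on $x$.

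Combining the two phases, $x$ endures at least
\[
   M(\lfloor n/2\rfloor) + \lfloor\log_2\lfloor n/2\rfloor\rfloor + 1
\]
comparisons in total, establishing the recurrence. Telescoping gives $M(n)\ge \sum_{i=1}^{\lfloor\log_2 n\rfloor}(i-1)=\Omega(\log^2 n)$, proving the lemma. The delicate point — and the one I would spell out most carefully — is the consistency argument above, since it is what allows the adversary to stack the scapegoat trick across all $\Theta(\log n)$ merge levels involving the same element, rather than only at the top level.
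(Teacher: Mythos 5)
Your proposal is correct and is essentially the paper's own argument: both reuse the scapegoat from Lemma~\ref{lem:merge} across the levels of the mergetree, forcing the same element to absorb $\Theta(\log(\text{level size}))$ comparisons at each of the $\Theta(\log n)$ merges. The paper states this as a bottom-up sum over the mergetree while you phrase it as a top-down recurrence $M(n)\ge M(\lfloor n/2\rfloor)+\lfloor\log_2\lfloor n/2\rfloor\rfloor+1$, and you spell out the consistency of the adversary's answers across phases (which the paper leaves implicit), but these are presentational differences only.
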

\begin{proof}
  In \textsc{MergeSort}, when merging two sorted sequences $A$ and $B$, no
  comparisons between elements of $A$ and $B$ have taken place before the
  merge. Also, the sorted order of $A \cup B$ has to be decided by the
  algorithm after the merge. We can therefore run the adversary argument
  from the proof of Lemma~\ref{lem:merge} in all nodes of the mergetree of
  \textsc{MergeSort}. If the adversary reuses scapegoat elements in a bottom-up
  fashion---that is, as scapegoat for a merge of $A$ and $B$ chooses one
  of the two scapegoats from the two merges producing $A$ and $B$---then
  the scapegoat at the root of the mergetree has participated in
  \[\Omega(\sum_{i=0}^{\log n} \log 2^i ) = \Omega(\sum_{i=0}^{\log n} i
  ) = \Omega(\log^2 n)\]
  comparisons, by Lemma~\ref{lem:merge} and the fact that a node at
  height~$i$ in the mergetree of standard \textsc{MergeSort} operates on sequences
  of length $\Theta(2^i)$.
\end{proof}
We now show that making unbalanced merges cannot improve the fragile
complexity of binary \textsc{MergeSort}.
\begin{theorem}\label{thm:mergesort-lower}
  Any binary mergesort has fragile complexity $\Omega(\log^2 n)$.
\end{theorem}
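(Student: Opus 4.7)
The plan is to extend the cascading-scapegoat argument from Lemma~\ref{lem:balancedmergesort} (the balanced case) to an arbitrary binary merge tree. A binary mergesort is described by a binary tree $T$ with $n$ leaves, one merge per internal node; crucially, at each merge of the two sorted sequences $A_v,B_v$ associated to a node $v$, no comparisons between $A_v$ and $B_v$ have taken place before, so the adversary of Lemma~\ref{lem:merge} applies locally. Running this adversary independently at every merge, and inheriting the scapegoat at each internal node $v$ from one of the two scapegoats produced by the lower merges, the element surviving as scapegoat at the root is some input lying on a root-to-leaf path $P$ of $T$; by Lemma~\ref{lem:merge} it has participated in at least
\[ C(P) \ :=\ \sum_{v\in P\cap\mathrm{int}(T)} \bigl(\lfloor \log a_v \rfloor + 1\bigr) \]
comparisons, where $a_v$ denotes the number of leaves in the sibling subtree of the child of $v$ lying on $P$.

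The theorem thus reduces to the purely combinatorial claim that every binary tree $T$ with $n$ leaves admits a root-to-leaf path $P$ with $C(P) = \Omega(\log^2 n)$. Let $M(n)$ be the minimum of $\max_\ell C(P_\ell)$ taken over all binary trees with $n$ leaves. Splitting $T$ at its root into subtrees of sizes $s_L\le s_R$ with $s_L+s_R=n$ gives
\[ M(n)\ \ge\ \min_{s_L+s_R=n}\max\bigl\{M(s_L)+\lfloor\log s_R\rfloor+1,\ M(s_R)+\lfloor\log s_L\rfloor+1\bigr\}. \]
I would prove $M(n)\ge \tfrac12\log^2 n$ by strong induction. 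A short calculation using the ansatz shows that the second argument inside the max dominates for all splits with $s_Ls_R\ge 4$, so the induction step reduces to verifying
\[ \lfloor\log s_L\rfloor+1\ \ge\ \tfrac12\bigl(\log^2 n-\log^2 s_R\bigr) \]
for every split with $1\le s_L\le s_R$. Writing $\delta=\log n-\log s_R\in(0,1]$ turns the claim into $\log s_L+1\ge \tfrac12\delta(2\log n-\delta)$, where $s_L=n(1-2^{-\delta})$. A direct computation of the second derivative shows the gap is a concave function of $\delta$ on its feasible interval, so its minimum lies on the boundary. At $\delta=1$ (balanced split) the gap equals exactly $\tfrac12$; at the minimum feasible $\delta=\log(n/(n-1))$ (the singleton split $s_L=1$) the right-hand side is $O(\log n/n)$ while the left-hand side is $1$, and both are positive for large $n$.

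The main obstacle is the slightly unbalanced regime, where the smaller subtree is too large for the trivial bound $M(n)\ge M(n-1)+1$ (coming from the singleton split) to carry, yet not balanced enough for the analysis of Lemma~\ref{lem:balancedmergesort} to transfer directly. The concavity observation is what sidesteps this difficulty: checking the two extreme splits is enough to control the entire interior of the $\delta$-range. The few small base cases are verified by hand, and combining the resulting combinatorial claim with the scapegoat-cascade construction above yields the asserted $\Omega(\log^2 n)$ lower bound on the fragile complexity of any binary mergesort.
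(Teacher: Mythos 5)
Your plan is, at bottom, the same proof the paper gives: run the Lemma~\ref{lem:merge} adversary at every merge, always inherit the scapegoat from the \emph{larger} of the two child subtrees, and establish by induction that a node merging $n$ leaves forces at least $c\log^2 n$ comparisons on its scapegoat. Your recurrence for $M(n)$, the decision to use the branch through the larger subtree, and the use of concavity to reduce the induction step to a boundary check are all faithful restatements of what the paper does (the paper applies concavity to $\log^2 x$ via a tangent line rather than to your gap function $g(\delta)$, but it is the same device).

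However, your specific ansatz $M(n) \ge \tfrac12\log^2 n$ does \emph{not} close the induction, and the error is the replacement of $\lfloor \log s_L\rfloor + 1$ by $\log s_L + 1$. This substitution goes the wrong way: since $\lfloor x\rfloor + 1 \le x+1$ (with equality only at integers), proving $\log s_L + 1 \ge \tfrac12(\log^2 n - \log^2 s_R)$ does not imply the bound you actually need. The honest lower bound is only $\lfloor \log s_L\rfloor + 1 > \log s_L$, and with that the balanced split $\delta = 1$ gives a \emph{negative} gap: $(\log n - 1) - (\log n - \tfrac12) = -\tfrac12$. Concretely, take $n = 2(2^k-1)$ with the balanced split $s_L = s_R = 2^k-1$; then $\lfloor\log s_L\rfloor + 1 = k$ while $\tfrac12(\log^2 n - \log^2 s_R) = \log(2^k-1) + \tfrac12$, which exceeds $k$ for every $k\ge 2$. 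So this is not a finite collection of small base cases you can check by hand --- the induction step fails for arbitrarily large $n$. The remedy is exactly what the paper does: leave the constant unspecified and take $c$ small enough (any fixed $c < \tfrac12$ works), so that the slack from $\lfloor\log s_L\rfloor + 1 \ge \log s_L$ is sufficient. With that adjustment, the rest of your argument (the reduction to $M(n)$, the choice of the second term in the max, and the concavity/boundary check in $\delta$) is correct and gives $M(n) = \Omega(\log^2 n)$ as required.
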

\begin{proof}
  The adversary is the same as in the proof of
  Lemma~\ref{lem:balancedmergesort}, except that as scapegoat element for
  a merge of $A$ and $B$ it always chooses the scapegoat from the
  \emph{larger} of $A$ and $B$. We claim that for this adversary, there is
  a constant $c > 0$ such that for any node~$v$ in the mergetree, its
  scapegoat element has participated in at least $c \log^2 n$ comparisons
  in the subtree of~$v$, where $n$ is the number of elements merged
  by~$v$. This implies the theorem.

  We prove the claim by induction on~$n$. The base case is $n = \Oh(1)$,
  where the claim is true for small enough~$c$, as the scapegoat by
  Lemma~\ref{lem:merge} will have participated in at least one
  comparison. For the induction step, assume~$v$ merges two sequences of
  sizes $n_1$ and $n_2$, with $n_1 \ge n_2$. By the base case, we can
  assume $n_1 \ge 3$. Using Lemma~\ref{lem:merge}, we  would like to prove for the induction step
  \begin{equation}
    c\log^2 n_1 + \lfloor \log n_2 \rfloor + 1  \ge c\log^2
    (n_1+n_2).\label{eq:inductionstep}
  \end{equation}

  This will follow if we can prove that
  \begin{equation}
    \log^2 n_1 + \frac{\log n_2}{c} \ge \log^2 (n_1+n_2)\,.\label{eq:inductionstepv2}
  \end{equation}
  The function~$f(x) = \log^2 x$ has first derivative $2 (\log x)/x $ and
  second derivative~$2(1-\log x)/x^2$, which is negative for
  $x > e = 2.71\dots$.  Hence, $f(x)$ is concave for $x > e$, which means
  that first order Taylor expansion (alias the tangent) lies above~$f$,
  i.e., $f(x_0) + f'(x_0)(x-x_0) \ge f(x)$ for $x_0, x > e$. Using
  $x_0 = n_1$ and $x = n_1 + n_2$ and substituting the first order Taylor
  expansion into the right side of (\ref{eq:inductionstepv2}), we see that
  (\ref{eq:inductionstepv2}) will follow if we can prove
  \begin{equation*}
    \frac{\log n_2}{c} \ge
    2\frac{\log n_1}{n_1}n_2\,,
  \end{equation*}
  which is equivalent to
  \begin{equation}
    \frac{\log n_2}{n_2} \ge
    2c\frac{\log n_1}{n_1}\,.\label{eq:inductionstepv3}
  \end{equation}
  Since $n_1 \ge n_2$ and $(\log x)/ x$ is decreasing for $x \ge e$, we see
  that (\ref{eq:inductionstepv3}) is true for $n_2 \ge 3$ and~$c$ small
  enough. Since $\log(3)/3 = 0.366\dots$ and $\log{2}/2 = 0.346\dots$, it is
  also true for $n_2 = 2$ and~$c$ small enough. For the final case of
  $n_2 = 1$, the original inequality~(\ref{eq:inductionstep}) reduces to
  \begin{equation}
    \log^2 n_1 + \frac{1}{c} \ge \log^2(n_1+1)\,.\label{eq:inductionstepv4}
  \end{equation}
  Here we can again use concavity and first order Taylor approximation
  with $x_0 = n_1$ and $x = n_1 + 1$ to argue that
  (\ref{eq:inductionstepv4}) follows from
  \begin{equation*}
    \frac{1}{c} \ge 2\frac{\log n_1}{n_1}\,.
  \end{equation*}
  which is true for $c$ small enough, as $n_1 \ge 3$ and $(\log x)/ x$ is
  decreasing for $x \ge e$.
\end{proof}

\subsection{Upper Bound for MergeSort with Linear Merging}
By \emph{linear merging}, we mean the classic sequential merge algorithm that takes two input sequence and iteratively moves the the minimum of both to the output.

\begin{observation}\label{obs:merging-cost}
  Consider two sorted sequences $A=(a_1, \ldots, a_n)$ and $B=(b_1, \ldots, b_n)$.
  In linear merging, the fragile complexity of element $a_i$ is at most $\ell+1$ where $\ell$ is the largest number of elements from $B$ that are placed directly in front of $a_i$ (i.e.  $b_j < \ldots < b_{j+\ell-1} < a_i$).
\end{observation}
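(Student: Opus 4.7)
The plan is to trace through the linear merge algorithm directly, since the observation is operational in nature. I will maintain the standard description of linear merging: two pointers $p$ into $A$ and $q$ into $B$ start at $1$, and at each step the algorithm compares $a_p$ with $b_q$, outputs the smaller of the two, and advances the corresponding pointer. The crucial structural fact I will exploit is that an element $a_i \in A$ can only participate in a comparison while $p = i$; before $p$ reaches $i$ it is untouched, and once it is written to the output the pointer advances to $i+1$ and $a_i$ is never looked at again.

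I would then argue as follows. Suppose $p$ reaches the value $i$ at the moment when $q$ has some value $j$ (this happens either initially, if $i=1$, or immediately after $a_{i-1}$ is written to the output). From that moment on, the algorithm performs a run of comparisons of $a_i$ against $b_j, b_{j+1}, b_{j+2}, \ldots$ At each such comparison, if $b_{j+t} < a_i$ then $b_{j+t}$ is output and $q$ is advanced; otherwise $a_i$ is output, $p$ is advanced to $i+1$, and $a_i$ is never compared again. The run therefore terminates either (a) at the first index $j+\ell$ with $b_{j+\ell} > a_i$, in which case $a_i$ has participated in exactly $\ell + 1$ comparisons and $b_j, \ldots, b_{j+\ell-1}$ are precisely the $B$-elements placed directly in front of $a_i$ in the output, or (b) when $B$ is exhausted, in which case $a_i$ has participated in at most $\ell$ comparisons, where $\ell$ is again the number of $B$-elements placed directly in front of it.

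In both cases the fragile complexity of $a_i$ is bounded by $\ell + 1$, which is what we wanted. No induction, no adversary, no probabilistic arguments are needed; the entire content is just reading off what linear merge does while $p = i$. The only subtlety to be careful about is the boundary case when $B$ is exhausted before $a_i$ is output, which gives the slightly stronger bound $\ell$ but still fits under $\ell + 1$, and the (trivial) base case $i = 1$ where the initial value of $q$ is $1$; both fit the same template so I would handle them in one sentence at the end rather than splitting cases. I do not anticipate any real obstacle here, as the statement is essentially a direct observation about the pointer dynamics of linear merge.
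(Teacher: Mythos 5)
The paper states this as an \emph{observation} and gives no proof at all, treating it as self-evident from the definition of linear merging. Your argument is precisely the direct justification one would supply: element $a_i$ can only be involved in comparisons during the window in which the $A$-pointer equals $i$, and during that window the algorithm compares $a_i$ in turn against $b_j, b_{j+1}, \ldots$ until either some $b_{j+\ell}$ exceeds $a_i$ (giving $\ell+1$ comparisons, with $b_j, \ldots, b_{j+\ell-1}$ emitted immediately before $a_i$) or $B$ runs out (giving at most $\ell$ comparisons). This is correct, complete, and handles the boundary cases ($i=1$ and $B$ exhausted) appropriately, so it is exactly the argument the authors left implicit. No gaps.
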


\begin{theorem}
  \label{thm:mergesort-linear}
  Standard \textsc{MergeSort} with linear merging has a worst-case fragile complexity of $\Theta(n)$.
\end{theorem}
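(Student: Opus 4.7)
The plan is to establish matching $O(n)$ upper and $\Omega(n)$ lower bounds on the worst-case fragile complexity of standard \textsc{MergeSort} with linear merging. Both directions flow directly from Observation~\ref{obs:merging-cost}, and the two bounds turn out to be governed by the same phenomenon: a single linear merge of two sorted runs of equal length can already charge $\Omega(n)$ comparisons to one element.

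For the upper bound, I would fix an arbitrary element $e$ and sum its contributions level by level in the mergetree. At level $i$ (counting from the leaves), $e$ takes part in exactly one merge, whose opposite-side sequence has length at most $2^i$. Observation~\ref{obs:merging-cost} says that $e$'s fragile complexity contribution from that merge is at most $\ell+1$, where $\ell$ is the longest run of opposite-side elements immediately preceding $e$ in the merged output; clearly $\ell \le 2^i$. Summing over all $\lceil \log_2 n \rceil$ levels gives $\sum_{i=0}^{\lceil \log_2 n \rceil} (2^i+1) = O(n)$, uniformly over $e$.

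For the lower bound, I would use the already-sorted input $1, 2, \dots, n$. Standard \textsc{MergeSort} splits positionally, so the two halves at every recursive level are themselves already sorted; in particular the two top-level halves are $L = \{1,\dots,\lfloor n/2\rfloor\}$ and $R = \{\lfloor n/2\rfloor+1,\dots,n\}$, and no comparisons inside the recursion on $L$ or $R$ matter for what follows. Linear merging of $L$ and $R$ then repeatedly compares the head of $L$ against the head of $R$, which is $x = \lfloor n/2\rfloor+1$, outputs the $L$-head, and advances $L$; this happens $\lfloor n/2\rfloor$ times before $L$ is exhausted and the remainder of $R$ is dumped. Hence $x$ alone participates in $\lfloor n/2\rfloor = \Omega(n)$ comparisons in this single merge, which is already a lower bound on the worst-case fragile complexity. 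Equivalently, this is exactly the $\ell+1$ bound of Observation~\ref{obs:merging-cost} applied with $\ell = \lfloor n/2\rfloor$.

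I do not foresee a real obstacle here; the only mild care needed is in the upper bound, where one must explicitly invoke Observation~\ref{obs:merging-cost} at every level rather than appealing to a global work bound (the total work is $\Theta(n\log n)$, which is too weak per element). The matching lower bound confirms that the $2^{\lceil\log_2 n\rceil}$ term at the root of the geometric sum is truly attained, so no refinement of either direction is possible.
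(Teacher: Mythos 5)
Your proof is correct, and the upper bound is essentially identical to the paper's: fix an element, bound its per-level contribution via Observation~\ref{obs:merging-cost} by the length of the opposite run, and sum the geometric series $\sum_i 2^i = O(n)$. (As a side note, the paper's own proof has its two labels ``Lower bound'' and ``Upper bound'' accidentally swapped relative to the arguments that follow them; you have them the right way around.)

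Your lower bound, however, is genuinely simpler than the paper's. The paper constructs the input $(n, 1, 2, \ldots, n-1)$ and traces the element $n$ down the entire left spine of the mergetree, charging it $\Theta(\ell)$ comparisons at each node on that spine (via Observation~\ref{obs:merging-cost}) and summing the geometric series to get $\Theta(n)$ in total. You instead take the already-sorted input and look only at the root merge: the first element of the right half, $\lfloor n/2\rfloor+1$, is the losing party in every one of the $\lfloor n/2\rfloor$ head-to-head comparisons before the left half is exhausted, giving $\Omega(n)$ from a single merge. This avoids any cascading argument and requires no control over what happens at lower levels. The paper's construction does have the pedagogical advantage that it demonstrates the geometric accumulation (the same mechanism the upper bound must control), whereas yours shows the bound can already be saturated in one shot — both are fine, and yours is arguably the cleaner witness.
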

\begin{proof}
  \textsl{Lower bound $f(n) = \Omega(n)$}: linear merging requires $\Oh(k)$ comparisons to output a sequence of length $k$.
  In standard \textsc{MergeSort}, each element takes part in $\Oh(\log n)$ merges of geometrically decreasing sizes $n/2^i$ (from root), resulting in $\Oh(n)$ comparisons.
  
  \textsl{Upper bound $f(n) = \Oh(n)$}: consider the input sequence $(n, 1, 2, \ldots, n{-}1)$ where $n = 2^k$.
  Then every node on the the left-most path of the mergetree contains element $n$.
  In each merging step, we receive $A=(1, \ldots, \ell-1, n)$ from the left child, $B=(\ell, \ldots, 2\ell{-}1)$ from the right, and produce 
  \[(
  \underbrace{1, \ldots, \ell{-}1}_{\text{from } A},\
  \underbrace{\ell, \ldots, 2\ell{-}1}_{\text{from } B},\ 
  \underbrace{\phantom{\ell,}n\phantom{\ell,}}_{\text{from } A}
  )\,.
  \]

  Hence, the whole sequence $B$ is placed directly in front of element $n$, resulting in $\Theta(\ell)$ comparisons with this element according to Observation~\ref{obs:merging-cost}.
  Then, the sum of the geometrically increasing sequence length yields the claim.
\end{proof}

\begin{lemma}\label{lem:geom-subset}
  Let $X = \{x_1, \ldots, x_{2k}\}$ be a finite set of distinct elements, and consider a random bipartition $X_L, X_R \subset X$ with $|X_L| = |X_R| = k$ and $X_L \cap X_R = \emptyset$, such that $\pr{x_i \in X_L} = 1/2$.
  Consider an arbitrary ordered set $Y = \{y_1, \ldots, y_m\} \subset X$ with $m \le k$.
  Then $\pr{Y \subseteq X_L \lor Y \subseteq X_R} < 2^{1 - m}$.
\end{lemma}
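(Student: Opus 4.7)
The plan is a direct counting argument on uniform random bipartitions. I would first fix an arbitrary ordering so that, up to symmetry, it suffices to compute the probability of $Y \subseteq X_L$ and double it. Since the total number of $k$-subsets of $X$ is $\binom{2k}{k}$ and the $k$-subsets containing all $m$ elements of $Y$ correspond to choosing the remaining $k-m$ elements from the $2k-m$ elements of $X \setminus Y$, I get $\pr{Y \subseteq X_L} = \binom{2k-m}{k-m} / \binom{2k}{k}$. The events $Y \subseteq X_L$ and $Y \subseteq X_R$ are disjoint whenever $m \ge 1$ (their common occurrence would force $Y \subseteq X_L \cap X_R = \emptyset$), so
\[
\pr{Y \subseteq X_L \,\lor\, Y \subseteq X_R} \;=\; 2 \cdot \frac{\binom{2k-m}{k-m}}{\binom{2k}{k}}.
\]

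Next I would simplify the ratio of binomial coefficients by cancelling factorials to obtain the telescoping product
\[
\frac{\binom{2k-m}{k-m}}{\binom{2k}{k}} \;=\; \prod_{i=0}^{m-1} \frac{k-i}{2k-i}.
\]
The key observation is that each factor satisfies $\frac{k-i}{2k-i} \le \frac{1}{2}$, with equality only when $i = 0$. For $i \ge 1$ the inequality is strict because $\frac{k-i}{2k-i} = \frac{1}{2} - \frac{i}{2(2k-i)}$. Multiplying $m$ such factors gives a bound of $(1/2)^m$ that is strict whenever at least one factor with $i \ge 1$ appears, i.e.\ for $m \ge 2$. Doubling yields the claimed $<2^{1-m}$.

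There is essentially no obstacle here beyond the algebra; the only subtlety is keeping the strict inequality honest. The proof as sketched gives the strict bound for $m \ge 2$; the $m=1$ boundary case merely asserts $\pr{Y \subseteq X_L \,\lor\, Y \subseteq X_R} \le 1$, which is automatic and is the sense in which the inequality is used in the application to \textsc{MergeSort} (which invokes the lemma only with $m \ge 2$). I would therefore state the argument for general $m$ and note that strictness comes from the $i = 1$ factor.
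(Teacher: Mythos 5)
Your proof is correct and follows essentially the same route as the paper: both arguments reduce to the telescoping product $\prod_{i=0}^{m-1}\frac{k-i}{2k-i}$, the paper writing it as a chain of conditional probabilities and you arriving at it via the ratio $\binom{2k-m}{k-m}/\binom{2k}{k}$. You are also right to note that strictness fails at $m=1$ (there the probability equals $1 = 2^{1-1}$); the paper's own derivation in fact only establishes the non-strict bound $\le 2^{1-m}$, which is all the \textsc{MergeSort} application uses.
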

\begin{proof}
  \begin{equation*}
    \pr{Y \subseteq X_L \lor Y \subseteq X_R}
    = 2\prod_{i=1}^{m} \pr{\,y_i \in X_L \,\middle|\, y_1, \ldots y_{i-1} \in X_L\,}
    = 2\frac{(2k)^{-m} k!}{(k-m)!} \le 2\cdot 2^{-m}. \qedhere
  \end{equation*}
\end{proof}

\begin{theorem}
  \label{thm:mergesort-whp}
  Standard \textsc{MergeSort} with linear merging on a randomized input permutation has a fragile complexity of $\Oh(\log n)$ with high probability.
\end{theorem}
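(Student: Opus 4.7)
The plan is to bound, for each input element $x$, the total number of comparisons $x$ suffers across all $\log n$ merge levels of the recursion tree, and then show via concentration that this sum is $\Oh(\log n)$ with probability at least $1 - n^{-2}$, followed by a union bound over the $n$ elements. Let $X_h$ denote the height-$h$ merge set containing $x$ (so $|X_h| = 2^{h+1}$), and let $C_h$ denote the number of comparisons $x$ participates in at the height-$h$ merge.

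First I would establish a per-level tail bound. By Observation~\ref{obs:merging-cost}, $C_h \le \ell + 1$, where $\ell$ is the length of the longest run of elements from $x$'s sibling block placed directly in front of $x$ in the merged output, so the event $C_h \ge c$ forces the $c-1$ elements immediately below $x$ in the sorted order of $X_h$ all to lie in the sibling block. For a uniformly random input permutation, conditional on the multiset $X_h$, the partition of $X_h$ into $X_{h-1}$ (containing $x$) and its sibling is a uniformly random balanced bipartition. A direct calculation essentially identical to Lemma~\ref{lem:geom-subset}, with an extra factor of $2$ absorbing the conditioning on $x$'s side, gives
\[
    \pr{C_h \ge c \,\middle|\, X_h} \,\le\, 2^{2-c}\,.
\]

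For concentration across levels, pick any $\lambda \in (0, \ln 2)$; the tail bound converts to a uniform MGF estimate $\E{e^{\lambda C_h} \,\middle|\, X_h} \le \alpha$ for an absolute constant $\alpha = \alpha(\lambda)$. The key structural fact is that the nested merge sets $X_0 \subset X_1 \subset \cdots \subset X_{\log n - 1}$ together with their level-wise bipartitions form a top-down filtration: conditional on $X_{h+1}$ and the bipartition that splits it (which together determine $X_h$), the partition of $X_h$ at height $h$ is a fresh uniformly random balanced bipartition. Iterating the tower property level by level from the top down then yields $\E{\exp(\lambda \sum_h C_h)} \le \alpha^{\log n} = n^{\log_2 \alpha}$. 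Markov's inequality gives $\pr{\sum_h C_h > c_0 \log n} \le n^{\log_2 \alpha - \lambda c_0 / \ln 2}$, which drops below $n^{-2}$ for a sufficiently large constant $c_0$. A union bound over the $n$ input elements then establishes $\Oh(\log n)$ \frag with high probability.

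The main technical subtlety will be justifying the MGF product decomposition rigorously: merge sets at different levels are nested and therefore highly correlated, yet at each level the partition is freshly uniform given the merge-set content, which is exactly what is required for each level to contribute a constant MGF factor independent of the history. Setting up the top-down filtration carefully, so that each $C_h$'s MGF bound holds conditional on all strictly higher-level information and contributes independently, is where the care is needed.
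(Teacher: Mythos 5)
Your proposal is correct and follows essentially the same high-level route as the paper: per-level, use Observation~\ref{obs:merging-cost} together with the uniform-random-bipartition view of the mergetree (Lemma~\ref{lem:geom-subset}) to get a geometric tail on the number of comparisons, then combine across the $\Theta(\log n)$ levels via concentration and a union bound. The two differences are worth noting. First, for the final concentration step the paper invokes an off-the-shelf tail bound for sums of iid geometric random variables (Janson 2018), whereas you derive the bound directly through a moment-generating-function / Chernoff argument; both give the needed $n^{-2}$-type tail. Second, and more substantively, the paper simply asserts that ``$N_{j,i}$ and $Y_{j,i}$ are iid in $j$ due to the recursive partitioning argument,'' which is slightly imprecise: the per-level comparison counts $N_{j,i}$ are \emph{not} unconditionally independent across levels, since the merge set $X_h$ containing the tracked element is a random set determined by all higher-level partitions, and the conditional law of $C_h$ given $X_h$ genuinely depends on $X_h$ (for instance, on the rank of the tracked element within $X_h$). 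What is true, and what both proofs actually need, is that the tail bound $\pr{C_h \ge c \mid X_h} \le 2^{2-c}$ holds \emph{uniformly} over $X_h$, so iterating the tower property top-down (your filtration) or, equivalently, a coupling with iid geometrics, gives the domination $\sum_h C_h \preceq \Oh(\log n) + \sum_h Y_h$. You correctly identify this as the technical subtlety and sketch the right fix, so your version is arguably the more rigorous of the two on this point. One small cosmetic remark: your phrase ``extra factor of $2$ absorbing the conditioning'' is not quite how the constant arises -- the event you need (the $c-1$ elements immediately below $x$ all landing on the sibling side) is simply a sub-event of ``$Y \subseteq X_L$ or $Y \subseteq X_R$,'' so Lemma~\ref{lem:geom-subset} already gives $\pr{C_h \ge m+1} < 2^{1-m}$ directly -- but the resulting bound $2^{2-c}$ is the same, and nothing downstream is affected.
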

\begin{proof}
  Let $Y=(y_1, \ldots, y_n)$ be the input-sequence, $\pi^{-1}$ be the permutation that sorts $Y$ and $X = (x_1, \ldots, x_n)$ with $x_i = y_{\pi^{-1}(i)}$ be the sorted sequence.
  Wlog we assume that all elements are unique\footnote{%
    If this is not the case, use input sequence $Y' = ((y_1, 1), \ldots, (y_n, n))$ and lexicographical compares.%
  }, that any input permutation $\pi$ is equally likely\footnote{If not shuffle it before sorting in linear time and no fragile comparisons.}, and that $n$ is a power of two.

  \textsl{Merging in one layer}.
  Consider any merging-step in the mergetree.
  Since both input sequences are sorted, the only information still observable from the initial permutation is the bi-partitioning of elements into the two subproblems.
  Given $\pi$, we can uniquely retrace the mergetree (and vice-versa):
  we identify each node in the recursion tree with the set of elements it considers.
  Then, any node with elements $X_P = \{y_\ell, \ldots, y_{\ell+2k-1}\}$ has children
  \begin{align*}
    X_L &= \left\{x_{\pi(i)}\ \middle|\ \ell   \le \pi(i) \le \ell+k-1 \, \right\} = \{y_\ell, \ldots, y_{\ell+k-1}\}\,, \\
    X_R &= \left\{x_{\pi(i)}\ \middle|\ \ell+k  \le  \pi(i) \le \ell+2k-1 \, \right\} = \{y_{\ell+k}, \ldots, y_{\ell+2k-1} \}\,.
  \end{align*}
  Hence, locally our input permutation corresponds to an stochastic experiment in which we randomly draw exactly half of the parent's elements for the left child, while the remainder goes to right.

  This is exactly the situation in Lemma~\ref{lem:geom-subset}.
  Let $N_i$ be a random variable denoting the number of comparisons of element $y_i$ in the merging step.
  Then, from Observation~\ref{obs:merging-cost} and Lemma~\ref{lem:geom-subset} it follows that $\pr{N_i = m{+}1} \le 2^{-m}$. 
  Therefore $N_i$ is stochastically dominated by $N_i \preceq 1{+}Y_i$ where $Y_i$ is a geometric random variable with success probability $p = 1/2$.

  \textsl{Merging in all layers}.
  Let $N_{j,i}$ be the number of times element $y_i$ is compared in the $j$-th recursion layer and define $Y_{j,i}$ analogously.
  Due to the recursive partitioning argument, $N_{j,i}$ and $Y_{j,i}$ are iid in~$j$.
  Let $N^T_i$ be the total number of comparisons of element~$i$, i.e. $N^T_i \preceq \log_2 n + \sum_{j=1}^{\log_2 n} Y_{j,i} $.
  Then a tail bound on the sum of geometric variables (Theorem 2.1 in~\cite{JANSON20181}) yields:
  \begin{equation*}
    \pr{
    \sum_{j=1}^{\log_2 n} Y_{j,i} \, \ge\,  \lambda \E{ \sum_{j=1}^{\log_2 n} Y_{j,i} } =2 \lambda \log_2 n}
      \stackrel{\text{\cite{JANSON20181}}}{\le} 
      \exp\left(-\frac{1}{2}  \frac{2\ln n}{\ln 2} [\lambda {-} 1 {-} \log \lambda] \right)
      = n^{-2},
  \end{equation*}
    where we set $\lambda \approx 3.69$ in the last step solving $\lambda  {-} \log \lambda = 2\log 2$.
    Thus, we bound the probability $\pr{ N^T_i \ge (1 {+} 2\lambda) \log_2 n } \le n^{-2}$.

    \textsl{Fragile complexity}.
    It remains to show that with high probability no element exceeds the claimed fragile complexity.
    We use a union bound on $N^T_i$ for all $i$:
    \begin{equation*}
      \pr{\max_i\{N^T_i\} = \omega(\log n)}
      \le n \pr{N^T_i = \omega(\log n)} \le 1/n\,. \qedhere
    \end{equation*}
\end{proof}

\subsection{Upper Bound for MergeSort with Exponential Merging}
\begin{figure}
  \centering
  \includegraphics[width=0.6\textwidth]{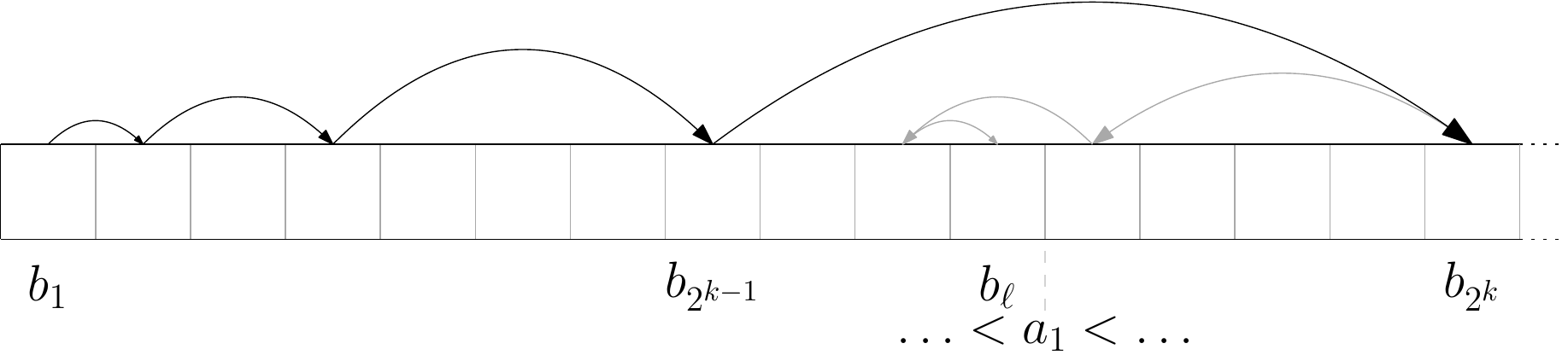}
  \caption{The Exponential search performs $k$ doubling steps and overshoots the target $b_\ell$ with $b_\ell < a_1 < b_{\ell+1}$.
    A binary search between $b_{2^{k-1}}$ and $b_{2^k}$ ultimately identifies $b_\ell$ in $\Oh(k)$ steps.}
  \label{fig:exponentialSearch}
\end{figure}

  We define \emph{exponential merging} of sequences $A=(a_1, \ldots, a_n)$ and $B=(b_1, \ldots, b_m)$ as follows:
  if either $A$ or $B$ are empty, output the other one and stop.
  Otherwise, assume without loss of generality that $m$ is a power of two and that there exists an $b_i \in Y$ with $a_1 < b_i$, if not append sufficiently many virtual elements $b_\top$ to $B$ with $a_1 < b_\top$.
  Use an exponential search on $B$ starting in $b_1$ to find all elements $b_1 < \ldots < b_\ell < a_1$ smaller than $a_1$.
  As illustrated in Fig.~\ref{fig:exponentialSearch}, the exponential search consists of a \emph{doubling phase} which finds the smallest $k$ with $a_1 < b_{2^k}$.
  Since the doubling phase may overshoot $b_\ell$, a \emph{binary search} between $b_{2^{k-1}}$ and $y_{b^k}$ follows.
  Output $b_1, \ldots, b_\ell, a_1$ and recurse on $A'=[b_{\ell+1}, \ldots, b_m]$ and $B'=[a_2, \ldots, a_n]$ which swaps the roles of $A$ and $B$.
\begin{theorem}\label{thm:exp_merging}
  Exponential merging of two sequences $A=(a_1, \ldots, a_n)$ and $B=(b_1, \ldots, b_n)$ has a worst-case fragile complexity of $\Oh(\log{n})$.
\end{theorem}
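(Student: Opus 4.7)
I would bound the fragile complexity of a fixed element $x$ by distinguishing its two possible roles during the merge. At each recursive call the head of the current $A$ is the \emph{probe} that is exponentially searched into the current $B$, and the two remaining tails are swapped before the recursion. A first observation is that a surviving $x$ alternates sides every step, so $x$ plays the probe role at most once; this single event contributes $\Oh(\log n)$ comparisons on $x$ by the standard analysis of one exponential search into a sequence of length at most $n$. The remaining task is to bound the number of steps at which $x$ sits in the current $B$ and is \emph{visited} (compared against the probe), noting that each visit costs $x$ at most two comparisons, one from doubling and one as a binary-search midpoint.

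The heart of the argument is a position-halving estimate. I would first prove the geometric lemma: if $x$ sits at position $p$ of the current $B$ and is visited at that step, then $p \le 2\ell$, where $\ell$ denotes the number of $B$-elements output in that step. This follows because every position the search can touch lies in $\{1,2,4,\dots,2^k\}\cup[2^{k-1},2^k]$, where $k$ is the terminal doubling index, while the crossover found by the search forces $\ell \ge 2^{k-1}$. I would then track $x$'s position through the next two steps: after the visit at step $s$, a surviving $x$ moves to the new $A$ at position $p-\ell$; the head of that new $A$ is immediately consumed as the next probe, so at step $s+2$ (the next time $x$ occupies a $B$-side) $x$ sits at position $p-\ell-1$. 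Combined with $p \le 2\ell$, this yields $p_{i+1} \le p_i/2-1$ between consecutive visits of $x$, so after $r$ visits $p_r < n/2^{r-1}$, forcing $r = \Oh(\log n)$. Multiplying by the $\Oh(1)$ per-visit cost and adding the at most one probe step gives fragile complexity $\Oh(\log n)$ for $x$.

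The main obstacle I anticipate is the degenerate case $k=0$, in which doubling halts after its very first comparison because $b_1$ already exceeds the probe, so $\ell=0$ and the inequality $p\le 2\ell$ breaks. Here the only touched position is $1$, so $x=b_1$, and $x$ is pushed to position $1$ of the next $A$ and becomes the probe one step later; hence this regime can arise at most once in $x$'s lifetime and only as the terminal entry of its visit chain, adding only $\Oh(1)$ to the count. With this edge case absorbed, the halving chain remains valid all the way down to $p_r = 1$, and since the bound applies uniformly to every real element (the virtual $b_\top$ entries used for padding carry no real comparisons), the theorem follows.
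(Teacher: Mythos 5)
Your proof is correct and takes essentially the same route as the paper: both hinge on the observation that the number of $B$-elements output in a step is at least half of the largest position the search touches (since the crossover satisfies $\ell \ge 2^{k-1}$ while nothing beyond $b_{2^k}$ is inspected), from which a halving argument yields $\Oh(\log n)$ visits per element plus $\Oh(\log n)$ for the single probe step. You phrase this as a precise per-element position recurrence $p_{i+1} \le p_i/2 - 1$, whereas the paper argues via an informal charging scheme on the covered set; your explicit treatment of the $k=0$ degenerate case and of the probe-at-most-once claim are welcome tightenings of details the paper leaves implicit.
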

\begin{proof}
  Without loss of generality let $n$ be a power of two and consider a single exponential search finding the smallest $k$ with $a_1 < b_{2^k}$.
  The element $a_1$ is compared to all $\{ b_{2^i} \,|\, 1 \le i \le k \}$ during the doubling phase.
  We use an accounting argument to bound the \frag.
  Element $a_1$ takes part in every comparison and is charged with $k = \Oh(\log n)$.
  It is then charged $\Oh(\log n)$ comparisons during the binary search between $b_{2^{k-1}}$ and $b_{2^k}$.
  It is then moved to the output and not considered again.
  
  The search also potentially interacts with the $2^k$ elements $b_1, \ldots, b_{2^k}$ by either comparing them during the doubling phase, during the binary search or by skipping over them.
  We pessimistically charge each of these elements with one comparison.
  It then remains to show that no element takes part in more than $\Oh(\log n)$ exponential searches.
  
  Observe that all elements $b_1, \ldots, b_{2^{k-1}}$ are moved to the output and do not take part in any more comparisons.
  In the worst-case, the binary search proves that element $b_{2^{k-1}+1}$ and its successors are larger than $a_1$.
  Hence at most half of the elements covered by the exponential search are available for further comparisons.
  To maximize the charge, we recursively setup exponential search whose doubling phases ends in $b_{2^l}$ yielding a recursion depth of $\Oh(\log n)$.\footnote{%
    If the following searches were shorter they would artificially limit the recursion depth.
    If they were longer, too many elements are removed from consideration as only the binary search range can be charged again.%
  }
\end{proof}

\begin{restatable}{corollary}{corMergesortExponential}
  \label{cor:expmergesort}
  Applying Theorem~\ref{thm:exp_merging} to standard \textsc{MergeSort} with exponential merging yields a fragile complexity of $\Oh(\log^2 n)$ in the worst-case.
\end{restatable}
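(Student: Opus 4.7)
The plan is to combine the per-merge bound of Theorem~\ref{thm:exp_merging} with the well-known structure of the mergetree of standard \textsc{MergeSort}. First I would observe that standard \textsc{MergeSort} on $n$ elements builds a balanced mergetree of height $\lceil \log n\rceil$, where at level $i$ (counting from the leaves, so $i=1$ at the bottom and $i=\lceil \log n\rceil$ at the root) each merge combines two sorted sequences of length at most $2^{i-1}$.

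Next I would fix an arbitrary input element $x$ and bound the number of comparisons it participates in throughout the execution. The element $x$ takes part in exactly one merge at each level of the mergetree, namely the merge corresponding to the unique ancestor node at that level. By Theorem~\ref{thm:exp_merging}, a merge of two sequences of length at most $2^{i-1}$ using exponential merging has worst-case \frag $\Oh(\log 2^{i-1}) = \Oh(i)$ per element. Hence $x$ contributes at most $\Oh(i)$ comparisons at level $i$.

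Summing the contributions of $x$ over all levels of the mergetree yields
\begin{equation*}
  \sum_{i=1}^{\lceil \log n\rceil} \Oh(i) \;=\; \Oh\!\left(\log^2 n\right).
\end{equation*}
Since $x$ was arbitrary, this bounds the worst-case \frag of standard \textsc{MergeSort} with exponential merging by $\Oh(\log^2 n)$, establishing the corollary.

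There is no real obstacle here: the only thing that must be checked carefully is that Theorem~\ref{thm:exp_merging} is applied at each level with the correct input size, so that the per-level cost is $\Oh(i)$ rather than the crude $\Oh(\log n)$ one would get by uniformly invoking the top-level bound. Using the latter would yield the same $\Oh(\log^2 n)$ result, so even a loose application suffices.
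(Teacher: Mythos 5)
Your proof is correct and takes the natural, expected approach: charging each element for its one merge per level of the mergetree, bounding the cost at level $i$ by $\Oh(i)$ via Theorem~\ref{thm:exp_merging}, and summing to get $\Oh(\log^2 n)$. This mirrors exactly the level-by-level accounting used in the paper's lower-bound argument (Lemma~\ref{lem:balancedmergesort}), so it is the intended proof; as you note, even the looser per-level bound of $\Oh(\log n)$ times $\Oh(\log n)$ levels gives the same result.
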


\section{Constructing Binary Heaps}
\label{sec:heapconstruction}

\begin{theorem}\label{obs:heapconstruction}
  The \frag of the standard binary heap construction algorithm of Floyd~\cite{floyd:heap-64} is $\Oh(\log n)$.
\end{theorem}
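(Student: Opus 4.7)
Recall Floyd's bottom-up heap construction: for $i = \lfloor n/2 \rfloor$ down to $1$, perform a sift-down starting at position $i$. A sift-down from position $i$ walks down a root-to-leaf path in the subtree rooted at $i$, and at each level performs at most two comparisons: one between the two children of the current position, and one between the smaller child and the sifting element. In particular, the sift-down from $i$ has depth at most the height $h(i)$ of the subtree rooted at $i$, so the sifting element is involved in at most $h(i) \le \lceil\log n\rceil$ comparisons, while every other element touched by that sift participates in at most two comparisons (the sibling compare and possibly the compare with the sifting element).

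My plan is to fix an input element $e$ and bound its total participation over the whole execution by $\Oh(\log n)$ by tracking the trajectory of $e$ in the array. The key structural observation is that Floyd's loop processes positions in decreasing order, and a sift-down from position $j$ only touches positions in the subtree rooted at $j$. Consequently, if $e$ is currently located at position $q$, a sift-down can involve $e$ only if it starts at an ancestor of $q$. Moreover, whenever a sift-down interacts with $e$, either $e$ stays where it is or $e$ is bumped up exactly one level to the parent (because the sifting element swaps with $e$ and then continues downward). Hence, over the course of the algorithm, $e$ occupies a non-decreasing prefix (toward the root) of a single root-to-leaf path starting at its initial position~$p$.

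Combining these observations: the ancestors of all positions that $e$ ever occupies form a subset of the ancestors of~$p$, which has size at most $\lceil \log n\rceil$. Since each position is processed at most once by Floyd's outer loop, $e$ can be a non-sifting participant in at most $\lceil\log n\rceil$ distinct sift-downs, each contributing at most $2$ comparisons. I would also argue separately that $e$ can be the \emph{sifting} element in at most one sift-down: when $e$ is at some position $q$ and $q$ is processed (if ever), $e$ sifts down and subsequently resides at a descendant $q'\ge q$ of~$q$, which has already been processed (since $q' \ge q$ implies Floyd handled $q'$ before $q$); any later bump moves $e$ to $\mathrm{parent}(q')$, which is also already processed. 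Thus $e$ is the sifting element in at most one sift, contributing at most $\lceil \log n\rceil$ further comparisons. Summing: $f_e(n) \le \lceil\log n\rceil + 2\lceil\log n\rceil = \Oh(\log n)$.

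The only mildly delicate step is the uniqueness of $e$ being a sifting element, which hinges on the fact that every position $e$ ever visits has already been (or is currently being) processed by the time $e$ arrives there; otherwise one could imagine $e$ being bumped into an as-yet-unprocessed position and triggering another full-depth sift. Once this is nailed down via the observation that bumps move $e$ strictly upward along the ancestor chain of $p$ and that any position $\le p$ on that chain with index smaller than $p$ is processed strictly after $p$, the bound follows immediately.
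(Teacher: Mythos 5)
Your proof is correct and follows essentially the same route as the paper's: bound the $\Oh(\log n)$ cost of the single sift-down in which an element is the sifting element, and separately bound the number of times it is hit by other sift-downs by observing that each such hit must originate from an ancestor, of which there are only $\Oh(\log n)$, each processed once. Your treatment is somewhat more explicit than the paper's (in particular the careful argument that an element can be the sifting element at most once); the one small imprecision is the claim that the ancestors of all positions $e$ ever occupies lie among the ancestors of $p$ — after $e$ sifts down to a descendant $p'$ of $p$, that set includes positions strictly between $p$ and $p'$, but as you note those have already been processed, so the count of relevant sift-origins (ancestors with index $<p$) is still $\Oh(\log n)$ and the bound goes through.
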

\begin{proof}
  Consider first an element sifting down along a path in the tree: as the binary tree being heapified has height $\Oh(\log(n))$ and the element moving down is compared to one child per step, the cost to this element before it stops moving is $\Oh(\log(n))$.
  Consider now what may happen to an element $x$ in the tree as another element $y$ is sifting down: $x$ is only hit if $y$ is swapped with the parent of $x$ which implies that $y$ was an ancestor of $x$.
  As the height of the tree is $\Oh(\log(n))$, at most $\Oh(\log(n))$ elements reside on the path above $x$.
  Note that the $x$ may be moved up once as $y$ passes by it; this only lowers the number of elements above $x$.
  In total, any element in the heap is hit at most $\Oh(\log(n))$ times during heapify.
\end{proof}
We note that this \frag is optimal by Theorem~\ref{thm:min-det-fragile-theta}, since \textsc{Heap Construction} is stronger than \textsc{Minimum}.
Brodal and Pinotti~\cite{brodal:heap-98} showed how to construct a binary heap using a comparator network in $\Theta(n\log\log n)$ size and $\Oh(\log n)$ depth.
They also proved a matching lower bound on the size of the comparator network for this problem.
This, together with Observation~\ref{obs:heapconstruction} and the fact that Floyd's algorithm has \work $\Oh(n)$, gives a separation between work of fragility-optimal comparison-based algorithms and size of depth-optimal comparator networks for \textsc{Heap Construction}.
 
\section{Conclusions}
In this paper we introduced the notion of \frag of comparison-based algorithms and we argued that 
the concept is well-motivated because of connections both to real world situations (e.g., sporting events), as well as
other fundamental theoretical concepts (e.g., sorting networks).
We studied the \frag of some of the fundamental problems and revealed interesting behavior such as the
large gap between the performance of deterministic and randomized algorithms for finding the minimum.
We believe there are still plenty of interesting and fundamental problems left open.
Below, we briefly review a few of them.
\begin{itemize}
  \item The area of comparison-based algorithms is much larger than what we have studied. In particular,
    it would be interesting to study ``geometric orthogonal problems'' such as finding the maxima of a set of points, 
    detecting intersections between vertical and horizontal line segments, $kd$-trees, axis-aligned point location and so on.
    All of these problems can be solved using algorithms that simply compare the coordinates of points. 
\item Is it possible to avoid using expander graphs to obtain simple deterministic algorithms to find the median or to sort?
\item Is it possible to obtain a randomized algorithm that finds the median where the median suffers $O(1)$ comparisons
on average? Or alternatively, is it possible to prove a lower bound?
If one cannot show a $\omega(1)$ lower bound for the \frag of the median, can we show it for some other similar problem?
\end{itemize}

\label{sec:conc}

\bibliography{refs}

\end{document}